\newcommand{\congest}{${\mathsf{CONGEST}}$}
\newcommand{\dist}{\mbox{\rm dist}}
\newtheorem{theorem}{Theorem}[section]
\newtheorem{lemma}[theorem]{Lemma}
\newtheorem{meta-theorem}[theorem]{Meta-Theorem}
\newtheorem{claim}[theorem]{Claim}
\newtheorem{corollary}[theorem]{Corollary}
\newtheorem{observation}[theorem]{Observation}
\newtheorem{definition}[theorem]{Definition}
\newtheorem{fact}[theorem]{Fact}
\newcommand{\dilation}{\mbox{\tt d}}
\newcommand{\congestion}{\mbox{\tt c}}
\renewcommand{\paragraph}[1]{\vspace{0.15cm}\noindent {\bf #1}}
\def\FTBFS{\mbox{\tt FT-BFS}}
\def\FTMBFS{\mbox{\tt FT-MBFS}}
\def\LastE{\mbox{\tt LastE}}
\def\Sample{\mbox{\tt Sample}}
\def\Rel{\pi_{\sigma'}}
\def\SubRel{\pi_{\sigma}}
\def\BFS{\mbox{\tt BFS}}
\title{Distributed Constructions of Dual-Failure Fault-Tolerant Distance Preservers}
\date{}
\author{
Merav Parter\\
	\small Weizmann Institute \\
	\small merav.parter@weizmann.ac.il \thanks{Partially supported by the Israeli Science Foundation (ISF), grant no. 2084/18.}
}
\begin{document}
\date{}

\maketitle
\begin{abstract}
Fault tolerant distance preservers (spanners) are sparse subgraphs that preserve (approximate) distances between given pairs of vertices under edge or vertex failures.  So-far, these structures have been studied mainly from a
centralized viewpoint. Despite the fact fault tolerant preservers are mainly motivated by the error-prone nature of distributed networks, not much is known on the distributed computational aspects of these structures. 

In this paper, we present distributed algorithms for constructing fault tolerant distance preservers and $+2$ additive spanners that are resilient to at most \emph{two edge} faults. Prior to our work, the only non-trivial constructions known were for the \emph{single} fault and \emph{single source} setting by [Ghaffari and Parter SPAA'16]. 

Our key technical contribution is a distributed algorithm for computing distance preservers w.r.t. a subset $S$ of source vertices, resilient to two edge faults. The output structure contains a BFS tree $BFS(s,G \setminus \{e_1,e_2\})$ for every $s \in S$ and every $e_1,e_2 \in G$. The distributed construction of this structure is based on a delicate balance between the edge congestion (formed by running multiple BFS trees simultaneously) and the sparsity of the output subgraph. No sublinear-round algorithms for constructing these structures have been known before. 

\end{abstract}

\section{Introduction}
Fault tolerant distance preservers are sparse subgraphs that preserve distances between given pairs of nodes
under edge or vertex failures. In this paper, we present the first non-trivial distributed constructions of source-wise distance preservers and additive spanners that can handle \emph{two} edge failures. 
We start by providing some background on fault-tolerant preservers from a graph-theoretical perspective, and then provide the distributed algorithmic context. 

\paragraph{Fault-Tolerant Distance Preserves.} Distances preservers are sparse subgraphs that preserve the distances between a given pairs of nodes in an \emph{exact} manner. As distance preservers are often computed for distributed networks where parts can spontaneously fail, fault-tolerance is a desired requirement for these structures. For every bounded set of edge failures, the fault tolerant preservers are required to contain
\emph{replacement paths} around the faulted set. Formally, for a pair of vertices $s$ and $t$ and a subset of edge failures $F$, a replacement path $P(s,t,F)$ is an $s$-$t$ shortest path in the surviving graph $G \setminus F$. 
The efficient (centralized) computation of all replacement path distances for a given $s$-$t$ pair and a given source vertex $s$ has attracted a lot of attention since the 80's \cite{nardelli1997low,nardelli2003finding,emek2010near,roditty2012replacement,grandoni2012improved,weimann2013replacement,ChechikC19,AlonCC19,Check20}.
Most of these works focus on the single-failure case, and relatively little is known on the complexity of distance preserving computation under multiple edge faults.

Parter and Peleg \cite{ParterP16} introduced the notion of \FTBFS\ structures with respect to given source vertex $s$. Roughly speaking, an \FTBFS\ structure is a subgraph of the original graph that preserves all $\{s\}\times V$ distances under a single failure of an edge or a vertex. An \FTMBFS\ structure is collection of \FTBFS\ structures with respect to a collection of sources $S \subseteq V$.
For every $n$-vertex graph $G=(V,E)$ and a source set $S \subseteq V$, \cite{ParterP16} presented an algorithm for computing an \FTMBFS\ subgraph $H \subseteq G$ with $O(\sqrt{|S|}n^{3/2})$ edges. This was also shown to be existentially tight. Parter \cite{parter2015dual} presented a construction of dual-failure \FTBFS\ structures with $O(n^{5/3})$ edges. Gupta and Khan \cite{gupta2017multiple} extended this construction to multiple sources $S$ and provided a dual-failure \FTMBFS\ with $O(|S|^{1/3}n^{5/3})$ edges, which is also existentially tight \cite{ParterP16}.  For a general bound on the number of fault $f$, the state-of-the-art upper bound is $O(|S|^{1/2^f}n^{2-1/2^{f}})$ by Bodwin et al. \cite{bodwin2017preserving}, a lower bound of $\Omega(|S|^{1/(f+1)}n^{2-1/(f+1)})$ is known by \cite{parter2015dual}. Closing this gap is a major open problem. 

Fault-tolerant (FT) additive spanners are sparse subgraphs that preserve distance under failure with some additive stretch. While various upper bound constructions are known \cite{BraunschvigCPS15,bilo2015improved,parter2017vertex}, to this date no lower bounds are known for constant additive stretch. 
For example, one can compute $+2$ FT-additive spanners with $\widetilde{O}(n^{5/3})$ edges\footnote{The notation $\widetilde{O}$ hides poly-logarithmic terms in the number of vertices $n$.}, but no lower-bound of $n^{1/2+\epsilon}$ edges, for any $\epsilon>0$ is known.
%

\paragraph{Distributed Constructions.} Despite the fact that the key motivation for fault tolerant preservers comes from distributed networks, considerably less is known on their distributed constructions.
In this paper, we consider the standard $\mathsf{CONGEST}$  model of distributed computing \cite{Peleg:2000}. In this model, the network is abstracted as an $n$-node graph $G=(V, E)$, with one processor on each node. Initially, these processors only know their incident edges in the graph, and the algorithm proceeds in synchronous communication rounds over the graph $G=(V,E)$. In each round, nodes are allowed to exchange $O(\log n)$ bits with their neighbors and perform local computation. Throughout, the diameter of the graph $G=(V,E)$ is denoted by $D$. 

Ghaffari and Parter \cite{GhaffariP16} presented the first distributed constructions of fault tolerant distance preserving structures. For every $n$-vertex $D$-diameter graph $G=(V,E)$ and a source vertex $s \in V$, they gave an $\widetilde{O}(D)$-round algorithm for computing  an \FTBFS\ 
structure with respect to $s$. Both the size bound of the output structure and the round complexity of their algorithm are nearly optimal. An additional useful property of that algorithm is that it also computes the length of all the $\{s\}\times V$ replacement paths in the graph $G \setminus \{e\}$ for every $e \in G$. To the best of our knowledge, currently there are no non-trivial distributed constructions that support either multiple sources or more than a single fault. A natural extension of \cite{GhaffariP16} to a subset of sources $S$ (resp., to dual faults) might lead to a round complexity of $\Omega(|S| D)$ (resp., $\Omega(D^2)$ rounds). These bounds are inefficient for graphs with a large diameter, or for supporting a large number of sources.
Finally, while distributed constructions for additive spanners are known in the fault-free setting \cite{Pettie-Skeleton,censor2018distributed,ElkinM19}, there are no distributed constructions for the fault-tolerant setting. 
%
%
%
%

\subsection{Our Results}
We present constructions of 
FT preservers and additive spanners resilient to two edge failures with \emph{sublinear} round complexities. Throughout, we consider unweighted undirected $n$-vertex graph $G=(V,E)$ of diameter $D$.
 \\ \\
\paragraph{Fault Tolerant Distance Preservers.}
Given an unweighted and undirected $n$-vertex graph $G=(V,E)$ and integer $f \geq 1$, a subgraph $H \subseteq G$ is an $f$-\FTMBFS\ structure w.r.t. $S$ if:
$$\dist(s,t, H \setminus F)=\dist(s,t, G \setminus F), \mbox{~for every~} s \in S, t \in V, F \subseteq E \mbox{~and~} |F|\leq f~.$$ When $f=1$, we call $H$ an \FTMBFS\ structure, and when $f=2$ it is called a dual-failure \FTMBFS.

\begin{theorem}[Distributed FT-MBFS]\label{thm:FT-MBFS-dist}
There exists a randomized algorithm that given an $n$-vertex graph $G=(V,E)$, and a subset $S \subseteq V$ computes w.h.p. a subgraph $H \subseteq G$ such that $H$ is an \FTMBFS\ w.r.t. $S$ and $|E(H)|=O(\sqrt{|S|}n^{3/2})$ edges. The round complexity is $\widetilde{O}(D+\sqrt{n |S|})$.
\end{theorem}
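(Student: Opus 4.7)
The plan is to distribute the centralized Parter--Peleg construction of an $\FTMBFS$, instantiated with the threshold $\tau=\sqrt{n/|S|}$. Their analysis decomposes the output into two buckets: (i) \emph{short-detour} last-edges, arising from replacement paths whose detour stays within distance $\tau$ of the failed edge; and (ii) \emph{long-detour} last-edges, which by a standard hitting-set argument can be routed through a uniformly random pivot set $R\subseteq V$ of size $\widetilde{O}(n/\tau)=\widetilde{O}(\sqrt{n|S|})$, w.h.p.\ hitting every replacement path of length more than $\tau$. Parter--Peleg's counting shows that each bucket contributes $O(\sqrt{|S|}\,n^{3/2})$ edges, matching the stated size bound.

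The distributed implementation would then proceed in three phases, each completed within $\widetilde{O}(D+\sqrt{n|S|})$ rounds. \emph{Phase~A} computes the BFS tree $T_s$ rooted at each source $s\in S$ by pipelined multi-source BFS, in $\widetilde{O}(D+|S|)$ rounds (and since $|S|\le n$ we have $|S|\le\sqrt{n|S|}$). \emph{Phase~B} samples $R$ uniformly at random and computes the BFS tree from each $r\in R$, again by pipelined multi-source BFS, in $\widetilde{O}(D+|R|)=\widetilde{O}(D+\sqrt{n|S|})$ rounds; these trees certify the long-detour replacement paths, since for every triple $(s,t,e)$ with $\dist(s,t,G\setminus\{e\})>\tau$, w.h.p.\ some $r\in R$ lies on a surviving shortest $s$-$t$ path in $G\setminus\{e\}$, and the BFS tree from $r$ reveals the $r$-to-$t$ subpath. \emph{Phase~C} runs, for each source $s$, the single-source replacement-path subroutine of \cite{GhaffariP16} truncated so that only detours within depth $\tau$ around the failed edge are explored. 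With these $|S|$ truncated processes scheduled using random delays, the targeted round bound becomes $\widetilde{O}(|S|+\tau)=\widetilde{O}(\sqrt{n|S|})$. Taking the union of the edges reported in Phases~B and~C yields $H$; correctness follows from the Parter--Peleg decomposition applied separately to short and long detours.

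The delicate step is Phase~C: each source $s$ must run an FT-BFS-style exploration, and a naive schedule would incur $\widetilde{\Omega}(|S|\cdot\tau)$ rounds, off by an extra factor of $\sqrt{|S|}$. The main technical work would be to bound the total edge-congestion of the combined schedule of all $|S|$ truncated processes by $\widetilde{O}(\sqrt{n|S|})$, exploiting both the depth-$\tau$ truncation and Parter--Peleg's per-source counting bound on the number of last-edges used per source; given such a congestion bound, a random-delay argument in the style of the Leighton--Maggs--Rao / Ghaffari low-congestion scheduling framework then yields the claimed round complexity. A secondary ingredient is the exchange of pivot identities between Phases~B and~C, so that each source learns which of its heavy subtrees is covered by which $r\in R$; this information can be aggregated via $\widetilde{O}(D)$ rounds of convergecast on the $T_s$ trees, comfortably within budget.
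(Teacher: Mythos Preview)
Your high-level plan---random pivots $R$ for long paths plus a truncated FT-BFS per source for short ones---matches the paper's. A small arithmetic slip first: with $\tau=\sqrt{n/|S|}$ you already have $|S|\cdot\tau=\sqrt{n|S|}$, so the ``naive'' $\widetilde{O}(|S|\cdot\tau)$ bound you flag is on target; there is no extra $\sqrt{|S|}$ factor to shave. Relatedly, the congestion bound for Phase~C is more direct than you suggest and does not use the Parter--Peleg last-edge count. The paper's truncation rule is: forward the token $\BFS(s,G\setminus\{e\})$ from $v$ to a neighbor $u$ only if $e$ lies among the last $O(\tau)$ edges of $\pi(s,u)$ (the neighbor's ``relevant edge-list''). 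Since each vertex has at most $O(|S|\,\tau)$ relevant $(s,e)$ pairs, every directed edge carries $O(|S|\,\tau)=O(\sqrt{n|S|})$ tokens in total, and random delays in a window of that size finish the schedule.

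The genuine gap is correctness. Your Phase~B claim that ``the BFS tree from $r$ reveals the $r$-to-$t$ subpath'' silently assumes $e\notin\pi(r,t)$ in $G$, which is not automatic: if $e\in\pi_\tau(s,t)$ is close to $t$ while $|P(s,t,e)|$ is long, a pivot $r$ on the $\tau/2$-suffix of $P(s,t,e)$ may well have $e\in\pi(r,t)$. The paper resolves this via a \emph{sensitive-detour} case analysis. Let $SD(s,t,e)$ be the maximal suffix of $P(s,t,e)$ lying inside the subtree $T_s(y)$ below $e=(x,y)$. If $|SD(s,t,e)|\ge\tau$, some $r\in R$ lands on this detour, and because both $r$ and $t$ lie in $T_s(y)$, any $r$--$t$ shortest path in $G$ provably avoids $e$; hence the plain BFS tree from $r$ supplies the last edge. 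If $|SD(s,t,e)|<\tau$, one checks that every vertex $w'$ on the detour has $e$ within the last $3\tau$ edges of $\pi(s,w')$, so the truncation rule never blocks the token along $SD(s,t,e)$ and it reaches $t$. Your proposal does not separate these two regimes and gives no argument for why the truncated BFS never drops a needed token; this is precisely the delicate point the paper isolates. (Finally, the ``exchange of pivot identities'' you sketch is unnecessary: all BFS trees rooted in $R$ are simply added to $H$.)
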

This improves upon the $O(|S|D)$ bound implied by the FT-BFS algorithm of \cite{GhaffariP16}, for $D\geq \sqrt{n/|S|}$. 
The size of the \FTMBFS\ subgraph $H$ is existentially optimal (up to a logarithmic factor). 

We also consider the dual-failure setting. In the centralized literature it has been widely noted that the dual-failure case is already considerably more involved compared to the single fault setting. Indeed, there has been no prior distributed constructions of distance preserving subgraphs that are resilient to two faults. 
We provide a simplified centralized algorithm for the dual failure setting which serves the basis for our distributed construction:
\begin{theorem}[Distributed Dual-Failure FT-MBFS]\label{thm:FT-MBFS-dist-dual}
There exists a randomized algorithm that given an $n$-vertex graph $G=(V,E)$, and a subset $S \subseteq V$ computes w.h.p. a subgraph $H \subseteq G$ such that $H$ is a dual-failure \FTMBFS\ w.r.t. $S$ and contains 
$O(|S|^{1/8}\cdot n^{15/8})$ edges. The round complexity is $\widetilde{O}(D+n^{7/8}|S|^{1/8}+|S|^{5/4}n^{3/4})$.
\end{theorem}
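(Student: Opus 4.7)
} The plan is to distribute a centralized dual-failure \FTMBFS\ construction, using the single-failure Theorem \ref{thm:FT-MBFS-dist} as a subroutine, random vertex sampling to shorten the ``dangerous'' replacement paths, and Ghaffari-style random-delay scheduling to pipeline many BFS computations over $G$ with bounded congestion.

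First, I would compute the single-failure \FTMBFS\ subgraph $H_1$ w.r.t.\ $S$ using Theorem \ref{thm:FT-MBFS-dist} in $\widetilde{O}(D+\sqrt{n|S|})$ rounds; this yields $|E(H_1)|=O(\sqrt{|S|}\, n^{3/2})$ edges and provides, implicitly, each tree $BFS(s,G\setminus\{e\})$. The structural observation behind any dual-failure preserver is that a replacement path $P(s,t,\{e_1,e_2\})$ can be split as the concatenation of a prefix of $BFS(s,G\setminus\{e_1\})$ and a detour around $e_2$, so the only ``interesting'' pairs $(e_1,e_2)$ are those for which $e_1$ lies on $BFS(s)$ and $e_2$ lies on the tree $BFS(s,G\setminus\{e_1\})\subseteq H_1$. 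To bound the number of long detours, I would sample a pivot set $V'\subseteq V$ at rate $\widetilde{\Theta}(1/L)$ so that every replacement path of length $\geq L$ hits $V'$ w.h.p., and compute single-failure \FTMBFS\ structures rooted at $V'$ as well. This reduces the long-path case to a two-hop stitching between $(s,v')$ and $(v',t)$ single-failure paths, paying only $|V'|\cdot\sqrt{n|V'|}$ in additional rounds and $\sqrt{|V'|}\,n^{3/2}$ in additional edges.

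For the ``short'' case (path length at most $L$), I would invoke, for each source $s\in S$ and each critical pair $(e_1,e_2)$ identified from $H_1$, a fresh distributed computation of $BFS(s, G\setminus\{e_1,e_2\})$. The number of such pairs per source is bounded by the $s$-restricted portion of $H_1$, so running the corresponding BFS computations in parallel, each deferred by an independent uniform delay in an interval of length $\widetilde O(L)$ and then simulated via pipelining à la \cite{GhaffariP16}, gives total dilation $\widetilde O(D+L)$ and per-edge congestion bounded by the number of (source, pair) tasks that traverse the edge. A standard application of the Leighton--Maggs--Rao / Ghaffari scheduling lemma then converts a (congestion + dilation) budget into a round complexity. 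Choosing $L$, the pivot rate $1/L$, and the pair-truncation threshold so that the size contribution $\sqrt{|S|}\,n^{3/2}$ of the single-failure part, the pivot contribution $\sqrt{|V'|}\,n^{3/2}$, and the $|S|\cdot(\text{pairs/source})$ short-detour contribution are all $O(|S|^{1/8}n^{15/8})$, while the corresponding congestions match the target rounds $\widetilde{O}(D+n^{7/8}|S|^{1/8}+|S|^{5/4}n^{3/4})$, pins down the parameters of the construction.

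The hard part will be the dual-failure BFS step itself. Unlike a plain BFS, computing $BFS(s,G\setminus\{e_1,e_2\})$ distributively requires the endpoints of $e_1,e_2$ to recognize that they must suppress those edges for the specific task token traversing them; with many tasks running concurrently this forces a task-indexed pipeline rather than a single scalar flood. Bounding the resulting congestion requires arguing that for any edge $e$ of $G$, only a small number of the critical pairs $(e_1,e_2)$ produce a BFS tree whose computation passes through $e$, which in turn rests on a charging argument along the centralized dual-failure structure theorem of \cite{parter2015dual,gupta2017multiple}. Marrying this charging argument to the random-delay framework so that the realized congestion is at most $\widetilde O(n^{7/8}|S|^{1/8}+|S|^{5/4}n^{3/4})$ with high probability, simultaneously with the $O(|S|^{1/8}n^{15/8})$ size bound, is the main technical obstacle.
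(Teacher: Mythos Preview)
Your high-level plan (sample pivots, build single-failure \FTMBFS\ structures from $S$ and the pivots, then run truncated dual-failure BFS computations with random delays) matches the paper's architecture. But two concrete pieces are missing, and the one you flag as ``the hard part'' is not handled the way you suggest.

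\textbf{First gap: per-vertex relevance sets, not a global charging argument.} You propose to bound congestion by arguing, via the centralized dual-failure structure theorem, that few BFS trees $\BFS(s,G\setminus\{e_1,e_2\})$ pass through any given edge. The paper does \emph{not} do this, and it is unclear that such a charging argument exists: the centralized theorems of \cite{parter2015dual,gupta2017multiple} bound the number of distinct \emph{last edges} of replacement paths, not the number of fault pairs whose BFS trees traverse a fixed edge. Instead, the paper defines for every vertex $t$ a local set $Q_t=\{(s,e_1,e_2): e_1\in P_\sigma(s,t,e_2)\text{ and }e_2\in P_\sigma(s,t,e_1)\}$ with $\sigma=(n/|S|)^{1/4}$, and a BFS token $\BFS(s,G\setminus\{e_1,e_2\})$ is only forwarded to $t$ if $(s,e_1,e_2)\in Q_t$. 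This gives $|Q_t|=O(|S|\sigma^2)$ by construction, which is what yields both the edge bound $|S|^{1/8}n^{15/8}$ and the congestion term $|S|^{5/4}n^{3/4}$. The correctness burden then shifts to showing that whenever $(s,e_1,e_2)\notin Q'_t$ (a slightly smaller set), or whenever the ``sensitive-detour'' of $P(s,t,\{e_1,e_2\})$ is long, the last edge already lies in $\FTMBFS(R\cup S)$ for the sampled pivots $R$.

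\textbf{Second gap: the (I1,I2) preprocessing.} For the truncated BFS to work, a vertex $q$ that is \emph{not} sensitive to $(s,e_1,e_2)$ must \emph{initiate} the token at phase $\tau_{s,e_1,e_2}+\dist(s,q,G\setminus\{e_1,e_2\})$, and every vertex must be able to compute its own $Q_t$. Both require each $t$ to know, for every $s\in S$ and every near fault $e\in\pi_{2\sigma}(s,t)$, the distance $\dist(s,t,G\setminus\{e\})$ and the $\sigma$-length suffix $P_\sigma(s,t,e)$. Your plan says nothing about this. The paper devotes a separate subsection (Lemma~\ref{lem:short-dist}) to it, using a \emph{second} threshold $\sigma_1=(n/|S|)^{5/8}$: ``easy'' paths (short sensitive-detour) are handled by rerunning the single-failure truncated BFS with parameter $8\sigma_1$ but carrying $\sigma_2$-sized path segments; ``hard'' paths are handled by broadcasting pairwise distances and LCA labels among a sample $R$ of size $\widetilde{O}(n/\sigma_1)$. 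This preprocessing is what produces the $n^{7/8}|S|^{1/8}$ term (from $\sqrt{|R|n}$) and part of the $|S|^{5/4}n^{3/4}$ term, and without it the dual-failure BFS step cannot even start.
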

We note that the size of our subgraph is suboptimal, as there exist (centralized) constructions \cite{gupta2017multiple,parter2015dual} that compute dual-failure \FTMBFS\ subgraphs with $O(|S|^{1/3}n^{5/3})$ edges. These constructions, however, are inherently sequential, and it is unclear how to efficiently implement them in the distributed setting. 
Specifically, in the \congest\ model, a naive simultaneous computation of multiple BFS trees $\BFS(s, G \setminus \{e_1,e_2\})$ for every $s \in S$ and $e_1,e_2 \in G$  might result in a very large \emph{congestion} over the graph edges. To reduce this congestion, one needs to balance the edge congestion and the sparsity of the output subgraph. These two opposing forces lead to suboptimal constructions w.r.t. the size, but with the benefit of obtaining a sub-linear round-complexity.
We also note that our algorithms solve the subgraph problem rather than the distance computation problem. That is, in contrast to the \FTBFS\ algorithm of \cite{GhaffariP16}, we compute only the FT preserving subgraph but not necessarily the FT distances.  
\\ \\
\paragraph{Fault Tolerant Additive Spanners.}
We employ the distributed construction of \FTMBFS\ structures to provide the first non-trivial constructions of fault tolerant $+2$ additive spanners. These structures are defined as follows. 
Given an unweighted undirected $n$-vertex graph $G=(V,E)$ and integer $f \geq 1$ and a stretch parameter $\beta$, a subgraph $H \subseteq G$ is an $+\beta$ $f$-FT additive spanner if:
$$\dist(s,t, H \setminus F)\leq \dist(s,t, G \setminus F)+\beta, \mbox{~for every~} s,t \in V, F \subseteq E \mbox{~and~} |F|\leq f~.$$
When $f=1$, we call $H$ an $+\beta$ FT-additive spanner, and when $f=2$ it is called a $+\beta$ dual-failure FT-additive spanner. 
By using Thm. \ref{thm:FT-MBFS-dist} and Thm. \ref{thm:FT-MBFS-dist-dual} respectively, we get:
\begin{corollary}[+2 Additive Spanner, Single Fault]\label{cor:two-add-onef}
For every $n$-vertex graph $G=(V,E)$, there exists a randomized algorithm that w.h.p. computes $+2$ FT-additive spanner $H \subseteq G$ with $\widetilde{O}(n^{5/3})$ edges in $\widetilde{O}(D+n^{5/6})$ rounds.
\end{corollary}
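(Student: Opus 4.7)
The plan is to combine Theorem~\ref{thm:FT-MBFS-dist} with the classical recipe for fault-tolerant $+2$ additive spanners of Braunschvig et al.~\cite{BraunschvigCPS15}: a degree-threshold sparsification of low-degree neighborhoods, plus an \FTMBFS\ from a random hitting set that covers the high-degree vertices. First I set the degree threshold $\Delta=n^{2/3}$; in one round each vertex $v$ learns its degree and, if $\deg(v)<\Delta$, marks all of its incident edges for inclusion in $H$, contributing at most $n\Delta=n^{5/3}$ edges. Then I sample a set $S\subseteq V$ by taking each vertex independently with probability $p=\Theta(\log n/\Delta)$; a Chernoff bound yields $|S|=\widetilde{O}(n^{1/3})$ and, simultaneously, that every vertex of degree $\geq \Delta$ has at least two neighbors in $S$, both w.h.p. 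Finally I invoke Theorem~\ref{thm:FT-MBFS-dist} on $S$ to add a single-fault \FTMBFS\ structure, contributing $O(\sqrt{|S|}\,n^{3/2})=\widetilde{O}(n^{5/3})$ edges in $\widetilde{O}(D+\sqrt{n|S|})$ rounds, which lies within the $\widetilde{O}(D+n^{5/6})$ budget claimed by the corollary and dominates the overall running time.

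For correctness I fix any $s,t\in V$ and any edge-failure set $F$ with $|F|\leq 1$, and let $P$ be a shortest $s$-$t$ path in $G\setminus F$. If every internal vertex of $P$ has degree $<\Delta$, then every edge of $P$ lies in $H$ by the low-degree step and the exact distance is preserved. Otherwise, pick any internal $v$ of $P$ with $\deg(v)\geq \Delta$; since $v$ has at least two $S$-neighbors w.h.p.\ and $|F|\leq 1$, some $w\in S\cap N(v)$ satisfies $(v,w)\notin F$. The \FTMBFS\ property from Theorem~\ref{thm:FT-MBFS-dist} gives $\dist(x,w,H\setminus F)=\dist(x,w,G\setminus F)$ for every $x\in V$, so by the triangle inequality
$$\dist(s,t,H\setminus F)\leq \dist(s,w,H\setminus F)+\dist(w,t,H\setminus F)\leq (\dist(s,v,G\setminus F)+1)+(1+\dist(v,t,G\setminus F))=\dist(s,t,G\setminus F)+2,$$
where the last equality uses that $v$ lies on the shortest path $P$ in $G\setminus F$.

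The main obstacle is the round complexity rather than the correctness: a naive implementation that runs a separate \FTBFS\ from each $s\in S$ would cost $\widetilde{O}(|S|\cdot D)$ rounds, which is too slow for moderate-diameter graphs. This is precisely what Theorem~\ref{thm:FT-MBFS-dist} circumvents by amortizing congestion across the $|S|$ sources. The only care required beyond this black-box reduction is to oversample $S$ by a logarithmic factor so that every high-degree vertex retains at least two $S$-neighbors; this guarantee is what allows the $+2$ stretch argument to survive the single failed edge incident to $v$.
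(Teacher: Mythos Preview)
Your proposal is correct and follows the same approach as the paper: degree threshold $\Delta=n^{2/3}$, a random hitting set $S$ of size $\widetilde{O}(n^{1/3})$ covering the high-degree vertices, and Theorem~\ref{thm:FT-MBFS-dist} for the \FTMBFS\ part; the paper's stretch argument differs only cosmetically (it takes $u$ to be the \emph{first} vertex with a missing edge and uses the low-degree prefix exactly, whereas you route through $w\in S$ on both sides). One small wording gap: your case split on \emph{internal} vertices of $P$ misses the case where $P$ is the single edge $(s,t)$ with both endpoints of degree $\geq\Delta$; simply allow $v\in\{s,t\}$ as well, and your two-sided triangle inequality through $w\in S$ still gives the $+2$ bound since the \FTMBFS\ property already guarantees $\dist(s,w,H\setminus F)=\dist(s,w,G\setminus F)$.
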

The size of the $+2$ FT-additive spanner matches the state-of-the-art bound. For the dual-failure setting, our bounds are suboptimal due to the suboptimality of the dual-failure \FTMBFS\ structures.
\begin{corollary}[+2 Additive Spanner, Two Faults]\label{cor:two-add-twof}
For every $n$-vertex graph $G=(V,E)$, there exists a randomized algorithm that w.h.p. computes a $+2$ dual-failure FT-additive spanner with $\widetilde{O}(n^{17/9})$ edges within $\widetilde{O}(D+n^{8/9})$ rounds. 
\end{corollary}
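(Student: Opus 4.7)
The plan is to follow the standard sample-and-pivot paradigm for $+2$ additive spanners, using the dual-failure \FTMBFS\ construction of Theorem~\ref{thm:FT-MBFS-dist-dual} as a black box, with the degree threshold tuned to balance the two contributions to $|E(H)|$. I set $d = n^{8/9}$. First, I would include in $H$ every edge of $G$ incident to a vertex of degree at most $d$, contributing $O(nd)=O(n^{17/9})$ edges in $O(1)$ rounds (each endpoint can decide locally). Second, I would sample a set $S\subseteq V$ by taking each vertex independently with probability $p=c\log n/d$ for a sufficiently large constant $c$, so that by Chernoff and a union bound, w.h.p. $|S|=\widetilde{O}(n^{1/9})$ and every vertex $v$ with $\deg_G(v)>d$ satisfies $|N_G(v)\cap S|\geq 3 = |F|+1$. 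Third, I would invoke Theorem~\ref{thm:FT-MBFS-dist-dual} on $S$ and add the resulting dual-failure \FTMBFS\ subgraph to $H$.

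Correctness would follow from a case split on the shortest $s$-$t$ path $P$ in $G\setminus F$ for any $F$ with $|F|\leq 2$. If every vertex of $P$ has $G$-degree at most $d$, then every edge of $P$ already lies in $H$ by the first step, so $\dist(s,t,H\setminus F)=\dist(s,t,G\setminus F)$. Otherwise I pick $v\in P$ with $\deg_G(v)>d$; the sampling guarantee yields a pivot $p\in N_G(v)\cap S$ with $(p,v)\notin F$. Theorem~\ref{thm:FT-MBFS-dist-dual} ensures $\dist(p,u,H\setminus F)=\dist(p,u,G\setminus F)$ for all $u\in V$, and the surviving edge $(p,v)\in G\setminus F$ yields $\dist(p,s,G\setminus F)\leq \dist(v,s,G\setminus F)+1$ and analogously for $t$. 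Summing via the triangle inequality and using $v\in P$ gives $\dist(s,t,H\setminus F)\leq \dist(s,t,G\setminus F)+2$.

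For the quantitative bounds, Step~1 contributes $O(n^{17/9})$ edges, while Step~3 contributes $\widetilde{O}(|S|^{1/8}n^{15/8})=\widetilde{O}(n^{1/72+135/72})=\widetilde{O}(n^{17/9})$, which is precisely why $d=n^{8/9}$ is the correct threshold. The round complexity is dominated by Step~3: plugging $|S|=\widetilde{O}(n^{1/9})$ into $\widetilde{O}(D+n^{7/8}|S|^{1/8}+|S|^{5/4}n^{3/4})$ yields $n^{7/8+1/72}=n^{8/9}$ and $n^{5/36+27/36}=n^{8/9}$, for a total of $\widetilde{O}(D+n^{8/9})$.

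The proof is not really hard once Theorem~\ref{thm:FT-MBFS-dist-dual} is available; the one subtle point, rather than a true obstacle, is the oversampling by a $\log n$ factor, which is what guarantees that every high-degree vertex on the replacement path retains a sampled neighbor whose connecting edge survives both faults. Without this, the FT-MBFS distance from the pivot could incur an unbounded additive overhead instead of $+1$ per side, and the $+2$ argument would collapse.
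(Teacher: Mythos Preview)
Your proposal is correct and follows essentially the same approach as the paper: sample $S$ of size $\widetilde{O}(n^{1/9})$, add edges incident to vertices of degree at most $n^{8/9}$, and invoke Theorem~\ref{thm:FT-MBFS-dist-dual} on $S$. The one minor (and harmless) difference is that the paper explicitly adds three edges from each high-degree vertex to neighbors in $S$ and routes via the \emph{first} high-degree vertex on the replacement path (using the low-degree prefix in $H$ and the pivot edge in $H$), whereas you route through \emph{any} high-degree vertex and apply the \FTMBFS\ from the pivot to \emph{both} endpoints, so you never need the pivot edge to lie in $H$ at all---it only needs to survive in $G\setminus F$, which it does since $|N_G(v)\cap S|\geq 3>|F|$.
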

No sublinear round algorithms for +2 FT-additive spanners with $o(n^2)$-edges were known before.

\paragraph{The High-Level Approach.} We provide the high-level ideas required to compute 
\FTMBFS\ structures w.r.t. a collection of source nodes $S$. This construction is later on used for computing FT-additive spanners and dual-failure \FTMBFS\ structures. By definition, an \FTMBFS\ structure for $S$ is required to contain a BFS tree w.r.t. each $s \in S$ in the graph $G \setminus \{e\}$ for every $e \in G$. Upon using any consistent tie-breaking of shortest-path distances, the union of all these trees contain $O(|S|n^{3/2})$ edges \cite{ParterP16}.
Our goal is then to compute all these BFS trees efficiently in the \congest\ model.
For the single source case, the key observation made in \cite{GhaffariP16} is that for every vertex $t$, it is sufficient to send only $O(D)$ BFS tokens throughout the computation: one token for every edge $e$ on the shortest $s$-$t$ path $\pi(s,t)$. The reason is that a failing of an edge $e \notin \pi(s,t)$ does not effect the $s$-$t$ distance. Since every edge $(u,t)$ is required to pass through only $|\pi(s,t)|=O(D)$ BFS tokens, using the random-delay approach, all these trees could be computed in dilation+congestion$=\widetilde{O}(D)$ rounds, w.h.p. Extending this idea to multiple sources, ultimately leads to a round complexity of $\Omega(|S|D)$. Indeed a-priori it is unclear how to break this barrier, as for every $s$ and $e \in \pi(s,t)$, the $s$-$t$ distance in $G \setminus \{e\}$ might be different, forcing $t$ to receive the BFS token from $\Omega(|S|D)$ BFS algorithms. 

The key idea is to define for every vertex $t$ a smaller set of \emph{relevant pairs} $(s,e)$ from which it is allowed to receive the BFS tokens. This set $\{(s,e)\}$ is defined by including only edges $e$ that are sufficiently \emph{close} to $t$ on its $\pi(s,t)$ path for every $s$. The main technical issue that arises with this idea is the inconsistency in the definition of relevant pairs between nodes on a given replacement path $P(s,t,e)$. In particular, there might be cases where $(s,e)$ is relevant for $t$ but it is not in the relevant set of some vertex $w$ on the $P(s,t,e)$ path. In such a case, $w$ might block the propagation the BFS token $BFS(s,G \setminus \{e\})$ (as $(s,e)$ is not in its relevant set) which would prevent $t$ from receiving it.
These technical issues become more severe in the dual failure setting, due to a more delicate interaction between the dual-failure replacement paths. In the very high level to mitigate this technicality, we add to the output structure a collection of an (FT-) BFS trees w.r.t. a \emph{randomly} sampled set of nodes. This edge set would compensate (in a non-trivial manner) for the lost tokens of the truncated BFS constructions.

\subsection{Preliminaries}
Given an unweighted $n$-vertex graph $G=(V,E)$, for an $s,t \in V$ and $e \in G$, the replacement path $P(s,t,e)$ is an $s$-$t$ shortest path in $G\setminus \{e\}$. Throughout, we assume that the shortest-path ties are broken in a consistent manner using the vertex IDs. That is, in our BFS computations, the shortest path ties are broken by always preferring vertices of lower IDs.

For a given $p \in [0,1]$, let $\Sample(V,p)$ be a subset of vertices obtained by sampling each vertex in $V$ independently with probability $p$. Let $\BFS(s,G)$ be a BFS tree rooted at $s$ in $G$. 

The (unique) shortest-path between any pair $x,y$ in $G$ is denoted by $\pi(x,y,G)$, when the graph $G$ is clear from the context, we may omit it and write $\pi(x,y)$. Let $N(u)$ be the neighbors of $u$ in $G$.
Given a tree $T$ and $u,v \in V(T)$, let $\pi(u,v,T)$ be the tree path between $u$ and $v$.  For a given integer parameter $\sigma$, let $\pi_{\sigma}(u,v)$ denote the set of last $\min\{\sigma, |\pi(u,v)|\}$ edges (closest to $v$) on the path $\pi(u,v)$. For an edge $e=(x,y)$ and a subgraph $G'\subseteq G$, let $\dist(e,t,G')=\min\{\dist(x,t,G'), \dist(y,t,G')\}$. For an $s$-$t$ path $P$, let $\LastE(P)$ be the last edge of the path (incident to $t$). For $a,b \in P$, let $P[a,b]$ be the sub-path segment between $a$ and $b$ in $P$. 
\begin{definition}\label{def:ft-trees}
For a given source vertex $s$, a subgraph $H$ is an \emph{FT-BFS structure} with respect to $s$ if 
$\dist(s,t,H \setminus \{e\})=\dist(s,t,G \setminus \{e\})$ for every $t \in V$ and $e \in E$. In the same manner, for a given subset $S \subseteq V$, a subgraph $H$ is an \emph{multi-source FT-BFS structure} with respect to $S$ if 
$\dist(s,t,H \setminus \{e\})=\dist(s,t,G \setminus \{e\})$ for every $s,t \in S \times V$ and $e \in E$.
\end{definition}

\begin{fact}\cite{ParterP16}\label{obs:last-edge}
For every $n$-vertex graph $G=(V,E)$, and a subset $S \subseteq V$, let 
$$H=\bigcup_{s,t,e \in S\times V \times E}\{\LastE(P(s,t,e))\}.$$
Then, $H$ is an \FTMBFS\ structure w.r.t. $S$ and $|E(H)|=O(\sqrt{|S|}\cdot n^{3/2})$. This edge bound is tight.
\end{fact}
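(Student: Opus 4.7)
The plan is to handle correctness and the size bound separately.

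\emph{Correctness.} I would prove by induction on $d = \dist(s,t,G\setminus\{e\})$ that $H\setminus\{e\}$ contains an $s$-$t$ path of length exactly $d$, for every $s\in S$, $t\in V$, and $e\in E$. The base case $t=s$ is vacuous. For $d\geq 1$, set $(u,t) = \LastE(P(s,t,e))$; by construction $(u,t)\in H$, and since $P(s,t,e)\subseteq G\setminus\{e\}$ one has $(u,t)\neq e$. The prefix $P(s,t,e)[s,u]$ is a shortest $s$-$u$ path in $G\setminus\{e\}$ of length $d-1$, so by induction $H\setminus\{e\}$ contains such a path, and appending $(u,t)$ yields the claim.

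\emph{Size bound.} I would follow the Parter--Peleg branching argument. For each $s\in S$ and $t\in V$, let $k_{s,t}$ denote the number of distinct edges $\LastE(P(s,t,e))$ as $e$ varies over $E$, so that $|E(H)| \leq \sum_{s,t} k_{s,t}$. Enumerate the distinct last-edge replacement paths $P_1,\ldots,P_{k_{s,t}}$ in non-decreasing order of length $d_1\leq\cdots\leq d_{k_{s,t}}$. Under the consistent ID-based tie-breaking of the preliminaries, each $P_i$ must contribute at least one internal vertex not appearing on any shorter $P_j$ — otherwise tie-breaking would force $P_i$ to coincide with some earlier $P_j$ and they would share the last edge, contradicting distinctness. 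Charging these unique vertices against $V$ yields the per-pair inequality $\sum_i d_i = O(n)$, which combined with $d_i \geq i$ (again from the divergence structure) gives $k_{s,t} = O(\sqrt{n})$, and thus a per-source bound $\sum_t k_{s,t} = O(n^{3/2})$.

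The main obstacle is sharpening the naive per-source total of $O(|S|\cdot n^{3/2})$ to $O(\sqrt{|S|}\cdot n^{3/2})$. The plan is a global aggregation of the divergence accounting across sources: show that each vertex $v\in V$ can be charged at most $O(|S|)$ times in the aggregate sum $\sum_{s,t,i} d_{s,t,i}$, giving $O(|S|\cdot n)$ overall. Cauchy--Schwarz applied against the inequalities $d_{s,t,i}\geq i$ then delivers $\sum_{s,t} k_{s,t} = O(\sqrt{|S|}\cdot n^{3/2})$. The delicate step, and where I would concentrate the most effort, is verifying the per-vertex $O(|S|)$ charging across distinct sources in the aggregated sum; this relies crucially on the global consistency of the ID-based tie-breaking rule, and is essentially what breaks down in the dual-failure regime, explaining why the analogous dual-failure bound is strictly weaker.
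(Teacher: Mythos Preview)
The paper states this as a fact cited from \cite{ParterP16} without proof, so there is nothing in-paper to compare against. Your correctness induction is the standard one and is fine.

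The size argument has a real gap. The assertion ``charging these unique vertices against $V$ yields $\sum_i d_i = O(n)$'' does not follow from the premise that each $P_i$ contributes one vertex not on the shorter paths: that premise bounds only the \emph{number} of paths, $k_{s,t}\le n$, not the sum of their lengths. The replacement paths for a fixed $(s,t)$ are far from vertex-disjoint --- they all share a prefix of $\pi(s,t)$, and their detours can overlap --- so $\sum_i d_i$ can easily be $\omega(n)$. Without this inequality the derivation of $k_{s,t}=O(\sqrt n)$ collapses, and with it the per-pair scaffolding you build the multi-source bound on. The actual Parter--Peleg argument is different in kind: for each new last edge $(u_i,t)$ one isolates the segment of the corresponding detour lying inside the subtree $T_s(y_i)$ (where $e_i=(x_i,y_i)$ is the failing edge), shows that the $i$-th such segment (ordering the failing edges by depth along $\pi(s,t)$) has length at least $i$, and --- this is the crux --- proves these subtree segments are pairwise \emph{vertex-disjoint}. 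That disjointness is the packing ingredient that turns ``length $\ge i$'' into a genuine bound; ``one new vertex per path'' is too weak to play this role. Your multi-source aggregation plan is broadly right in spirit, but it needs this disjointness underneath it rather than the unsupported per-pair length sum.
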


\paragraph{The random delay technique.}
Throughout, we make an extensive use of the random delay approach of \cite{leighton1994packet,Ghaffari15}. Specifically, we use the following theorem:
\begin{theorem}[{\cite[Theorem 1.3]{Ghaffari15}}]\label{thm:delay}
Let $G$ be a graph and let $A_1,\ldots,A_m$ be $m$ distributed algorithms in 
the $\mathsf{CONGEST}$ model,  where each algorithm takes at most $\dilation$ rounds, and where for each 
edge of $G$, at most $\congestion$ messages need to go through it, in total 
over all these algorithms. Then, there is a randomized distributed
algorithm (using only private randomness) that, with high probability, 
produces 
a schedule that runs all the algorithms in $O(\congestion +\dilation \cdot 
\log 
n)$ rounds, after $O(\dilation \log^2 n)$ rounds of pre-computation.
\end{theorem}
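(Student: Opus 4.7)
The plan is to follow the classical Leighton--Maggs--Rao random-delay analysis to obtain a schedule of the desired length, and then describe a distributed procedure that produces and executes it. Set $R=\Theta(\congestion/\log n)$. Each algorithm $A_i$ independently samples a delay $d_i$ uniformly in $\{0,1,\ldots,R-1\}$; conceptually, $A_i$ starts at (logical) round $d_i$ and runs for at most $\dilation$ steps. The envisioned schedule therefore has $R+\dilation$ logical rounds, which I will be able to implement in $O((R+\dilation)\cdot\log n)=O(\congestion+\dilation\log n)$ actual rounds once I show that no edge is overloaded in any logical round.

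The heart of the argument is bounding the per-edge, per-round load. Fix an edge $e$ and a logical round $t$; let $c_{i,e}$ be the number of messages algorithm $A_i$ sends across $e$, so $\sum_i c_{i,e}\le \congestion$. For a uniform random delay $d_i$, the probability that $A_i$ uses $e$ at round $t$ is at most $c_{i,e}/R$, giving expected load at $(e,t)$ of at most $\congestion/R=O(\log n)$. The contributions from distinct algorithms are independent (delays are chosen independently), so a Chernoff bound yields that the actual load is $O(\log n)$ with probability $1-n^{-\Theta(1)}$, and a union bound over the $\poly(n)$ pairs $(e,t)$ makes the property hold simultaneously w.h.p. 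Slowing time down by a factor $\Theta(\log n)$---so that every logical round becomes a block of $\Theta(\log n)$ real rounds---lets each edge transmit its at most $O(\log n)$ scheduled messages one per real round, yielding the claimed $O(\congestion+\dilation\log n)$ execution length.

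The distributed implementation uses only private randomness. Each algorithm $A_i$ has an initiator node $v_i$ that samples $d_i$; to execute $A_i$ on schedule, every node participating in $A_i$ must learn $d_i$. The key observation is that all of $A_i$'s participants lie within hop distance $\dilation$ of $v_i$ along edges that $A_i$ itself uses, so $v_i$ can broadcast $d_i$ during the first phase of $A_i$'s own execution. Checking that the random choice produced a good schedule reduces to a collection of local bad events (``load on $(e,t)$ exceeds $c\log n$''), each depending only on the delays of algorithms touching $e$; a Moser--Tardos/Chung--Pettie--Su style distributed LLL resampling then corrects any violations. Each resampling iteration requires $O(\dilation)$ rounds for local detection and dissemination, inflated by an $O(\log n)$ factor to overcome the congestion incurred by the detection itself, and $O(\log n)$ iterations suffice for LLL convergence w.h.p., giving the advertised $O(\dilation\log^2 n)$ pre-computation bound.

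The main obstacle I anticipate is decoupling the clean Chernoff/LMR analysis from the distributed scheduling overhead: naively disseminating delays or verifying violations would itself create congestion and break the very budget we are trying to respect. The elegance of Ghaffari's theorem lies in showing that violation detection and resampling can themselves be scheduled within the $O(\dilation\log^2 n)$ pre-computation window---essentially bootstrapping via a looser schedule for the coordination layer before the tight schedule kicks in. Carefully handling this bootstrap, and verifying that the LLL dependency structure respects the locality (an algorithm's bad events involve only the $O(\dilation)$-neighborhoods of its own edges), is where the bookkeeping is most delicate.
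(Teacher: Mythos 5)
First, a point of reference: the paper does not prove Theorem~\ref{thm:delay} at all --- it is imported as a black box from \cite{Ghaffari15} (Theorem~1.3 there). So I am judging your argument against the known proof rather than against anything in this paper. Your core analysis is the standard one and is correct as far as it goes: independent uniform delays in a window of length $\Theta(\congestion/\log n)$, expected load $O(\log n)$ on each (edge, logical round) pair, Chernoff plus a union bound over the $\poly(n)$ such pairs, and a $\Theta(\log n)$ time dilation, giving a schedule of length $O(\congestion+\dilation\log n)$ w.h.p. This is exactly the Leighton--Maggs--Rao-style argument \cite{leighton1994packet} that underlies the theorem.

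There are, however, two genuine problems with the rest. First, the Lov\'asz Local Lemma has no role in this statement. The LLL is what \cite{Ghaffari15} uses for the purely \emph{existential} $O(\congestion+\dilation)$ schedule, where a union bound is too weak; the theorem you are proving only claims a w.h.p.\ guarantee with the extra $\log n$ factor on $\dilation$, and your own Chernoff-plus-union-bound argument already delivers that --- there are no violations left to ``correct.'' Attributing the $O(\dilation\log^2 n)$ pre-computation to Moser--Tardos-style resampling therefore misidentifies where that cost comes from (and a distributed LLL in \congest\ over the bad-event dependency graph would be a substantial result in its own right, not a routine subroutine). Second, and more seriously, the step you wave through --- getting each privately sampled delay $d_i$ to every participant of $A_i$ --- is the actual technical content of the theorem beyond LMR. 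Broadcasting $d_i$ ``during the first phase of $A_i$'s own execution'' is circular: a single edge may be used by up to $\congestion$ distinct algorithms, so the dissemination of all the delays is itself a scheduling instance with congestion up to $\congestion$ and cannot be absorbed into one phase; moreover, a node that must act in $A_i$ before receiving any $A_i$-message needs $d_i$ ahead of time. Resolving this bootstrap (Ghaffari does it via a coarser grouping of the algorithms and a staged dissemination along the algorithms' own communication structures, which is precisely where the $O(\dilation\log^2 n)$ bound arises, and which also handles the fact that nodes need not know $\congestion$ in advance) is missing from your write-up, so the distributed-implementation half of the theorem is not established.
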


\section{~Simplified Meta-Algorithms} 
We start by presenting simplified (centralized) constructions of FT preserving subgraphs. These constructions serve as a more convenient starting point for the distributed constructions described in the next sections.

\paragraph{\FTMBFS\ Structures.}
Let $T_S=\bigcup_{s \in S}T_s$ where $T_s=\BFS(s,G)$. 
The \FTMBFS\ subgraph $H$ is given by the union of three subgraphs: $T_S$, and the subgraphs $H_1$ and $H_2$ defined by:
$$H_1=\{\LastE(P(s,t,e)) ~\mid~ s,t \in S \times V, e \in \pi_\sigma(s,t)\} \mbox{~where~} \sigma=\sqrt{n/|S|},$$
and $H_2=\bigcup_{r \in R} \BFS(r, G) \mbox{~~where~~} R=\Sample(V,10\log n/\sigma)~.$
\begin{lemma}
$E(H)=O(\sqrt{|S|}\cdot n^{3/2}\log n)$ and $H$ is an \FTMBFS\ with respect to $S$.
\end{lemma}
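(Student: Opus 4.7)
The plan is to bound the size of $H$ and then verify the \FTMBFS\ property by induction. For the size, each BFS tree in $T_S$ contributes at most $n-1$ edges, so $|T_S|\le |S|(n-1)$. For $H_1$, each pair $(s,t)\in S\times V$ contributes at most $|\pi_\sigma(s,t)|\le\sigma$ last-edges, so $|H_1|\le|S|\cdot n\cdot\sigma=\sqrt{|S|}\cdot n^{3/2}$ using $\sigma=\sqrt{n/|S|}$. A standard Chernoff bound yields $|R|=O((n\log n)/\sigma)=O(\sqrt{|S|n}\log n)$ w.h.p., so $|H_2|\le|R|\cdot(n-1)=O(\sqrt{|S|}\cdot n^{3/2}\log n)$. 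Summing gives the claimed $|E(H)|=O(\sqrt{|S|}\cdot n^{3/2}\log n)$.

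For correctness, I fix $s\in S$, $t\in V$, $e\in E$ and prove $\dist(s,t,H\setminus\{e\})=\dist(s,t,G\setminus\{e\})$ by induction on $\dist(s,t,G\setminus\{e\})$; the $\geq$ direction is immediate from $H\subseteq G$. If $e\notin\pi(s,t,G)$, the canonical path $\pi(s,t,G)\subseteq T_s\subseteq H$ avoids $e$ and gives the desired length. Otherwise I write $e=(a,b)$ with $a$ closer to $s$ on $\pi(s,t,G)$ and let $v$ be the first vertex of $P(s,t,e)$ lying on $\pi(s,t,G)[b,t]$; by consistent tie-breaking, the suffix $P(s,t,e)[v,t]$ coincides with $\pi(s,t,G)[v,t]\subseteq T_s\subseteq H$, a sub-path that avoids $e$.

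The inductive step splits on whether $v\neq t$ or $v=t$. If $v\neq t$, the pair $(s,v,e)$ has strictly smaller distance in $G\setminus\{e\}$, so the inductive hypothesis combined with $\pi(v,t,G)\subseteq T_s\subseteq H$ closes the case. If $v=t$ and $e\in\pi_\sigma(s,t)$, then $\LastE(P(s,t,e))\in H_1$ by construction, and I recurse on the pair $(s,u,e)$ for $u$ the penultimate vertex of $P(s,t,e)$. The remaining and most delicate subcase is $v=t$ with $e\notin\pi_\sigma(s,t)$: here $\dist(b,t,G)>\sigma$, hence $|P(s,t,e)|\ge\dist(s,t,G)>\sigma$ as well. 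A Chernoff bound with a union bound over the $\poly(n)$ relevant triples guarantees w.h.p.\ a sampled $r\in R$ among the last $\sigma$ vertices (distinct from $t$) of $P(s,t,e)$. For such $r$, the suffix $P(s,t,e)[r,t]$ has length at most $\sigma$, while any $r$-to-$t$ path traversing $e$ has length at least $1+\dist(b,t,G)>\sigma+1$; thus $\pi(r,t,G)$ avoids $e$ and lies in $T_r\subseteq H_2$, of length $\dist(r,t,G\setminus\{e\})$. Applying the inductive hypothesis to $(s,r,e)$ — whose distance in $G\setminus\{e\}$ is strictly smaller since $r\neq t$ — and splicing with this $r$-to-$t$ sub-path yields a path in $H\setminus\{e\}$ of length $\dist(s,r,G\setminus\{e\})+\dist(r,t,G\setminus\{e\})=\dist(s,t,G\setminus\{e\})$.

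The main obstacle I anticipate is precisely this last subcase. Naive sampling on $\pi(s,t,G)[b,t]$ fails because the chosen sampled vertex may lie off $P(s,t,e)$, causing the splice to be longer than $\dist(s,t,G\setminus\{e\})$ by an unwanted detour. The resolution leverages two quantitative facts derived from the single assumption $e\notin\pi_\sigma(s,t)$: the replacement path is long enough ($|P(s,t,e)|>\sigma$) to justify sampling directly on $P(s,t,e)$, and the $b$-to-$t$ gap is long enough ($\dist(b,t,G)>\sigma$) to force the canonical $r$-to-$t$ path in $G$ to avoid $e$ whenever $r$ is within $\sigma$ edges of $t$ on the replacement path. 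Balancing these through $\sigma=\sqrt{n/|S|}$ is exactly the trade-off the construction exploits between $|H_1|$ and $|H_2|$.
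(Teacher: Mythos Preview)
Your proof is correct, and the size analysis matches the paper's exactly. The correctness argument, however, takes a different route from the paper's.

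The paper invokes Fact~\ref{obs:last-edge} as a black box: to show $H$ is an \FTMBFS\ it suffices to show that $\LastE(P(s,t,e))\in H$ for every $s\in S$, $t\in V$, $e\in E$. It then gives a short two-case argument: if $e\in\pi_\sigma(s,t)$ (or $e\notin\pi(s,t)$) the last edge lies in $H_1\cup T_S$; if $e\in\pi(s,t)\setminus\pi_\sigma(s,t)$, a sampled vertex $r\in R$ on the $(\sigma/2)$-length suffix of $P(s,t,e)$ satisfies $P(s,t,e)[r,t]=\pi(r,t)\subseteq T_r$, so the last edge lies in $H_2$. No induction is needed once Fact~\ref{obs:last-edge} is available.

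You instead establish $\dist(s,t,H\setminus\{e\})=\dist(s,t,G\setminus\{e\})$ directly by induction on the replacement distance, effectively re-deriving the last-edge criterion inline. This buys self-containment (you never cite Fact~\ref{obs:last-edge}) at the cost of an extra structural case (the $v\neq t$ branch, where you peel off a tree suffix in $T_s$) and more bookkeeping. The key technical steps---sampling $r\in R$ on the tail of $P(s,t,e)$ and arguing via the gap $\dist(e,t,G)\geq\sigma$ that no $r$--$t$ shortest path can traverse $e$, hence $\pi(r,t,G)\subseteq T_r$ avoids $e$---are identical in both arguments. One minor point: your strict inequalities $\dist(b,t,G)>\sigma$ and ``length $>\sigma+1$'' should read $\geq\sigma$ and $\geq\sigma+1$; this does not affect the conclusion since you only need the path through $e$ to exceed $\dist(r,t,G)\leq\sigma$.
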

\begin{proof}
The size analysis follows by noting that $|T_S|=O(|S|\cdot n)$, and in addition each vertex $t$ adds at most $\sigma=\sqrt{n/|S|}$ edges to $H_1$ for every source $s \in S$. Thus, $|H_1|=O(\sigma \cdot |S|n)=O(n\sqrt{n |S|})$.
Turning to $H_2$, by the Chernoff bound, w.h.p., $|R|=O(n\log n/\sigma)$ and thus $|H_2|=O(\sqrt{|S|}\cdot n^{3/2}\log n)$.

We next show that $H$ is an \FTMBFS\ with respect to $S$. By Fact \ref{obs:last-edge}, it is sufficient to show that $H$ contains the last edge of the replacement path $P(s,t,e)$ for every $s,t,e\in S \times V \times E$.
Fix a source $s \in S$ and a vertex $t\in V$. If $\dist(s,t,G)\leq \sigma$, then $H_1 \cup T_S$ contain the last edge of $P(s,t,e)$ for every edge $e$. This is because $\LastE(P(s,t,e))$ is added to $H_1$ for every $e \in \pi(s,t,T_s)$ and $P(s,t,e)=\pi(s,t,T_s)$ for every $e \notin \pi(s,t,T_s)$. 

Thus, assume that $\dist(s,t,G)\geq \sigma$ and specifically, consider an edge $e\in \pi(s,t,T_s)\setminus \pi_{\sigma}(s,t,T_s)$. Since $\dist(s,t,G)\geq \sigma$, it also holds that $|P(s,t,e)|\geq \sigma$. Thus by the Chernoff bound, w.h.p. there is at least one vertex in $R$ that lies in the $(\sigma/2)$-length suffix of $P(s,t,e)$. That is, w.h.p., there is a vertex $r \in V( P(s,t,e))\cap R$ such that $\dist(r,t, P(s,t,e))\leq \sigma/2$. We next claim that there is \emph{no} $r$-$t$ shortest path in $G$ that contains the failing edge $e$. This holds as $\dist(r,t,G)\leq \dist(r,t,G \setminus \{e\})\leq \sigma/2$, but by the definition of the edge $e$, 
$\dist(e,t,G)\geq \sigma$. By the uniqueness of the replacement paths, we have that $P(s,t,e)[r,t]=\pi(r,t,T_r)$ where $T_r=\BFS(r,G)$, and thus $\LastE(P(s,t,e)) \in H_2$. The claim follows. 
\end{proof}

\paragraph{Dual-Failure \FTMBFS\ Structures.} We next describe a simplified centralized construction of dual-failure 
\FTMBFS\ structures, this serves the basis for the distributed implementation. As we will see later on, computing these structures in the distributed setting is considerably more involved. To balance between edge congestion and sparsity of the structure, the final size of the \FTMBFS\ structures computed in distributed setting is larger compared to the centralized setting. For every $s, t \in V$ and $e_1,e_2 \in E$, let $P(s,t,\{e_1,e_2\})$ be the $s$-$t$ shortest path in $G \setminus \{e_1,e_2\}$.
\begin{fact}\cite{parter2015dual}\label{obs:dual-S}
For every $n$-vertex graph $G=(V,E)$, and a subset $S \subseteq V$, let 
$$H=\bigcup_{s,t \in S\times V, e_1,e_2 \in E}\{\LastE(P(s,t,\{e_1,e_2\}))\}.$$
Then $H$ is a dual-failure \FTMBFS\ w.r.t. $S$. 
\end{fact}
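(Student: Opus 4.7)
The plan is to prove by induction on $d = \dist(s,t, G \setminus F)$ that for every $s \in S$, $t \in V$, and $F \subseteq E$ with $|F| \leq 2$, one has $\dist(s,t, H \setminus F) = d$. Equivalently, by the standard optimal-substructure property of shortest paths, it suffices to show that the canonical replacement path $P(s,t,F)$ is entirely contained in $H$. The case $d = \infty$ (disconnection) is trivial since $H \subseteq G$, and $d = 0$ (i.e.\ $s=t$) is vacuous.

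For the inductive step, let $e' = (v,t) = \LastE(P(s,t,F))$. First I would verify that $e' \in H \setminus F$. When $|F|=2$, writing $F = \{e_1,e_2\}$, the edge $\LastE(P(s,t,\{e_1,e_2\}))$ is explicitly placed in $H$ by construction; for $|F|=1$ set $e_1 = e_2$ and invoke the same definition; for $|F|=0$ pick any $e_1,e_2 \notin \pi(s,t,G)$ so that $P(s,t,\{e_1,e_2\}) = \pi(s,t,G)$, again placing its last edge in $H$. That $e' \notin F$ is immediate, since $P(s,t,F)$ avoids $F$. Next, by the consistent tie-breaking assumption stated in the Preliminaries, the prefix $P(s,t,F)[s,v]$ is itself the unique shortest $s$-$v$ path in $G \setminus F$, so $\dist(s,v,G \setminus F) = d-1$. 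The induction hypothesis applied to $(s,v)$ with the same failure set $F$ gives $\dist(s,v, H \setminus F) = d-1$, and concatenating with the edge $e' \in H \setminus F$ yields $\dist(s,t, H \setminus F) \leq d$. The reverse inequality holds because $H \subseteq G$, closing the induction.

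The main subtlety — rather than a real obstacle — is the tie-breaking property: prefixes of canonical replacement paths must themselves be canonical replacement paths for the induction to compose along $P(s,t,F)$. This is the same mechanism that underlies Fact~\ref{obs:last-edge} in the single-failure setting, and under the ID-based tie-breaking rule it transfers to the dual-failure regime without modification. Once this is in hand, the argument is essentially a direct lift of the single-failure proof, and the fact follows.
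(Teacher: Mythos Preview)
The paper does not prove this statement; it is recorded as a \emph{Fact} with a citation to \cite{parter2015dual} and used as a black box (e.g.\ in the proof of Lemma~\ref{lem:simple-dual}). Your argument---induction on $\dist(s,t,G\setminus F)$, using that prefixes of the canonical path $P(s,t,F)$ are themselves canonical replacement paths under the ID-based tie-breaking---is the standard and correct way to establish it, and is exactly the mechanism behind Fact~\ref{obs:last-edge} in the single-failure case.

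One small wrinkle worth flagging: your handling of $|F|=0$ (``pick any $e_1,e_2\notin\pi(s,t,G)$'') silently assumes $E\setminus\pi(s,t)\neq\emptyset$. When $G$ is itself a simple $s$--$t$ path with $S=\{s\}$, no such edges exist, and in fact the set $H$ as literally defined omits the final edge of $\pi(s,t)$, so the statement is false in that degenerate case. This is a known boundary issue for ``last-edge'' characterizations and is typically absorbed by including $T_S$ in the structure---which the paper does in every algorithm that invokes the fact---so it does not affect the intended application. Your reduction of $|F|=1$ via $e_1=e_2$ is fine.
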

Let $S$ be the set of sources. Let $R_1$ be a random sample of $O(\sqrt{n|S|}\log n)$ vertices, and let $R_2$ be a random sample of $O(|S|^{1/4}\cdot n^{3/4}\log n)$ vertices. 
Let $H_1=\bigcup_{r \in R_1} \FTMBFS(r,G)$ and $H_2=\bigcup_{r \in  R_2} \BFS(r,G)$. The dual-failure $\FTBFS$ structure w.r.t. $S$ denoted by $H$ contains the subgraphs $T_S$, $H_1$, $H_2$ as well as the a subset of last edges of certain replacement paths. Let $\sigma_1=\sqrt{n/|S|}$ and $\sigma_2=(n/|S|)^{1/4}$. For every path $P$ and integer $\sigma$, let $P_{\sigma}$ be the $\sigma$-length suffix of $P$ (when $\sigma\geq |P|$, then $P_{\sigma}$ is simply $P$). Every vertex $t$, define the edge set $E_t$ as
$$E_t= \bigcup_{s \in S}\bigcup_{e_1 \in \pi_{\sigma_1}(s,t)} \bigcup_{e_2 \in P_{\sigma_2}(s,t,e_1)}\{\LastE(P(s,t,\{e_1,e_2\})\}~.$$
The final dual-failure $\FTBFS$ structure is given by:
$$H= H_1 \cup H_2 \cup \bigcup_t E_t~.$$
\begin{lemma}\label{lem:simple-dual}
W.h.p., $H$ is a dual-failure \FTMBFS\ w.r.t. $S$, and $|E(H)|=\widetilde{O}(|S|^{1/4}\cdot n^{7/4})$ edges.
\end{lemma}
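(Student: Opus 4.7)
The lemma combines a size bound and the dual-failure \FTMBFS\ property, and I would prove each separately. For the size, I would sum the contributions of the four components of $H$. The BFS trees $T_S$ contribute $O(|S|n)$ edges. A Chernoff bound gives $|R_1|=O(\sqrt{n|S|}\log n)$ w.h.p., and reading $\bigcup_{r\in R_1}\FTMBFS(r,G)$ as the single-failure \FTMBFS\ subgraph with source set $R_1$, Fact~\ref{obs:last-edge} yields $|H_1|=O(\sqrt{|R_1|}\cdot n^{3/2})=\widetilde{O}(|S|^{1/4}n^{7/4})$. The BFS trees $H_2$ contribute $|R_2|\cdot O(n)=\widetilde{O}(|S|^{1/4}n^{7/4})$. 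Finally, for each $t$ the three unions defining $E_t$ range over at most $|S|\sigma_1\sigma_2$ triples, so $|\bigcup_t E_t|\leq n|S|\sigma_1\sigma_2=|S|^{1/4}n^{7/4}$; a union bound over all probabilistic events gives the claimed total of $\widetilde{O}(|S|^{1/4}n^{7/4})$.

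For correctness, Fact~\ref{obs:dual-S} reduces the goal to showing $\LastE(P(s,t,\{e_1,e_2\}))\in H$ for every tuple $(s,t,e_1,e_2)$. Fixing such a tuple, setting $P:=P(s,t,\{e_1,e_2\})$, and relabeling so that $e_1\in \pi_{\sigma_1}(s,t,T_s)$ whenever at least one failure lies in this suffix, I would split into three cases in analogy with the single-failure simplified proof. \emph{Case~(i)}: neither failure lies on $\pi(s,t,T_s)$, so $P=\pi(s,t,T_s)\subseteq T_S$. \emph{Case~(ii)}: $e_1\in \pi_{\sigma_1}(s,t,T_s)$ and $e_2\in P_{\sigma_2}(s,t,e_1)$, so $\LastE(P)\in E_t$ by construction. \emph{Case~(iii)}: the remaining regime. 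Here I would apply Chernoff either to $R_1$ on the $\sigma_1$-suffix of $P$ or to $R_2$ on the $\sigma_2$-suffix, depending on which length scale applies, to obtain a sampled vertex $r$ near $t$ on $P$; then using consistent tie-breaking together with the single-failure FT property of $H_1$ (respectively the fault-free BFS trees of $H_2$), I would identify $P[r,t]$ either with a single-fault replacement path from $r$ or with $\pi(r,t,T_r)$, from which $\LastE(P)\in H_1\cup H_2$ follows.

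The hard part will be Case~(iii). The single-failure analogue works because any $e\in\pi(s,t,T_s)\setminus\pi_\sigma(s,t,T_s)$ has $\dist(e,t,G)\geq \sigma$, which rules out short detours through $e$; in our setting $e_1\in\pi_{\sigma_1}(s,t,T_s)$ may lie arbitrarily close to $t$ in $G$, so the analogous distance bound is unavailable. The resolution is structural: define $v^\ast$ to be the vertex on $P$ closest to $t$ whose $P$-suffix $P[v^\ast,t]$ is a shortest $v^\ast$-$t$ path in $G$, and argue $|P[v^\ast,t]|\geq \sigma_2$ using the case hypothesis $e_2\notin P_{\sigma_2}(s,t,e_1)$---any shorter clean suffix would force one of the failures to appear on a short $G$-path to $t$, which through careful bookkeeping either contradicts $e_2\notin P_{\sigma_2}(s,t,e_1)$ or pushes us back into Case~(ii). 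Once this length bound is established, the Chernoff-sampled $r_2\in R_2$ lies inside $P[v^\ast,t]$, uniqueness of shortest paths under consistent tie-breaking gives $P[r_2,t]=\pi(r_2,t,T_{r_2})$, and $\LastE(P)\in H_2$. I expect verifying this length bound across all sub-configurations of $e_1,e_2$ to be where the proof requires the most care.
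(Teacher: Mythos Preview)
Your size bound and your case decomposition track the paper's proof closely. The paper also reduces correctness, via Fact~\ref{obs:dual-S}, to two residual situations: (i) $e_1\in\pi(s,t)\setminus\pi_{\sigma_1}(s,t)$, handled by sampling $r\in R_1$ on the $\sigma_1/2$-suffix of $P$, using $\dist(e_1,t,G)\geq\sigma_1>\dist(r,t,G)$ to conclude $e_1\notin\pi(r,t)$, and invoking the single-fault structure $H_1$; and (ii) $e_1\in\pi_{\sigma_1}(s,t)$ with $e_2\in P(s,t,e_1)\setminus P_{\sigma_2}(s,t,e_1)$, handled by sampling $r'\in R_2$ on the $\sigma_2/2$-suffix and invoking the BFS trees in $H_2$. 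You identify exactly this split in your second paragraph.

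Where you diverge is your final paragraph. The paper does \emph{not} introduce any auxiliary vertex $v^\ast$ or attempt to locate a long ``clean'' suffix; it simply asserts that $\dist(r',t,G)\leq\sigma_2/2$ forces $e_1,e_2\notin\pi(r',t,G)$, and then identifies $P[r',t]$ with $\pi(r',t)$ directly. Your worry that $e_1$ could sit close to $t$ is not unreasonable---the paper's justification at this step is terse---but your proposed remedy is both heavier and shakier than what the paper does. As written, your definition of $v^\ast$ (``closest to $t$ with $P[v^\ast,t]$ a $G$-shortest path'') is backwards: that condition is trivially met by $t$ itself, so presumably you mean \emph{farthest} from $t$. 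More importantly, the asserted bound $|P[v^\ast,t]|\geq\sigma_2$ is not supported by the hand-wave that follows; the hypothesis $e_2\notin P_{\sigma_2}(s,t,e_1)$ controls the position of $e_2$ along $P(s,t,e_1)$, not along $P$, and does not by itself preclude a short $G$-shortest suffix of $P$. If you want a cleaner handle on sub-case~(ii), the argument used later in the paper (Lemma~\ref{lem:caseone}) samples on the suffix of $P(s,t,e_1)$ rather than of $P$, which excludes $e_2$ by position and yields $P(r',t,\{e_1,e_2\})=P(r',t,e_1)$ without needing to reason about $e_1$ on $\pi(r',t)$---though that route requires an FT-BFS from $r'$, not merely a BFS tree.
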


\begin{proof}
By the definition of the $E_t$ sets, it remains to show that $H$ contains the last edge of an replacement path $P(s,t,\{e_1,e_2\})$ such that either (i) $e_1 \in \pi(s,t) \setminus \pi_{\sigma_1}(s,t)$ or (ii) $e_1 \in \pi_{\sigma_1}(s,t)$ but $e_2 \in P(s,t,e_1) \setminus P_{\sigma_2}(s,t,e_1)$. We begin with (i). Since $e_1 \in \pi(s,t) \setminus \pi_{\sigma_1}(s,t)$, we have that $|P(s,t,e_1)|\geq \sqrt{n/|S|}$. Since we sample each vertex into $R_1$ with probability of $10\log n\cdot \sqrt{|S|/n}$, w.h.p. the $\sigma_1/2$-length suffix of $P(s,t,\{e_1,e_2\})$ contains a vertex, say $r$, in $R_1$. Since $\dist(r,t,G)\leq \sigma_1/2$, we have that $e_1 \notin \pi(r,t,G)$, and by the uniqueness of the shortest paths, we have that $P(s,t,\{e_1,e_2\})=P(s,t,\{e_1,e_2\})[s,r]\circ P(r,t,\{e_2\})$. Since $H_2$ contains the FT-BFS w.r.t. $r$, 
it contains the path $P(r,t,\{e_2\})$ and thus $\LastE(P(s,t,\{e_1,e_2\}))$ is in $H$.
We proceed with (ii). Since $e_2 \in P(s,t,e_1) \setminus P_{\sigma_2}(s,t,e_1)$, we have that $|P(s,t,\{e_1,e_2\})|\geq \sigma_2$. Since we sample each vertex into $R_2$ with probability of $10\log n/\sigma_2$, w.h.p. the $\sigma_2/2$-length suffix of $P(s,t,\{e_1,e_2\})$ contains a vertex, say $r'$, in $R_2$. Since $\dist(r,t,G)\leq \sigma_2/2$, we have that $e_1,e_2 \notin \pi(r,t,G)$. By the uniqueness of the shortest paths, we have that 
$P(s,t,\{e_1,e_2\})=P(s,t,\{e_1,e_2\})[s,r]\circ \pi(r,t,G)$. The claim follows as $H_1$ contains the BFS tree rooted at $r$, and concluding that $\LastE(P(s,t,\{e_1,e_2\}))$ is in $H$.
The size bound follows by noting that $|E(H_1)|=O(|\sqrt{|R_1|}n^{3/2})$ and $|E(H_2)|=O(|R_2|n)$. In addition, since each vertex $t$ adds the last edges of $O(|S|\cdot (n/|S|)^{3/4})$ replacement paths, we get that $|E_t|=O(|S|^{1/4}n^{3/4})$. The lemma follows.
\end{proof}

\paragraph{Comparison to Bodwin et al. \cite{bodwin2017preserving}.} A simplified algorithm for computing sparse \FTMBFS\ structures (of suboptimal size) has been also provided by \cite{bodwin2017preserving}. Their algorithm iterates over the vertices where for every vertex $t$ the algorithm defines a small set of edges incident to $t$ that should be added to the output subgraph $H$. For every \emph{vertex} $t$, the algorithm reduces the task of computing \FTMBFS\ structure with respect to $S$ sources and supporting $f$ faults\footnote{The task is to pick the edges of $t$ that should be added to such a structure to provide resilience against $f$ faults.} into the  computation of an \FTMBFS\ structure to support $S'$ sources and $f-1$ faults, where $|S'|=O(\sqrt{|S|n})$. The main limitation in implementing this algorithm in the distributed setting is that for each vertex $t$ the algorithm defines a \emph{distinct} set of sources. For $f=1$ for example, our simplified algorithm computes BFS trees w.r.t. a subset of sources $S'$. In contrast, in the algorithm of \cite{bodwin2017preserving}, a BFS tree is computed w.r.t. a distinct set of sources $S_t$ for every vertex $t$, the union of all these $S_t$ sets might be very large (leading to a large round complexity).

\section{Distributed Construction of $\FTMBFS$ Structures}
In this section we prove Thm. \ref{thm:FT-MBFS-dist} and present our main algorithm for computing sparse $\FTMBFS$ structures with respect to $S$ sources. This structure becomes useful both for the construction of FT-additive spanners, and for the computation of the dual-failure FT preservers. 

\subsection{The algorithm}
Set $\sigma=\lceil \sqrt{n/|S|} \rceil$ and $\sigma'=3\sigma$. The algorithm has two main steps. In the first step, a subset $R$ of $O(n\log n/\sigma)$ vertices is uniformly sampled, and a BFS trees $T_s=\BFS(s,G)$ is computed for every vertex $s \in S \cup R$. Let $T_S=\bigcup_{s \in S}T_s$ and  $T_R=\bigcup_{r \in R}T_r$. All the edges of $T_S \cup T_R$ are added to the output subgraph $H$, by their corresponding endpoints.

In the second step, the algorithm computes a special subset of replacement paths, and the last edges of these replacement paths are added to $H$. To define this subset, we need the following definition.
For every $s,t \in S \times V$, each vertex $t$ defines a set of \emph{relevant edge-list} $\pi_{\sigma'}(s,t)$ that consists of the last $\sigma'$ edges of its $\pi(s,t)$ paths. It also defines a shorter prefix $\pi_{\sigma}(s,t)$ that contains the last $\sigma$ edges of this path. 

The algorithm first lets each vertex $t$ learn its relevant edge-list $\Rel(s,t)$ for every $s \in S$. This can be done within $O(|S|\cdot\sigma'+D)=O(\sqrt{n |S|}+D)$ rounds by applying a simple pipeline strategy. 
From now on, the algorithm divides the time into phases of $\ell=O(\log n)$ rounds.
Every $\BFS(s,G\setminus \{e\})$ algorithm then starts in phase $\tau_{s,e}$, where $\tau_{s,e}$ is a random variable with a uniform distribution in
$[1, \sigma' \cdot |S|]$. Specifically, using the notion of $k$-wise independence, similarly to \cite{GhaffariP16}, all vertices can learn a random seed of $\mathcal{SR}$ of $O(\log n)$ bits. Using the seed $\mathcal{SR}$ and the IDs of the edge $e$ and the source $s$, all vertices can compute the starting phase $\tau_{s,e}$ of each BFS construction $\BFS(s, G \setminus \{e\})$. In the analysis section, we show that due to these random starting points, w.h.p., each edge $e'=(u,v)$ is required to send as most $\ell$ edges in every phase.  In a standard application of a BFS computation with delay $\tau_{s,e}$, every vertex $t$ is supposed to receive a $\BFS(s,G\setminus \{e\})$-token (for the first time) in phase $\dist(s,t,G \setminus \{e\})+\tau_{s,e}$. In our case, the algorithm cannot afford to compute the entire BFS trees, but rather only certain fragments of them.
Specifically, the BFS tokens are initiated and propagated following certain rules whose goal is to keep the congestion over the edges small. In the high-level, every vertex $t$ would send its neighbor $u \in N(t)$ (such that $(u,t)\neq e$) a BFS token $\BFS(s,G\setminus \{e\})$ only if $e \in \Rel(s,u)$. In the special case where $e \notin \pi(s,t)$, the vertex $t$ will initiate the BFS token to $u$ in phase $\dist(s,t,G \setminus \{e\})+\tau_{s,e}$. In the remaining case where $e \in \pi(s,t)$, $t$ will send $u$ the token $\BFS(s,G\setminus \{e\})$ to $u$ in phase $i$ iff (i) $e \in \Rel(s,u)$ and $t$ received the token $\BFS(s,G\setminus \{e\})$ for the first time in phase $i-1$. 

As we will see in the analysis section, even-though each vertex $t$ sends the BFS tokens $\BFS(s,G\setminus \{e\})$ to neighbors $u$ provided that $e \in \Rel(s,u)$, it might be the case that a vertex $u$ would not get the BFS token for each of its edges in $\Rel(s,u)$. This might happen when the path $P(s,u,e)$ contains intermediate vertices $w$ for which $e \notin \Rel(s,w)$, which would block the propagation of the token. Fortunately, a more careful look reveals that in all the cases where $\LastE(P(s,u,e))\notin T_R$, the BFS token of $\BFS(s,G\setminus \{e\})$ would complete its propagation over the entire $P(s,u,e)$ for every $e \in \SubRel(s,u) \subseteq \Rel(s,u)$. 
As we will see, this would be sufficient for the correctness of the \FTMBFS\ structure. 
\\
\\
\noindent\hrulefill
\noindent\textbf{The Distributed FT-MBFS Algorithm: }
\begin{enumerate}
\item Set $R \gets \Sample(V,10\log n/\sigma)$.
\item Compute BFS trees $T_u=\BFS(s,G)$ for every $s \in S \cup R$, and add these trees to $H$.
\item Number the edges of $T_S=\bigcup_{s \in S} T_s$ by numbers $1$ to $|S|(n-1)$, where each edge $e\in T_s$ has a distinct number for every $T_s$ containing $e$.
\item Make each vertex $v$ know the numbers of the edges on the $\sigma'$-length suffix $\Rel(s,v)$ for every $s \in S$.   
\item Let each vertex $v$ sends to each of its neighbors $u \in N(v)$ the numbers of the edges on $\bigcup_{s \in S}\Rel(s,v)$. 
\item Broadcast a string $\mathcal{SR}$ of $O(\log n)$ random bits.
\item For every $s \in S$, and $e \in T_s$, let $\tau_{s,e}$ be picked uniformly at random from $\{1, 2, \dots, 2\sigma' \cdot |S|\}$ by setting it equal to $\mathcal{SR}[i]$ where $i$ is the edge-number of $e$ in $T_s$. Since $\mathcal{SR}$ is publicly known, given the ID of an edge $e$ and the source $s$, a vertex can compute $\tau_{s,e}$.
\item Divide time into phases of $\ell=\Theta(\log n)$ rounds each.
\item Run each $\BFS(s,G\setminus \{e\})$ for every $e$ and $s \in S$ at a speed of one hop per phase, following these rules for every vertex $v$:
\begin{itemize}
\item For every neighbor $u \in N(v)$ and every edge $e \in \Rel(s,u) \setminus \pi(s,v)$\footnote{In Lemma \ref{lem:know-path}, we explain how $v$ can locally detect the edges in $\Rel(s,u) \setminus \pi(s,v)$.} , $v$ sends $u$ the BFS token $\BFS(s,G\setminus \{e\})$ in phase $\dist(s,v, G)+\tau_{s,e}$.
\item For every BFS token $BFS(s,G\setminus \{e\})$ received for the first time at phase $i$ at $v$ from a non-empty subset of neighbors $N'(v) \subseteq N(v)$, $v$ does the following:
\begin{itemize}
\item If $e \in \SubRel(s,v)$, then $v$ adds the edge $(w,v)$ to $H$ where $w$ is the vertex of minimum-ID in $N'(u)$. 
\item $v$ sends the BFS token $\BFS(s,G\setminus \{e\})$ in phase $i+1$ to every neighbor $u\in N(v) \setminus N'(v)$ satisfying that $e \in \Rel(s,u)$. 
\end{itemize}
\end{itemize}
 
\end{enumerate}
\hrulefill
\bigskip
\\
\paragraph{Second-Order Implementation Details.}
For the generation of the shared random seed, we use the same construction of \cite{GhaffariP16} which is based on the notion of $k$-wise independence hash functions. 
\begin{lemma}\label{lem:randomness}\cite{GhaffariP16} The string of shared randomness $SR$ can be generated and delivered to all vertices in $O(D + \log n)$ rounds.
\end{lemma}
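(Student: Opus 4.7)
The plan is to reduce the lemma to a single pipelined broadcast of a short seed down a BFS tree, following the recipe from Ghaffari--Parter~\cite{GhaffariP16}. The first step is to designate a single source (say the vertex of minimum ID) and build a BFS tree rooted at it; this is a standard task taking $O(D)$ rounds, and the minimum ID can be found via convergecast within the same computation. This gives every vertex a parent pointer in a tree of depth $O(D)$.

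Next, the leader locally samples the seed $\mathcal{SR}$ used to derive the random starting phases $\tau_{s,e}$. Because the only property required of $\{\tau_{s,e}\}$ in the analysis is sufficiently strong concentration of the per-edge, per-phase congestion, it suffices to instantiate them from a $k$-wise independent hash family $h_{\mathcal{SR}}:[|S|(n-1)]\to [2\sigma'|S|]$ with $k=\Theta(\log n)$, so that bounded-independence Chernoff bounds (e.g., Bellare--Rompel) yield the w.h.p.\ guarantees we invoke later. Such a family admits a seed of length $O(\log^2 n)$ bits, which the leader can sample in a single local step and, crucially, every vertex can evaluate locally once $\mathcal{SR}$ is known.

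Finally, the leader pipelines $\mathcal{SR}$ down the BFS tree: it partitions the $O(\log^2 n)$-bit seed into $O(\log n)$ packets of $O(\log n)$ bits each and sends one packet per round to all BFS-children; each internal vertex forwards every packet it receives to its children in the next round. Standard pipelining over a tree of depth $O(D)$ with $O(\log n)$ packets completes in $O(D+\log n)$ rounds, at which point every vertex holds $\mathcal{SR}$ and can compute any $\tau_{s,e}$ locally from the edge-number of $e$ in the relevant $T_s$.

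The only delicate point is the choice of $k$: it must be large enough for the congestion analysis in the main algorithm to go through, yet small enough that the seed length stays $O(\poly\log n)$ so pipelining costs only $O(D+\log n)$ rounds rather than $\Omega(D\cdot \poly\log n)$ or worse. Taking $k=\Theta(\log n)$ strikes this balance exactly, matching~\cite{GhaffariP16}; everything else reduces to standard BFS construction and pipelined broadcast.
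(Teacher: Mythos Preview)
Your proposal is correct and matches the approach the paper intends: the paper does not actually prove this lemma but simply cites \cite{GhaffariP16} and remarks that the construction is ``based on the notion of $k$-wise independence hash functions.'' Your reconstruction (leader election, BFS tree, locally sampling a $k$-wise independent seed with $k=\Theta(\log n)$, and pipelining the $O(\log^2 n)$-bit seed as $O(\log n)$ packets of $O(\log n)$ bits down the tree) is precisely the argument behind that citation, and the $O(D+\log n)$ bound follows exactly as you describe.

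One small remark: the algorithm description in the paper says ``Broadcast a string $\mathcal{SR}$ of $O(\log n)$ random bits,'' whereas you (correctly) note that a $\Theta(\log n)$-wise independent family over a $\poly(n)$-size domain requires a seed of $O(\log^2 n)$ bits. Your accounting is the accurate one; the pipelining of $O(\log n)$ packets is what produces the additive $O(\log n)$ term in the round complexity, so the lemma statement is consistent with your analysis rather than with a literal $O(\log n)$-bit seed.
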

We argue that each vertex $v$ by knowing the sets $\bigcup_{u \in \{v\}\cup  N(v)}\Rel(s,u)$, can locally compute the edges in $\Rel(s,u)\setminus \pi(s,v)$ for every $u \in N(v)$.  
\begin{lemma}\label{lem:know-path}
For every vertex $v$, neighbor $u \in N(v)$ and an edge $e \in \Rel(s,u)$, $v$ can locally decide if $e \in \pi(s,v)$ or not.
\end{lemma}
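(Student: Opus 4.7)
The plan is to reduce the question to a simple depth comparison using data $v$ already has after Steps~2--5: its own depth $d_v = \dist(s,v)$, the depth $d_u = \dist(s,u)$ of the neighbor $u$ (which $v$ learns from $u$ during the BFS of $T_s$, or directly from $|\Rel(s,u)|$ when $d_u<\sigma'$), the ordered edge-lists $\Rel(s,v)$ and $\Rel(s,u)$, and crucially also the list $\Rel(s,v')$ of the $T_s$-parent $v'$ of $v$ (which reaches $v$ in Step~5 since $v'\in N(v)$). Because $u$ transmits the edges of $\Rel(s,u)$ in order, $v$ can read off the ordinal rank of $e$ inside that list and thereby compute the depth of the lower endpoint of $e$ in $T_s$: if $e$ is the $k$-th edge from $u$ in $\Rel(s,u)$, set $j = d_u - k + 1$.

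With $j$ in hand, I would split into three cases. If $j > d_v$, then $e$ is strictly deeper than $v$ in $T_s$ and cannot lie on $\pi(s,v)$, so $v$ returns ``no''. If $d_v - \sigma' + 1 \leq j \leq d_v$, then were $e$ on $\pi(s,v)$ it would occupy position $d_v - j + 1 \leq \sigma'$ from $v$, hence it must appear inside $\Rel(s,v)$; so $v$ outputs $[e \in \Rel(s,v)]$. These two ranges already handle every situation in which $v$ can resolve the question by consulting only its own suffix list.

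The main obstacle is the narrow remaining range $j \leq d_v - \sigma'$. Combined with $j \geq d_u - \sigma' + 1$ (from $e \in \Rel(s,u)$) and the adjacency bound $|d_v - d_u| \leq 1$, these inequalities pin down a unique off-by-one situation: $d_v = d_u + 1$ and $j = d_v - \sigma' = d_u - \sigma' + 1$, so $e$ is the farthest edge of $\Rel(s,u)$ and sits exactly one step above the boundary of $\Rel(s,v)$ on the would-be path $\pi(s,v)$. This is the subtlest point of the argument. I would resolve it by having $v$ consult the parent's suffix list, showing $e \in \pi(s,v)$ iff $e \in \Rel(s,v')$. For the forward direction, since $\pi(s,v) = \pi(s,v') \cup \{(v',v)\}$ and $e$ has depth $j < d_v$ (so $e \neq (v',v)$), it follows that $e \in \pi(s,v')$ at position $d_{v'} - j + 1 = \sigma'$ from $v'$, placing $e$ inside $\Rel(s,v')$; the converse is immediate from $\Rel(s,v') \subseteq \pi(s,v') \subseteq \pi(s,v)$. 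Thus $v$'s local rule is: compute $j$ from the position of $e$ in $\Rel(s,u)$; answer no if $j > d_v$, answer $[e \in \Rel(s,v)]$ if $j \geq d_v - \sigma' + 1$, and otherwise answer $[e \in \Rel(s,v')]$. The hardest part of the proof is isolating the single problematic value of $j$ and recognizing that the parent's suffix list (already available from Step~5) covers exactly that one-edge deficit.
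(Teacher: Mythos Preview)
Your argument is correct. The depth computation $j = d_u - k + 1$ from the ordinal position of $e$ in $\Rel(s,u)$, together with the triangle inequality $|d_v - d_u|\le 1$, cleanly isolates the unique boundary value $j = d_v - \sigma'$, and your use of the parent's list $\Rel(s,v')$ to cover that one missing edge is sound (the parent $v'$ exists in that case since $d_v \ge \sigma' \ge 1$, and $v$ receives $\Rel(s,v')$ in Step~5 because $v'\in N(v)$).

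The paper resolves the same boundary case differently and a bit more economically. Rather than splitting on $j$, its test is simply: answer yes if $e\in\Rel(s,v)$; otherwise answer yes iff the lower endpoint $y$ of $e$ equals the \emph{top vertex} of $\Rel(s,v)$ (the far endpoint of its first edge). The justification is a one-line distance contradiction: if $e=(x,y)\in\pi(s,v)\setminus\Rel(s,v)$ yet $y$ had no incident edge in $\Rel(s,v)$, then $\dist(x,v)\ge \sigma'+2$ while $\dist(x,u)\le \sigma'$, impossible for neighbors $u,v$. Once $y$ is known to sit at the top of $\Rel(s,v)$, the tree edge from $y$ to its parent $x$ lies on $\pi(s,v)$. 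So the paper needs only $\Rel(s,v)$ itself (not $d_u$, not $\Rel(s,v')$) to decide the boundary case, whereas your approach draws on the parent's suffix list. Both are valid; your version is more explicit about the arithmetic and makes the ``single off-by-one depth'' phenomenon visible, while the paper's version is shorter and uses strictly less of the exchanged information.
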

\begin{proof}
Let $e \in \Rel(s,u) \cap \pi(s,v)$. We will show that $e$ can locally recognize that $e \in \pi(s,v)$. 
If $e \in \Rel(s,v)$, then $v$ clearly knows that $e \in \pi(s,v)$. 
Otherwise, if $e=(x,y) \in \Rel(s,u)\setminus \Rel(s,v)$, we show that $y$ must be the endpoint of the first edge in $\Rel(s,v)$. To see this, assume towards contradiction that $y$ has no incident edge in $\Rel(s,v)$. Since $e \in \Rel(s,u)$, we have that $\dist(x,u,G)\leq 3\sigma$. However, by the assumption, $\dist(x,v,G)\geq 3\sigma+2$, in contradiction as $(u,v)$ are neighbors. 
As $y \in V(\Rel(s,v))$ and $e=(x,y) \in \Rel(s,u)$, $v$ can deduce that $x$ is the parent of $y$ in $T_s$, and consequently that $(x,y) \in \pi(s,v)$ as well. 
\end{proof}

Note that vertex $u$ receives messages from all its potential parents in $\BFS(s, G \setminus \{e\})$
at the same time, namely, at phase $\dist(s, u, G \setminus \{e\}) + \tau_{s,e}$. It
selects as its parent the vertex of minimum ID, which would guarantee that the shortest path ties are broken
an a consistent manner, leading to a sparse structure.
\\ \\
\paragraph{Correctness.} 
In the \FTBFS\ construction of \cite{GhaffariP16}, for every vertex $v$, the BFS token $\BFS(s, G \setminus \{e\})$ reached every vertex $t$ for which $e \in \pi(s,t)$. In contrast, in our setting, only a subset of the replacement paths are fully constructed which poses a challenge for showing the correctness.

To show that the output subgraph $H$ is indeed an $\FTMBFS$ w.r.t. $S$, throughout, we fix a source $s \in S$, target $t \in V$ and an edge $e=(x,y)$. We need the following definitions. 
Let $T_s$ be a BFS tree rooted at $s$ for every $s \in S$. 
For a vertex $y$ and a tree $T_s$, let $T_s(y)$ be the subtree rooted at $y$ in $T_s$. A vertex $w$ is said to be \emph{sensitive} to an edge $e \in T_s$, if $e \in \pi(s,w)$. Observe that for every edge $e=(x,y) \in T_s$, where $x$ is closer to $s$, the set of sensitive vertices to $e$ are those that belong to $T_s(y)$.

\begin{definition}[Sensitive-Detour]\label{def:sen-det-one-f}
For a given replacement path $P(s,t,e)$ let $w$ be the first vertex on the path (closest to $s$) that is sensitive to $e$. We denote the segment $SD(s,t,e)=P(s,t,e)[w,t]$ by the \emph{sensitive-detour} of $P(s,t,e)$.
\end{definition}

\begin{observation}
For every $s, t \in S \times V$ and $e=(x,y) \in G$, it holds that: (i) $SD(s,t,e) \subseteq T_s(y)$ and (ii) $P(s,t,e)=\pi(s,w') \circ (w',w) \circ SD(s,t,e)$ for a unique pair $w,w' \in P(s,t,e)$. 
\end{observation}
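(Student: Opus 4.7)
The plan is to reduce both parts of the observation to a single \emph{stability statement}: for every vertex $v$ that is not sensitive to $e=(x,y)$, the replacement path and the BFS tree path coincide, i.e.\ $P(s,v,e)=\pi(s,v,T_s)$. Given this, part (ii) is essentially immediate from the definition of the first sensitive vertex $w$, and part (i) follows by a short contradiction argument.

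To establish the stability statement, I would induct on $\dist(s,v,G)$. If $v \notin T_s(y)$, then the $T_s$-parent $p$ of $v$ must also lie outside $T_s(y)$, since otherwise $y$ would be a proper tree-ancestor of $p$ and hence of $v$. In particular $(p,v)\neq e$ and $p$ is non-sensitive, so $\dist(s,p,G)=\dist(s,p,G\setminus\{e\})=\dist(s,v,G)-1$, meaning $p$ is a legal candidate parent of $v$ in $\BFS(s,G\setminus\{e\})$. A direct comparison (using $\dist(\cdot,G)\leq \dist(\cdot,G\setminus\{e\})$ and the fact that neighbors of $v$ at the correct level are forced to have equal distances in the two graphs) shows that the candidate parent set of $v$ in $G\setminus\{e\}$ is \emph{contained} in the one used to build $T_s$ and still contains $p$; hence the min-ID tie-breaking rule picks $p$ in both BFS trees. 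Combined with the inductive hypothesis applied to $p$, this gives $P(s,v,e)=\pi(s,v,T_s)$.

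For (i), let $w$ be the first sensitive vertex of $P(s,t,e)$ and suppose for contradiction that some later $w'\in P(s,t,e)[w,t]$ is non-sensitive. Consistent tie-breaking together with ``prefix of a shortest path is a shortest path'' yields $P(s,t,e)[s,w']=P(s,w',e)$, and the stability statement then gives $P(s,w',e)=\pi(s,w',T_s)$. But $\pi(s,w',T_s)$ consists only of ancestors of $w'$ in $T_s$, none of which can lie in $T_s(y)$ (otherwise $y$ would be an ancestor of $w'$, contradicting $w'\notin T_s(y)$); this contradicts $w\in T_s(y)$ lying on $P(s,t,e)[s,w']$. For (ii), since $s$ is never sensitive ($\pi(s,s)$ is empty) we have $w\neq s$, and its predecessor $w'$ on $P(s,t,e)$ is non-sensitive by minimality of $w$. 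Applying the stability statement to $w'$ gives $P(s,t,e)[s,w']=\pi(s,w')$, so $P(s,t,e)=\pi(s,w')\circ(w',w)\circ SD(s,t,e)$; both $w$ and $w'$ are uniquely determined by these definitions.

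The main technical delicacy is the parent-equality step inside the stability lemma: the candidate parent sets of $v$ in $G$ and in $G\setminus\{e\}$ can genuinely differ, since sensitive neighbors of $v$ at the right level or the missing endpoint of $e$ drop out in $G\setminus\{e\}$, so one must carefully verify that the min-ID candidate chosen in $T_s$ survives into the BFS on $G\setminus\{e\}$. The non-sensitivity of the $T_s$-parent $p$ (inherited from that of $v$ via the ancestor argument) is what makes this go through, and this is the ingredient that ties together the ``local'' tree structure and the ``global'' consistent tie-breaking that the entire distributed construction relies on.
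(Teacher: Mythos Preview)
Your proof is correct and follows essentially the same route as the paper: both arguments hinge on the fact that for any non-sensitive vertex $w'$ on $P(s,t,e)$ one has $P(s,t,e)[s,w']=\pi(s,w')$, and then use this for the contradiction in (i) and the decomposition in (ii). The only difference is that the paper simply invokes ``consistent tie-breaking'' to assert this stability property in one line, whereas you spell it out as a separate lemma and prove it by induction on $\dist(s,v,G)$ via the containment of candidate-parent sets; your added rigor here is sound and the min-ID argument (the $G\setminus\{e\}$ candidate set is a subset of the $G$ candidate set and still contains the $T_s$-parent $p$) is exactly what justifies the paper's one-line appeal.
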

\begin{proof}
(i) Let $w$ be the first vertex in $T_s(y) \cap P(s,t,e)$, thus $SD(s,t,e)=P(s,t,e)[w,t]$. 
Assume towards contradiction that there exists $w' \in SD(s,t,e)$ such that $w' \notin T_s(y)$. Since the shortest-paths are computed in a consistent manner, and $e \notin \pi(s,w')$, we get that $P(s,t,e)[s,w']=\pi(s,w')$. Thus, $w \in \pi(s,w')$, contradiction as $w \in T_s(y)$. 

(ii) Let $w'$ be the neighbor of $w$ (defined as above) on $P(s,t,e)$ that is closer to $s$. By definition, $w' \notin T_{s}(y)$ and thus by the uniqueness of the shortest-paths, we have $P(s,t,e)=\pi(s,w') \circ (w',w) \circ SD(s,t,e)$.
\end{proof}

\begin{claim}\label{cl:every-sen}
If $e \in \pi_{\sigma'}(s,w')$ for every $w' \in SD(s,t,e)$, then $\LastE(P(s,t,e))\in H$.
\end{claim}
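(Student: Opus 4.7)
The plan is to trace the BFS token $\BFS(s, G \setminus \{e\})$ along $P(s,t,e)$ and argue that its propagation completes at $t$, causing $\LastE(P(s,t,e))$ to be placed in $H$. Write $P(s,t,e) = (v_0 = s, v_1, \ldots, v_k = t)$, with $w = v_j$ the first sensitive vertex on $P$ and $v_{j-1}$ its predecessor, so that $SD(s,t,e) = (v_j, v_{j+1}, \ldots, v_k)$ and the hypothesis reads $e \in \Rel(s, v_i)$ for every $i \in \{j, \ldots, k\}$. I would first verify that the initiation rule fires at $v_{j-1}$: since $v_{j-1}$ is non-sensitive, $e \notin \pi(s, v_{j-1})$, and the hypothesis at $i=j$ gives $e \in \Rel(s, w)$. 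Thus $v_{j-1}$ transmits the token to $w$ at phase $\dist(s, v_{j-1}, G) + \tau_{s,e} = (j-1) + \tau_{s,e}$, using that $v_{j-1}$ being non-sensitive forces $\dist(s, v_{j-1}, G) = \dist(s, v_{j-1}, G \setminus \{e\}) = j-1$ (since under consistent tie-breaking $P[s, v_{j-1}] = \pi(s, v_{j-1})$).

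Next I would show by induction on $i \in \{j, \ldots, k\}$ that $v_i$ receives the token for the first time at phase $(i-1) + \tau_{s,e}$ and that $v_{i-1}$ belongs to the first-sender set $N'(v_i)$. The base case $i = j$ is the initiation above. For the inductive step at $v_i$, the propagation rule forwards the token to every neighbor $u \in N(v_i) \setminus N'(v_i)$ with $e \in \Rel(s, u)$; I would apply this with $u = v_{i+1}$, where the two required conditions are $e \in \Rel(s, v_{i+1})$, which is given by the hypothesis, and $v_{i+1} \notin N'(v_i)$, which I would derive from a BFS-timing argument: since every initiation occurs at a non-sensitive vertex $v^*$ at phase $\dist(s, v^*, G) + \tau_{s,e} = \dist(s, v^*, G \setminus \{e\}) + \tau_{s,e}$ and the token advances at most one hop per phase, the first arrival at any vertex $z$ is at phase $\geq \dist(s, z, G \setminus \{e\}) - 1 + \tau_{s,e}$, and in particular $v_{i+1}$ cannot receive the token strictly before $v_i$.

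Finally, when the induction reaches $i = k$, the token is delivered to $t$ with $v_{k-1} \in N'(t)$. The same consistent min-ID tie-breaking that selects the unique replacement path $P(s,t,e)$ also guarantees that $v_{k-1}$ is the min-ID selection $w_{\min}$ in $N'(t)$, and the addition rule inserts $(v_{k-1}, t) = \LastE(P(s,t,e))$ into $H$. The main obstacle I anticipate is the BFS-timing sub-claim $v_{i+1} \notin N'(v_i)$: despite the restricted propagation (only vertices $u$ with $e \in \Rel(s, u)$ are ever entered), one must argue that the first-arrival times in the algorithm still respect $\dist(\cdot, \cdot, G \setminus \{e\})$ so that tokens reach $SD$ strictly outward from $w$. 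Once this timing invariant is in place, the remainder of the argument is a direct application of the hypothesis together with the initiation and propagation rules of the algorithm.
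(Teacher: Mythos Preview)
Your proposal is correct and follows the same approach as the paper: initiate the $\BFS(s,G\setminus\{e\})$ token at the non-sensitive predecessor $q=v_{j-1}$ of the first sensitive vertex $w$, then propagate it hop-by-hop along $SD(s,t,e)$ using the hypothesis $e\in\Rel(s,v_i)$ at each step. Your write-up is in fact more careful than the paper's terse argument, since you make the induction explicit and spell out the BFS-timing invariant (first arrival at $z$ no earlier than $\dist(s,z,G\setminus\{e\})-1+\tau_{s,e}$) needed to conclude $v_{i+1}\notin N'(v_i)$ and that the min-ID vertex in $N'(t)$ is indeed $v_{k-1}$; the paper simply asserts that ``the token propagates over the $SD(s,t,e)$ segment at a speed of one hop per phase.''
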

\begin{proof}
Let $w$ be the first vertex on $SD(s,t,e)$ and let $q$ be the preceding neighbor of $w$ (not in $SD(s,t,e)$).
Since $e \in \pi_{\sigma'}(s,w)$, the vertex $q$ can locally detect that $e \notin \pi(s,q)$ (using Lemma \ref{lem:know-path}). Note that since $q \notin SD(s,t,e)$, it holds $\dist(s, q, G \setminus \{e\})=\dist(s, q, G)$.
Thus, $q$ send to $w$ the BFS token $\BFS(s,G \setminus \{e\})$ in phase $\dist(s, q, G \setminus \{e\})+1+\tau_{s,e}$. The token propagates over the $SD(s,t,e)$ segment at a speed of one hop per phase as for each $w' \in SD(s,t,e)$, $e \in \pi_{\sigma'}(s,w')$. 
\end{proof}

\begin{lemma}\label{lem:main-cor}
For every $s,t \in S \times V$ and $e \in G$, we have that $\LastE(P(s,t,e))\in H$. 
\end{lemma}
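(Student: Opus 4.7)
My plan is a case analysis on the position of $e=(x,y)$ relative to $\pi(s,t)$, where $y$ is the endpoint of $e$ on the $t$-side. If $e \notin \pi(s,t)$, then $P(s,t,e) = \pi(s,t) \subseteq T_s \subseteq H$ and we are done, so assume henceforth $e \in \pi(s,t)$; then $SD(s,t,e) \subseteq T_s(y)$ by the observation preceding Claim~\ref{cl:every-sen}.

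Suppose first $e \notin \pi_\sigma(s,t)$. Then $\dist(y,t,T_s) \geq \sigma$ and, since $y$ lies on $\pi(s,t)$, $\dist(y,t,G) = \dist(y,t,T_s)$, so $\dist(e,t,G) \geq \sigma-1$ and $|P(s,t,e)| \geq \sigma$. A Chernoff argument over $R=\Sample(V,10\log n/\sigma)$ yields w.h.p.\ a sample $r \in R$ among the last $\sigma/2$ vertices of $P(s,t,e)$. Since $\dist(r,t,G)\leq\sigma/2<\dist(e,t,G)$, no shortest $r$-$t$ path in $G$ uses $e$, so by consistency of tie-breaking $\pi(r,t,G)=P(s,t,e)[r,t]$, whence $\LastE(P(s,t,e)) \in T_r \subseteq H$. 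This parallels the simplified meta-algorithm argument of Section~2.

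Now suppose $e \in \pi_\sigma(s,t) = \SubRel(s,t)$. If every $w' \in SD(s,t,e)$ satisfies $e \in \pi_{\sigma'}(s,w')$, Claim~\ref{cl:every-sen} delivers the token $\BFS(s,G\setminus\{e\})$ to $t$ along $SD$; since $e \in \SubRel(s,t)$, the edge-addition step at $t$ (with min-ID tie-breaking) inserts $\LastE(P(s,t,e))$ into $H$. Otherwise, some $w' \in SD$ has $e \notin \pi_{\sigma'}(s,w')$ and propagation is blocked. The key structural fact here is that $\dist(y,v,T_s)=\dist(y,v,G)$ for every $v \in T_s(y)$ (since the BFS depth of $v$ factors as $\dist(s,v,G)=\dist(s,y,G)+\dist(y,v,T_s)$), and consecutive vertices on $SD$ are $G$-adjacent, so BFS-depths along $SD$ change by at most one per step. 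Since the depth at $w'$ is $\geq\sigma'$ and the depth at $t$ is $\leq\sigma-1$, the tail $P(s,t,e)[w',t]$ has length at least $\sigma'-(\sigma-1)=2\sigma+1$, so w.h.p.\ $R$ catches a sample $r$ on this tail. To conclude $\LastE(P(s,t,e)) \in T_r$, I would draw $r$ from a sub-window of the tail in which the tree-depth of $r$ is of order $\sigma$, making $\dist(r,y,T_s)+\dist(y,t,T_s)>|P(s,t,e)[r,t]|$; this strict inequality excludes $y$ (hence $e$) from any shortest $r$-$t$ path in $G$, and consistent tie-breaking yields $\pi(r,t,G)=P(s,t,e)[r,t]$.

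The main obstacle is this last sub-case: since $e$ is close to $t$ in $G$, the naïve distance-based argument of the preceding case fails, and one must identify a window of the tail from which $r$ is drawn so that the tree-depth at $r$ and the residual path length $|P(s,t,e)[r,t]|$ are balanced in the right way. I expect this to follow from a careful tracking of how BFS-depths vary along the tail (they start above $\sigma'=3\sigma$ and end below $\sigma$, changing by at most one per step), which forces a sub-window of length $\Theta(\sigma)$ of ``balanced'' positions that $R$ hits w.h.p.; once such $r$ is located, the consistent tie-breaking argument mirrors that of the first sub-case.
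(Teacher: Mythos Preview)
Your case decomposition is valid (and Case~1 and Sub-case~A are handled correctly), but Sub-case~B contains a genuine gap. You aim to find a ``balanced'' sample $r$ on the tail for which the \emph{strict} inequality $\dist(r,y,T_s)+\dist(y,t,T_s)>|P(s,t,e)[r,t]|$ holds, in order to exclude $y$ from every shortest $r$--$t$ path. This strict inequality need not hold anywhere on the tail. Indeed, since the tree path $r\to y\to t$ inside $T_s(y)$ avoids $e$, we always have $|P(s,t,e)[r,t]|=\dist(r,t,G\setminus\{e\})\le d(r)+d(t)$ (where $d(v):=\dist(y,v,T_s)$), and equality can occur at every vertex of the tail; the extreme case is $t=y$ (so $d(t)=0$), where $|P[r,t]|=\dist(r,y,G)=d(r)$ for every $r\in SD$. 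In that scenario your ``balanced window'' is empty and the argument excluding $y$ fails outright. The depth-tracking you propose only shows that $h(j):=d(v_j)+d(t)-(L-j)$ is nonnegative and nondecreasing, not that it is strictly positive on a $\Theta(\sigma)$ window.

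The paper's fix is both simpler and stronger: you do not need to exclude $y$, only $e$. For \emph{any} $r\in SD(s,t,e)\subseteq T_s(y)$ one has $\dist(r,x,G)=1+d(r)$ and $\dist(x,t,G)=1+d(t)$ (since $x$ is the parent of $y$ and $r,t\in T_s(y)$), so every $r$--$t$ path through $e$ has length at least $d(r)+d(t)+2>d(r)+d(t)\ge\dist(r,t,G)$. Hence $e$ lies on no shortest $r$--$t$ path in $G$, the set of $r$--$t$ shortest paths in $G$ and in $G\setminus\{e\}$ coincide, and consistent tie-breaking gives $\pi(r,t)=P(s,t,e)[r,t]$ for \emph{every} sampled $r\in R\cap SD$. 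Since your own depth argument already shows $|SD|\ge 2\sigma+1$, such an $r$ exists w.h.p., and no balancing is needed.
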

\begin{proof}
Fix a replacement path $P(s,t,e)$ where $e=(x,y)$. 
We consider the following cases.

\paragraph{Case (1): $e \in \pi(s,t)\setminus \SubRel(s,t)$.} In this case, $|P(s,t,e)| \geq \dist(s,t,G) \geq \sigma$ and thus w.h.p. the $\sigma/2$-length suffix of the path contains at least one sampled vertex in $R$, say $r$. 
Since $\dist(r,t,G)\leq \sigma/2$ but $\dist(x,t,G)\geq \sigma$, the edge $e$ does not appear on any $r$-$t$ shortest path. As the shortest-path ties are broken in a consistent manner, we have that $P(s,t,e)[r,t]=\pi(r,t)$.
Since the algorithm adds the BFS trees w.r.t. all vertices in $R$, we have that $\LastE(\pi(r,t))\in H$.
\\
\paragraph{Case (2): $e\in \SubRel(s,t)$ but $|SD(s,t,e)|\geq \sigma$.}
The proof for this case follows by noting that for every two vertices $u,v \in T_s(y)$, there is no $u$-$v$ shortest path that go through the edge $e$. Assume towards contradiction that there is a $u$-$v$ shortest path $P$ that goes through $e$, since $\pi(x,u)\subset \pi(s,u)$, $\pi(x,v)\subset \pi(s,v)$, it holds that $e \in \pi(x,u), \pi(x,v)$, and thus:
$$|P|=\dist(u,x,G)+\dist(x,v,G)=1+\dist(y,u,G)+1+\dist(y,v,G)=2+\dist(u,v,G)~,$$
contradiction that $P$ is a $u$-$v$ shortest path. Since $|SD(s,t,e)|\geq \sigma$, w.h.p., it contains at least one sampled vertex $r \in R$. As both $r,t \in T_s(y)$, $\pi(r,t)$ is free of failed edge $e$. Thus $P(s,t,e)[r,t]=\pi(r,t)$, concluding that $\LastE(P(s,t,e))\in H$. We note that this is the only case where the proof would not work for the case of a single vertex (rather than edge) fault.
\\
\paragraph{Case (3): $e\in \SubRel(s,t)$ but $|SD(s,t,e)|< \sigma$.}
This is the most interesting case as the last edge of the path $P(s,t,e)$ is not necessarily in $\bigcup_{r \in R}T_r$. We need to show that the suffix of the path $P(s,t,e)$ is computed by the algorithm, and that its last edge is added to $H$. Since $|SD(s,t,e)|< \sigma$, it holds that $\dist(w,t, G \setminus \{e\})\leq \sigma$ where $w$ is the first vertex on the $SD(s,t,e)$ segment. Since $e \in \pi_{\sigma}(s,t)$, it holds that $\dist(e,w, G)\leq 2\sigma$. Finally, as $w \in  SD(s,t,e)$ it implies that $e \in \pi_{\sigma'}(s,w)$. The claim then follows by Claim \ref{cl:every-sen}.
\end{proof}

\paragraph{Size.} The first part adds the BFS trees w.r.t. $|R|=O(\sqrt{|S|n})$ vertices. In addition, in the second step of the truncated BFS constructions, for every edge $e \in \bigcup_{s \in S}\SubRel(s,v)$, the vertex $v$ adds at most one edge to $H$ (corresponding to the last edge of $P(s,v,e)$). Since $|\bigcup_{s \in S}\SubRel(s,v)|=O(\sqrt{|S| n})$, this adds $O(\sqrt{|S|}n^{3/2})$ edges.

\paragraph{Round Complexity.}
\begin{claim}\label{cl:fast-pre}
Each vertex $t$ can learn the relevant edge set $\bigcup_{s \in S}\Rel(s,t)$ within $O(\sqrt{n |S|}+D)$ rounds.
\end{claim}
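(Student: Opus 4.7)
The plan is to have each vertex $t$ obtain $\Rel(s,t) = \pi_{\sigma'}(s,t)$ for every $s \in S$ by a pipelined downward propagation along each BFS tree $T_s$ (built in Step~2 of the algorithm), and then to charge the total cost through the random-delay scheduling result (Theorem~\ref{thm:delay}). Since the edge set $\Rel(s,v)$ is determined by the $\sigma'$ nearest ancestors of $v$ in $T_s$, the task reduces to disseminating ancestor identities down each $T_s$.

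For a single source $s$, I would run the following $\sigma'$-round subroutine on $T_s$. In round $i$ (for $i=1,\ldots,\sigma'$), each vertex $u$ sends along each of its $T_s$-child edges a single $O(\log n)$-bit message encoding the ancestor of $u$ in $T_s$ at depth $i-1$ above $u$ — a value $u$ has just learned in round $i-1$ (with $u$'s own ID used at $i=1$). A straightforward induction on $i$ shows that after $\sigma'$ rounds every vertex $v$ knows its nearest $\sigma'$ ancestors in $T_s$ and hence the edge set $\Rel(s,v)$. This per-source subroutine has dilation $\sigma'$ and loads each tree edge of $T_s$ with exactly $\sigma'$ messages.

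I then execute the $|S|$ subroutines concurrently. Since any edge $e\in G$ can be a tree edge of at most $|S|$ of the trees $\{T_s\}_{s\in S}$, the global congestion is $\congestion = O(|S|\cdot \sigma') = O(\sqrt{n|S|})$, while the global dilation remains $\dilation = \sigma' = O(\sqrt{n/|S|})$. Theorem~\ref{thm:delay} schedules the bundle in $O(\congestion + \dilation\log n) = \widetilde{O}(\sqrt{n|S|})$ rounds. Combining this with the $O(|S|+D)$-round pipelined multi-source BFS used in Step~2 to install the parent pointers $p_s(v)$ in each $T_s$ (and noting $|S|\le \sqrt{n|S|}$ whenever $|S|\le n$), the total cost matches the claimed $O(D + \sqrt{n|S|})$ (up to the logarithmic factor inside $\widetilde{O}$).

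The main obstacle to guard against is the potentially large overlap of the trees on shared graph edges: if one naively tried to push the entire suffix $\Rel(s,v)$ in a single batch, the congestion would blow up by a factor of $\sigma'$. The saving grace is that our per-source subroutine emits only a \emph{single} ancestor ID per tree edge per round, so the traffic on $e$ is bounded by (number of trees containing $e$) times $\sigma'$, i.e., by $|S|\sigma' = O(\sqrt{n|S|})$ — exactly the congestion that Theorem~\ref{thm:delay} converts into the desired round complexity. The broadcast of $\bigcup_{s\in S}\Rel(s,v)$ to \emph{all} neighbors required by Step~5 of the algorithm is a separate matter and is not part of this claim.
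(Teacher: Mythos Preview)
Your approach is essentially the same as the paper's: propagate ancestor identities down each $T_s$ for $\sigma'$ hops so that every vertex learns its last $\sigma'$ tree edges, and then bound the total edge load by $|S|\cdot\sigma'=O(\sqrt{n|S|})$. The only difference is in the scheduling step: the paper observes that a straightforward pipeline already delivers all messages in $O(|S|\cdot\sigma'+D)=O(\sqrt{n|S|}+D)$ rounds deterministically, whereas you invoke the random-delay machinery of Theorem~\ref{thm:delay}, which is heavier than needed and costs you the extra $\log n$ factor you note at the end. Replacing your appeal to Theorem~\ref{thm:delay} with a direct pipeline (or simply running the $|S|$ per-source subroutines sequentially, $|S|\cdot\sigma'$ rounds total) recovers the clean $O(\sqrt{n|S|}+D)$ bound stated in the claim.
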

\begin{proof}
For every $s \in  S$, each edge $e$ in $T_s$ propagates down the tree for $\sigma'$ time steps (i.e., until reaching all vertices at distance $\sigma'$ from $e$). Focusing on a single-source $s$, each edge $e'=(x,y)$ needs to pass at most $\sigma'$ messages, corresponding to the last $\sigma'$ edges on the $\pi(s,y)$ path. Since there are $|S|$ sources, the total number of messages passing through a single edge is $|S|\cdot\sigma'$. Using pipeline all these messages can arrive in $O(\sqrt{n |S|}+D)$ rounds. 
\end{proof}
\begin{lemma}\label{lem:many-BFS}
W.h.p., at most $\ell=O(\log n)$ BFS tokens need to go through each edge, per phase.
\end{lemma}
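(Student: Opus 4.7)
The plan is to fix an arbitrary edge $e'=(u,v)\in E$ and an arbitrary phase $i$, and to show that the expected number of BFS tokens traversing $e'$ during phase $i$ is $O(1)$, and then to apply concentration via the $k$-wise independence of the $\tau_{s,e}$'s (guaranteed by the construction of $\mathcal{SR}$ in Lemma \ref{lem:randomness}) followed by a union bound over edges and phases.

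First, I would identify which tokens can possibly traverse $e'$. By the propagation rules, a token $\BFS(s,G\setminus\{e\})$ can be sent by $v$ to $u$ only if $e\in\Rel(s,u)$, and symmetrically only if $e\in\Rel(s,v)$ for the reverse direction; in particular, $e\in\Rel(s,u)\cup\Rel(s,v)$, a set of size at most $2\sigma'$. Moreover, only sources $s\in S$ and edges $e\in T_s$ give rise to tokens (so $\tau_{s,e}$ is indeed defined). Thus the total number of \emph{candidate} pairs $(s,e)$ whose token could ever cross $e'$ is at most $2|S|\sigma'$.

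Next, for each such candidate pair $(s,e)$, the token $\BFS(s,G\setminus\{e\})$ is scheduled to traverse $e'$ at phase $\dist(s,u,G\setminus\{e\})+\tau_{s,e}$ (direction $u\to v$) and at phase $\dist(s,v,G\setminus\{e\})+\tau_{s,e}$ (direction $v\to u$); both are deterministic offsets of the uniform variable $\tau_{s,e}\in\{1,\ldots,2\sigma'|S|\}$. Consequently, for any fixed target phase $i$, the probability that the token of $(s,e)$ uses $e'$ in phase $i$ is at most $2/(2\sigma'|S|)=1/(\sigma'|S|)$. Summing over the $\leq 2|S|\sigma'$ candidate pairs, the expected number of tokens crossing $e'$ during phase $i$ is at most $2$.

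Finally, since the $\tau_{s,e}$'s are generated from the $O(\log n)$-bit seed $\mathcal{SR}$ via a $k$-wise independent hash family with $k=\Theta(\log n)$ (as in \cite{GhaffariP16}), the corresponding indicator variables are $k$-wise independent, and a standard Chernoff-style tail bound for $k$-wise independent sums with mean $O(1)$ and deviation $\ell=\Theta(\log n)$ yields that the count exceeds $\ell$ with probability at most $1/\poly(n)$, for a sufficiently large constant in $\ell$. A union bound over all $O(n^2)$ edges and all $O(\sigma'|S|+D)=\poly(n)$ phases then gives the claim w.h.p. The only delicate point is ensuring that the degree of independence supplied by $\mathcal{SR}$ is large enough to drive the Chernoff-style bound to $1/\poly(n)$; this is exactly the regime already handled in \cite{GhaffariP16}, so reusing their construction suffices.
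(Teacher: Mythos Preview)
Your proposal is correct and follows essentially the same approach as the paper: bound the number of candidate pairs $(s,e)$ that could ever traverse $e'$ by $O(|S|\sigma')$ via the relevant-edge rule, observe that each such token hits a fixed phase with probability $O(1/(|S|\sigma'))$ because $\tau_{s,e}$ is uniform over a range of that size, then invoke a $k$-wise independent Chernoff bound (the paper cites Schmidt et al.\ explicitly for this step) and union bound. The only cosmetic differences are that you treat both directions together and make the final union bound explicit.
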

\begin{proof} 
We show that w.h.p., in each phase number $\tau$ and for each edge $e'=(v, u)$, at most $O(\log n)$ BFS tokens will need to go through $e'$ from $v$ to $u$ in phase $\tau$. 
Note that the only BFS tokens passing over the edge $e'=(v,u)$ correspond to the BFS algorithms of $\BFS(s,G\setminus \{e\})$  for $e \in \Rel(s,u) \cup \Rel(s,v)$. Thus each edge passes $O(|S|\cdot \sigma)$ tokens. 
 
Each of the permitted tokens $\BFS(s, G \setminus \{e\})$ passing through $e'$ from $v$ to $u$ in phase $\tau$ satisfies that $\dist(s,v, G\setminus \{e\}) + \tau_{s,e} = \tau$. 
Assuming that the starting phase $\tau_{s,e}$ is chosen uniformly at random from a range of size $6\sigma |S|$, the probability of that event is at most $1/(6\sigma|S|)$. Hence, over the set $6\sigma\cdot |S|$ permitted tokens, only $1$ token, in expectation, is scheduled to go through from $v$ to $u$ in phase $\tau$. If the random delay values $\tau_{s,e}$ were completely independent, by an application of the Chernoff bound, we would have that this number is at most $O(\log n)$, w.h.p. This will not be exactly true in our case, as we produce $\mathcal{SR}$ using a pseudo-random generators, but using $k$-wise independence on the generated string, for $k=\Theta(\log n)$ and from a result of Schmidt et al.\cite{schmidt1995chernoff}, it is known that for this application of the Chernoff bound, it suffices to have $k$-wise independence between the random values, for $k=\Theta(\log n)$.   
\end{proof}
We are now ready to complete the round complexity argument. By Claim \ref{cl:fast-pre}, each vertex $t$ computes its relevant edge set $\pi_{\sigma'}(s,t)$ within $O(D+\sqrt{|S|n})$ rounds. Within additional $O(\sqrt{|S|n})$ rounds, each vertex $t$ can also learn the relevant edge sets of its neighbors.
By Lemma \ref{lem:many-BFS}, the computation of all BFS trees is implemented within $\widetilde{O}(D+\sigma \cdot |S|)=\widetilde{O}(D+\sqrt{|S|n})$ rounds. This completes the proof of Theorem \ref{thm:FT-MBFS-dist}.

\section{~Distributed Construction of Dual Failure Distance Preservers}
In this section, we extend the construction of \FTMBFS\ structures to support two edge failures.
Throughout, for every $s,t \in S \times V$, and $e_1,e_2 \in G$, recall that $P(s,t,\{e_1,e_2\})$ is the unique $s$-$t$ path in $G \setminus \{e_1,e_2\}$ chosen based on a consistent tie-breaking scheme (based on vertex IDs). 
For a given parameter $\sigma$, let $P_{\sigma}(s,t,F)$ be the $\sigma$-length suffix (ending at $t$) of the path. When $|P(s,t,F)|\leq \sigma$, $P_{\sigma}(s,t,F)$ is simply $P(s,t,F)$. 
Set
$$\sigma_1=(n/|S|)^{5/8} \mbox{~~and~~} \sigma_2=(n/|S|)^{1/4}~.$$
We start by describing the algorithm based on the assumption that every vertex $t$ has the following information:
\begin{itemize}
\item{(I1)} The distance $\dist(s,t,G \setminus \{e\})$ for every $s \in S$ and every $e \in \pi_{2\sigma_2}(s,t)$.
\item{(I2)} The path segment $P_{\sigma_2}(s,t,e)$ for every $s \in S$ and every $e \in \pi_{2\sigma_2}(s,t)$.
\end{itemize}
Note that in contrast to the \FTBFS\ construction of \cite{GhaffariP16}, the \FTMBFS\ algorithm of Theorem \ref{thm:FT-MBFS-dist} computes the structure but not necessarily the distances. We therefore need to augment the algorithm by a procedure that computes the information (I1,I2) for all near faults (at distance at most $\sigma_2$ from $t$). 

\begin{lemma}\label{lem:short-dist}
There is a randomized algorithm that w.h.p. computes the information (I1,I2) for every vertex $t$ within $\widetilde{O}((n/\sigma_1) \cdot \sigma_2+ (n/\sigma_1)^2+D)$ rounds. 
\end{lemma}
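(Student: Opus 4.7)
I will use a hitting-set sampling in combination with local FT-BFS around sampled vertices and pairwise-FT aggregation, mirroring the three cost terms in the claimed round complexity. Sample $R' = \Sample(V, 10 \log n/\sigma_1)$ so that w.h.p.\ $|R'| = \widetilde{O}(n/\sigma_1)$ and every subpath of length at least $\sigma_1/2$ contains a vertex of $R'$. In a preamble reminiscent of Claim~\ref{cl:fast-pre}, every vertex $t$ also collects the last $2\sigma_2$ edges of $\pi(s,t)$ for every $s \in S$ along with their endpoint distances in $T_s$, in $\widetilde{O}(D+\sqrt{|S|n})$ rounds by pipelined broadcast; this supplies the combinatorial handles needed to formulate each $(s,e)$-query locally.

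To capture the $\widetilde{O}(D+n/\sigma_1)$ term, I would run $|R'|$ pipelined BFS computations on a global BFS tree of $G$, so that every vertex $v$ learns $\dist(r,v,G)$ and a BFS-parent pointer in $T_r$ for each $r \in R'$. For the $\widetilde{O}((n/\sigma_1)\sigma_2)$ term, from each $r \in R'$ I would execute a \emph{local} FT-BFS of radius $O(\sigma_2)$ that disseminates, to every $v$ with $\dist(r,v,G) \le \sigma_2$, the distance $\dist(r,v,G \setminus \{e\})$ and the $\sigma_2$-suffix of $P(r,v,e)$ for every $e \in \pi(r,v)$. Each local FT-BFS has dilation $O(\sigma_2)$ and, by adapting the token-propagation scheme of \cite{GhaffariP16}, per-edge congestion $\widetilde{O}(\sigma_2)$; running all $|R'|$ of them in parallel via Thm.~\ref{thm:delay} completes in $\widetilde{O}(|R'|\sigma_2) = \widetilde{O}((n/\sigma_1)\sigma_2)$ rounds.

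For the $\widetilde{O}((n/\sigma_1)^2)$ term, every vertex learns the table of pairwise FT entries $\dist(r_1,r_2,G \setminus \{e\})$ together with the $\sigma_2$-suffix of $P(r_1,r_2,e)$ for all $(r_1,r_2) \in (R')^2$ and relevant $e$. Rather than running a fresh FT-BFS from each sampled source (which would cost $\widetilde{O}(|R'| \cdot D)$ and blow the budget), the algorithm stitches the Phase-2 local data along sequences of sampled waypoints that, by the hitting-set property, populate any $r_1$–$r_2$ replacement path of length $\geq \sigma_1$; replacement paths shorter than $\sigma_1$ are already fully contained in the Phase-2 ball around $r_1$. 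The resulting $\widetilde{O}((n/\sigma_1)^2)$ table entries are upcast to a global BFS-tree root and broadcast back in $\widetilde{O}((n/\sigma_1)^2 + D)$ rounds by standard pipelining.

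Finally, each vertex $t$ assembles (I1, I2) locally per query $(s,e)$ using the sensitive-detour decomposition $P(s,t,e)=\pi(s,w')\circ(w',w)\circ SD(s,t,e)$ of Def.~\ref{def:sen-det-one-f}: $\dist(s,w',G)$ is read from the preamble data in $T_s$, while $SD(s,t,e) \subseteq T_s(y)$ (with $e=(x,y)$) is reconstructed either entirely from $t$'s Phase-2 information when $|SD|$ is short, or via a sampled waypoint on $SD$ combined with Phases 1 and 3 when $|SD|$ is large. The main technical obstacle will be correctness in the intermediate regime where no sampled vertex is guaranteed on $SD$ yet $|SD|>\sigma_2$; here the resolution uses that $SD \subseteq T_s(y)$, so its $\sigma_2$-suffix equals a $G$-shortest path ending at $t$ (by the Case-(2) argument of Lemma~\ref{lem:main-cor}) and can be identified locally using the $T_s$-preamble data together with the global BFS distances to $R'$, under the consistent tie-breaking maintained throughout.
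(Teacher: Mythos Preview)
Your plan has two genuine gaps.

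\textbf{Short detours are not covered.} Phase~2 runs local FT-BFS rooted at the sampled set $R'$ with radius $O(\sigma_2)$, but (I1) and (I2) are about sources $s\in S$. When $|SD(s,t,e)|\le\sigma_1$ (which can vastly exceed $\sigma_2$) there is in general no sampled vertex on the detour, and nothing in your pipeline delivers $\dist(s,t,G\setminus\{e\})$ to $t$: you never run an FT computation rooted at $s$ beyond the $T_s$ preamble, and your final assembly presumes $t$ already knows the divergence vertex $w'$, which it does not. Your ``intermediate-regime'' fix only argues that the $\sigma_2$-suffix of $SD$ is a $G$-shortest path, but this gives neither the identity of that path nor the total distance (I1). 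The paper handles this ``easy'' regime by rerunning the truncated \FTMBFS\ algorithm of Section~4 from the sources $S$ with parameter $8\sigma_1$; by Case~(3) of Lemma~\ref{lem:main-cor} the token $\BFS(s,G\setminus\{e\})$ reaches $t$ whenever $|SD(s,t,e)|\le\sigma_1$, and a second pass in which each token also carries its last $\sigma_2$ ancestors yields (I2) (Claim~\ref{cl:easy-easy}). As a side remark, your dilation claim for Phase~2 is false: even when $\dist(r,v,G)\le\sigma_2$, the path $P(r,v,e)$ can have length $\Theta(D)$.

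\textbf{For long detours you miss the key structural fact.} Phase~3 aims to broadcast fault-tolerant entries $\dist(r_1,r_2,G\setminus\{e\})$ for all sampled pairs and all ``relevant $e$''; neither the number of such $e$ nor the ``stitching along waypoints'' is controlled, so the $\widetilde{O}((n/\sigma_1)^2)$ budget is not justified. The paper's insight, absent from your proposal, is that for hard paths (those with $e\in\pi_{\sigma_2}(s,t)$ and $|SD|\ge\sigma_1$) only \emph{fault-free} distances between sampled pairs are needed. Claim~\ref{cl:save} exploits that $e$ is very close to $t$ in $T_s$ but far (in $T_s$) from the vertices on the prefix of $SD$, so there exist $r_1\in R$ non-sensitive and $r_2\in R$ sensitive to $e$, both on $P(s,t,e)$, with $\dist(r_1,r_2,G)\le\sigma_1/16$ and, crucially, $e\notin\pi(r_1,r_2)$. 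Then $\dist(s,t,G\setminus\{e\})$ is the minimum, over all such qualifying $(r_1,r_2)$, of $\dist(s,r_1,G)+\dist(r_1,r_2,G)+\dist(r_2,t,G)$. Broadcasting the $|R|^2$ pairwise $G$-distances together with $O(\log^2 n)$-bit LCA labels of each $r\in R$ in every $T_s$ (so that $t$ can locally split $R$ into $R^+_e$ and $R^-_e$) suffices, and $t$ recovers (I2) by setting $P_{\sigma_2}(s,t,e)=\pi_{\sigma_2}(r^*_2,t)$ for the minimizing $r^*_2$.
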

In Subsec. \ref{sec:key-dual} we describe the key construction. Then in Subsec. \ref{sec:short-dist}, we prove Lemma \ref{lem:short-dist}. 

\subsection{~Distributed Alg. for Dual Failure $\FTMBFS$ Structure (Under the Assumption)}\label{sec:key-dual}
Before explaining the algorithm, we need the following definition, which extends Def. \ref{def:sen-det-one-f} to the dual failure setting. 
\begin{definition}[Sensitive-Detours of Dual-Fault Replacement Paths]\label{def:sen-det-two-f}
A vertex $t$ is \emph{sensitive} to the triplet $(s,e_1,e_2)$ if $P(s,t,\{e_1,e_2\})\notin \{P(s,t,e_1), P(s,t,e_2)\}$. This necessarily implies that for a sensitive vertex it holds that $e_2 \in P(s,t,e_1)$ \emph{and} $e_1 \in P(s,t,e_2)$. For a given $P(s,t,\{e_1,e_2\})$ path, let $w$ be the first vertex (closest to $s$) that is sensitive to $(s,e_1,e_2)$. The sensitive-detour $SD(s,t,\{e_1,e_2\})$ correspond to the segment $P(s,t,\{e_1,e_2\})[w,t]$.
\end{definition}
Set $\sigma=\sigma_2=(n/|S|)^{1/4}$. The first step of the algorithm computes an \FTMBFS\ subgraph $\FTMBFS(R \cup S)$ where $R$ is a randomly sampled set of $O(n\log n/\sigma)$ vertices. By Thm. \ref{thm:FT-MBFS-dist}, this can be done in $\widetilde{O}(\sqrt{|R|n}+D)$ rounds. 
The second step computes a subset of dual-failure replacement paths $\{P(s,t,\{e_1,e_2\}), s \in S, t \in V, e_1,e_2 \in E\}$ that satisfy certain properties. As in the single failure case, the computation of many of the replacement paths might be incomplete. The guarantee, however, would be that any $P(s,t,\{e_1,e_2\})$ replacement path whose last edge is not in $\FTMBFS(R \cup S)$ is fully computed by the algorithm, and it last edge is added to the output subgraph. 
The set of replacement paths which the algorithm attempts to compute correspond to the triplets:
$$Q_t=\{ (s,e_1,e_2) ~\mid~ e_1 \in P_{\sigma}(s,t,e_2) \mbox{~~and~~} e_2 \in P_{\sigma}(s,t,e_1), ~s \in S\}.$$
By the assumption (I1,I2), each vertex $t$ knows the last $2\sigma$ edges of the path $P(s,t,e)$ for every $s \in S$ and every $e \in \pi_{2\sigma}(s,t)$, it can compute $Q_t$ (see Claim \ref{cl:know}). Observe that $|Q_t|=|S| \cdot \sigma^2$. 
The algorithm starts by letting each vertex exchange its $Q_t$ set with its neighbors. The BFS tokens $\BFS(s,G \setminus \{e_1,e_2\})$ are permitted to pass from a vertex $u$ to a vertex $v$ only if $(s,e_1,e_2)\in Q_v$. 
To control the congestion due to the simultaneous constructions of multiple BFS trees, the vertices share a random string $\mathcal{SR}$. Each BFS algorithm $\BFS(s,G \setminus \{e_1,e_2\})$ for every $e_1,e_2 \in G$ and $s \in S$ starts in phase $\tau_{s,e_1,e_2}$ chosen uniformly at random in the range $\{1,\ldots, \Theta(|S| \cdot \sigma^2)\}$. Using the seed $\mathcal{SR}$ and the IDs of $s,e_1,e_2$, each vertex can compute $\tau_{s,e_1,e_2}$. These $\tau_{s,e_1,e_2}$ values are $O(\log n)$-wise independent. 

Each BFS algorithm $\BFS(s,G \setminus \{e_1,e_2\})$ then starts in phase $\tau_{s,e_1,e_2}$, and proceeds in a speed of one hop per phase. Each phase consists of $\ell=\Theta(\log n)$ rounds. The rules for passing the BFS tokens of $\BFS(s,G \setminus \{e_1,e_2\})$ are as follows:
\begin{itemize}
\item Each vertex $v$ that is not sensitive\footnote{In the analysis, we show that in the case where there is $u \in N(v)$ for which $(s,e_1,e_2)\in Q_u$, $v$ can indeed detect that it is not sensitive.} to $e_1,e_2$ sends the token $\BFS(s,G \setminus \{e_1,e_2\})$ in round $\tau_{s,e_1,e_2}+\dist(s,v,G \setminus \{e_1,e_2\})$ to every neighbor $u \in N(v)$ satisfying that $(s,e_1,e_2)\in Q_u$. 

\item Every vertex $v$ that is sensitive to $(s,e_1,e_2)$ upon receiving the first BFS token $\BFS(s,G \setminus \{e_1,e_2\})$ in phase $i$ does as follows:
\begin{itemize}
\item Let $w$ be the minimum-ID vertex in $N(v)$ from which $v$ has received the BFS token in that phase. Then, $v$ adds the edge $(w,v)$ to the output structure $H$.
\item $v$ sends the token $\BFS(s,G \setminus \{e_1,e_2\})$ in phase $i+1$ to every neighbor $u \in N(v)$ satisfying that $(s,e_1,e_2)\in Q_u$. 
\end{itemize}
\end{itemize}
This completes the description of the algorithm. 
\\
\paragraph{Analysis.} 
Let $Q'_t \subset Q_t$ be defined by $Q'_t=\{ (s,e_1,e_2) ~\mid~ e_1 \in P_{\sigma/2}(s,t,e_2) \mbox{~~and~~} e_2 \in P_{\sigma/2}(s,t,e_1), ~s \in S\}.$
Let $w$ be the first sensitive vertex (see Def. \ref{def:sen-det-two-f}) w.r.t. $(s,e_1,e_2)$ on the replacement path $P(s,t,\{e_1,e_2\})$. Recall that $SD(s,t,\{e_1,e_2\})=P(s,t,\{e_1,e_2\})[w,t]$ is the sensitive-detour of $P(s,t,\{e_1,e_2\})$.
\begin{observation}
Any vertex $w' \in SD(s,t,\{e_1,e_2\})$ is sensitive to the two edges $e_1,e_2$.
\end{observation}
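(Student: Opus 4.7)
The plan is to argue by contradiction, leveraging the consistent tie-breaking of shortest paths to propagate ``non-sensitivity'' backwards from $w'$ to $w$, which will violate the defining property of $w$ as the first sensitive vertex.

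First I would unpack the geometry. Fix $w' \in SD(s,t,\{e_1,e_2\})=P(s,t,\{e_1,e_2\})[w,t]$. Because shortest-path ties are broken consistently (using vertex IDs) in the paper's convention, every prefix of a chosen shortest path is itself the chosen shortest path to the intermediate vertex. Applied to the unique replacement path $P(s,t,\{e_1,e_2\})$ in $G\setminus\{e_1,e_2\}$, this gives that the prefix of $P(s,t,\{e_1,e_2\})$ up to $w'$ is exactly $P(s,w',\{e_1,e_2\})$, and similarly the prefix up to $w$ is $P(s,w,\{e_1,e_2\})$. In particular, $w$ lies on $P(s,w',\{e_1,e_2\})$, and $P(s,w,\{e_1,e_2\})$ is the prefix of $P(s,w',\{e_1,e_2\})$ up to $w$.

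Next I would assume towards contradiction that $w'$ is \emph{not} sensitive to $(s,e_1,e_2)$. By Definition~\ref{def:sen-det-two-f}, this means $P(s,w',\{e_1,e_2\})$ coincides with either $P(s,w',e_1)$ or $P(s,w',e_2)$; without loss of generality assume the former. Then the path $P(s,w',\{e_1,e_2\})$ is actually a shortest $s$-$w'$ path in $G\setminus\{e_1\}$ (and still avoids $e_2$ by virtue of being a path in $G\setminus\{e_1,e_2\}$). Using consistency of tie-breaking once more, the prefix of this path up to $w$ is the chosen shortest $s$-$w$ path in $G\setminus\{e_1\}$, i.e.\ equals $P(s,w,e_1)$. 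But we already observed this same prefix equals $P(s,w,\{e_1,e_2\})$, so $P(s,w,\{e_1,e_2\}) = P(s,w,e_1)$.

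That identity contradicts the fact that $w$ is sensitive to $(s,e_1,e_2)$, which by Definition~\ref{def:sen-det-two-f} requires $P(s,w,\{e_1,e_2\})\notin\{P(s,w,e_1), P(s,w,e_2)\}$. The symmetric case $P(s,w',\{e_1,e_2\})=P(s,w',e_2)$ yields the analogous contradiction, so $w'$ must be sensitive, completing the argument. The only delicate point I anticipate is making clean use of the prefix-closure of consistent tie-breaking; this is the property that lets us transfer ``equals $P(s,\cdot,e_1)$'' from $w'$ back down to $w$, and once that is spelled out the rest is immediate.
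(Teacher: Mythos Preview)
Your proposal is correct and follows essentially the same approach as the paper: both argue by contradiction, use prefix-closure under consistent tie-breaking to identify $P(s,w',\{e_1,e_2\})$ with some $P(s,w',e_j)$, and then take the prefix at $w$ to conclude $P(s,w,\{e_1,e_2\})=P(s,w,e_j)$, contradicting that $w$ is sensitive. Your write-up is in fact a bit more explicit than the paper's about the prefix-closure property being applied in two different graphs ($G\setminus\{e_1,e_2\}$ and $G\setminus\{e_j\}$), which is the only subtle point.
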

\begin{proof}
Recall that $w$ is the first sensitive vertex on $P(s,t,\{e_1,e_2\})$, and thus the first vertex of the sensitive-detour.  Assume towards contradiction, that there exists a vertex $w' \in SD(s,t,\{e_1,e_2\})$ that is not sensitive to $(s,e_1,e_2)$. Let $P \in \{P(s,w',e_1), P(s,w',e_2)\}$ be such that $P=P(s,w',\{e_1,e_2\})$. 
By the uniqueness of the shortest paths, we have that $P(s,t,\{e_1,e_2\})[s,w']=P\circ P(s,t,\{e_1,e_2\})[w',t]$. We then have that $w \in P$ and thus $P[s,w]=P(s,w, \{e_1,e_2\})$ contradiction that $w$ is the first sensitive vertex on $P(s,t,\{e_1,e_2\})$. 
\end{proof}

\begin{claim}\label{cl:know}
By knowing (I1) and (I2), each vertex $t$ can compute the set $Q_t$.
\end{claim}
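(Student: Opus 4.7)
The plan is to argue that vertex $t$ can enumerate every triplet $(s,e_1,e_2)\in Q_t$ and verify its defining conditions using only the data in (I1) and (I2). The first step is a structural lemma: if $(s,e_1,e_2)\in Q_t$, then at least one of $e_1,e_2$ lies in $\pi(s,t)$, and any such tree edge necessarily falls within the near suffix $\pi_{2\sigma}(s,t)$. Indeed, $e_i\in P_\sigma(s,t,e_{3-i})$ yields $\dist(e_i,t,G)\le\sigma$, and if additionally $e_i\in\pi(s,t)$ then, since the BFS tree realizes shortest paths, $\dist(e_i,t,T_s)=\dist(e_i,t,G)\le\sigma$, placing $e_i$ inside $\pi_{2\sigma}(s,t)$. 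The case where \emph{both} edges are off-tree is impossible: it would force $P(s,t,e_1)=P(s,t,e_2)=\pi(s,t)$ and the two conditions of $Q_t$ would then demand $e_1,e_2\in\pi_\sigma(s,t)\subseteq\pi(s,t)$, a contradiction.

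Given this lemma, the enumeration is natural: for each $s\in S$, $t$ iterates over $e_1\in\pi_{2\sigma}(s,t)$, which is known from (I2), and then over $e_2\in P_\sigma(s,t,e_1)$, also supplied by (I2). The first defining condition of $Q_t$, namely $e_2\in P_\sigma(s,t,e_1)$, is automatic by construction, so the only remaining task is to verify the symmetric condition $e_1\in P_\sigma(s,t,e_2)$.

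The main obstacle is that (I2) does not directly furnish $P_\sigma(s,t,e_2)$ when $e_2$ lies outside $\pi_{2\sigma}(s,t)$. I would dispose of this via a case split. If $e_2\in\pi_{2\sigma}(s,t)$, then (I2) provides $P_\sigma(s,t,e_2)$ explicitly and membership is checked directly. Otherwise, $\dist(e_2,t,G)\le\sigma$ combined with $e_2\notin\pi_{2\sigma}(s,t)$ forces $e_2\notin\pi(s,t)$ by the contrapositive of the structural lemma, so $P(s,t,e_2)=\pi(s,t)$ and $P_\sigma(s,t,e_2)=\pi_\sigma(s,t)$; the condition $e_1\in P_\sigma(s,t,e_2)$ then collapses to $e_1\in\pi_\sigma(s,t)$, which $t$ can verify from (I2). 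Together, this shows that $t$ correctly identifies every member of $Q_t$ from the information in (I1) and (I2).
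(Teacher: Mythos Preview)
Your proof is correct and follows essentially the same approach as the paper. Both arguments first establish that for any $(s,e_1,e_2)\in Q_t$ at least one of $e_1,e_2$ must lie in the near suffix of $\pi(s,t)$ (ruling out the both-off-tree case by the same contradiction), then enumerate the second edge via the known suffix $P_\sigma(s,t,\cdot)$ from (I2), and finally verify the symmetric condition by a case split on whether the second edge is a near tree edge (use (I2) directly) or not (reduce to $\pi_\sigma(s,t)$). The only cosmetic difference is that the paper iterates over $\pi_\sigma(s,t)$ and explicitly notes ``the case $e_2\in\pi_\sigma(s,t)$ is analogous,'' whereas you iterate over $\pi_{2\sigma}(s,t)$ and leave the symmetrization implicit; since $Q_t$ is symmetric in $e_1,e_2$, this makes no substantive difference.
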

\begin{proof}
First assume that $e_1,e_2 \notin \pi_{\sigma}(s,t)$. We show that in this case $t$ can deduce that $(s,e_1,e_2) \notin Q_t$. If $e_1 \in P_{\sigma}(s,t,e_2)$ it must imply that $e_1 \in \pi(s,t)$ iff $e_1 \in \pi_{\sigma}(s,t)$.  In the same manner, if $e_2\in P_{\sigma}(s,t,e_1)$ it must imply that $e_2 \in \pi(s,t)$ iff $e_2 \in \pi_{\sigma}(s,t)$. Since $e_1,e_2 \notin \pi_{\sigma}(s,t)$, $t$ can conclude that $(s,e_1,e_2) \notin Q_t$.
Next assume that $e_1 \in \pi_{\sigma}(s,t)$. There are two subcases. If $e_2 \in P_{\sigma}(s,t,e_1)$, $q$ should check if also $e_1 \in P_{\sigma}(s,t,e_2)$. Since $e_2 \in P_{\sigma}(s,t,e_1)$ it must hold that $e_2 \in \pi(s,t)$ iff $e_2 \in \pi_{\sigma}(s,t)$. Thus, $t$ can verify if $e_2 \in \pi(s,t)$. If so, it has the path $P_{\sigma}(s,t,e_2)$. Otherwise, $P(s,t,e_2)$ is simply $\pi(s,t)$. 
The case where $e_2 \in  \pi_{\sigma}(s,t)$ is analogous. 
\end{proof}

To prove the correctness of the output structure, by Fact \ref{obs:dual-S} we need to show that $\LastE(P(s,t, \{e_1,e_2\}))$ is in $H$ for every $s \in S$ and every $e_1,e_2 \in E$. Throughout, we consider a fixed replacement path $P(s,t, \{e_1,e_2\})$ and assume w.l.o.g. that $e_1 \in \pi(s,t)$ and $e_2 \in P(s,t,e_1)$. 
\begin{lemma}\label{lem:caseone}
For every $(s,e_1,e_2) \notin Q'_t$ it holds that $\LastE(P(s,t, \{e_1,e_2\})) \in \FTMBFS(R)$.
\end{lemma}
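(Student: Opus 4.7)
My plan is to split the argument into two branches depending on whether the dual-failure replacement path collapses to a single-failure one. First, if $e_1 \notin P(s,t,e_2)$, then $P(s,t,e_2)$ already avoids both failing edges, so by uniqueness of shortest paths with consistent tie-breaking, $P(s,t,\{e_1,e_2\}) = P(s,t,e_2)$; since $s \in S$, Fact~\ref{obs:last-edge} applied to the single-failure structure $\FTMBFS(R \cup S)$ immediately yields $\LastE(P(s,t,e_2)) \in \FTMBFS(R \cup S)$. The symmetric situation $e_2 \notin P(s,t,e_1)$ is ruled out by the WLOG assumption stipulated just before the lemma.

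In the remaining branch, both $e_1 \in P(s,t,e_2)$ and $e_2 \in P(s,t,e_1)$, and $(s,e_1,e_2) \notin Q'_t$ means at least one of $e_1 \notin P_{\sigma/2}(s,t,e_2)$ or $e_2 \notin P_{\sigma/2}(s,t,e_1)$ holds. By renaming $e_1 \leftrightarrow e_2$ if needed, I may assume $e_2 \notin P_{\sigma/2}(s,t,e_1)$. Combined with $e_2 \in P(s,t,e_1)$, this forces $|P(s,t,e_1)| > \sigma/2$, and since $P(s,t,\{e_1,e_2\})$ is no shorter than $P(s,t,e_1)$ I obtain $|P(s,t,\{e_1,e_2\})| > \sigma/2$. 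By the standard Chernoff argument on the sampling rate $10 \log n /\sigma$, w.h.p.\ the $\sigma/2$-length suffix of $P(s,t,\{e_1,e_2\})$ contains some $r \in R$, which therefore satisfies $\dist(r,t,G\setminus\{e_1\}) \leq \dist(r,t,G\setminus\{e_1,e_2\}) \leq \sigma/2$.

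The crux is then to show $e_2 \notin P(r,t,e_1)$. Let $y$ be the endpoint of $e_2$ closer to $t$ on $P(s,t,e_1)$; by the sub-path property, $\dist(y,t,G\setminus\{e_1\})$ equals the length of $P(s,t,e_1)[y,t]$, which exceeds $\sigma/2$ because $e_2 \notin P_{\sigma/2}(s,t,e_1)$. If $e_2$ lay on $P(r,t,e_1)$, that shortest path would traverse $e_2$ with the same $y$ on its $t$-side (distances in $G\setminus\{e_1\}$ are global and tie-breaking is consistent), yielding $|P(r,t,e_1)| \geq \dist(y,t,G\setminus\{e_1\}) > \sigma/2$, contradicting $|P(r,t,e_1)| \leq \sigma/2$. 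Hence $P(r,t,e_1)$ avoids $e_2$, and by uniqueness of tie-broken shortest paths, $P(r,t,\{e_1,e_2\}) = P(r,t,e_1)$. Since $r \in P(s,t,\{e_1,e_2\})$, the sub-path property yields $P(s,t,\{e_1,e_2\})[r,t] = P(r,t,e_1)$, so $\LastE(P(s,t,\{e_1,e_2\})) = \LastE(P(r,t,e_1)) \in \FTMBFS(R \cup S)$ because $r \in R$. The main subtlety I expect is ensuring the $t$-side endpoint of $e_2$ is well-defined across different single-failure shortest-path computations in $G \setminus \{e_1\}$; this is handled by the global nature of distances in $G\setminus\{e_1\}$ and the paper's consistent tie-breaking convention.
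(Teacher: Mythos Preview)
Your argument is correct and follows the paper's overall strategy: locate a sampled vertex $r\in R$ close to $t$, show that the single-failure path $P(r,t,e_1)$ already avoids $e_2$, and then invoke the single-failure structure $\FTMBFS(R\cup S)$. One cosmetic slip: from $e_2\notin P_{\sigma/2}(s,t,e_1)$ you only obtain $\dist(y,t,G\setminus\{e_1\})\ge\sigma/2$, not a strict inequality; the contradiction is still immediate because $x$ must also lie on $P(r,t,e_1)$ and $\dist(x,t,G\setminus\{e_1\})=\dist(y,t,G\setminus\{e_1\})+1>\sigma/2$ (the paper avoids this by sampling from a $\sigma/4$-length suffix instead).

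The one substantive difference from the paper's written proof is \emph{where} you place $r$. You sample $r$ on the suffix of the dual-failure path $P(s,t,\{e_1,e_2\})$, while the paper's proof of this lemma samples $r$ on the suffix of the single-failure path $P(s,t,e_1)$. Your choice is the right one: it is precisely what guarantees $r\in P(s,t,\{e_1,e_2\})$, so that the sub-path property gives $\LastE(P(s,t,\{e_1,e_2\}))=\LastE(P(r,t,\{e_1,e_2\}))=\LastE(P(r,t,e_1))$. With the paper's placement of $r$ this final identification is not justified, since $r$ need not lie on $P(s,t,\{e_1,e_2\})$ at all; indeed the paper's own simplified argument in Lemma~\ref{lem:simple-dual} samples from the dual-failure path exactly as you do, so the proof of Lemma~\ref{lem:caseone} as written appears to carry a slip that your version corrects.
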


\begin{proof}
Since $(s,e_1,e_2) \notin Q'_t$, it holds that either $e_2 \notin P_{\sigma/2}(s,t,e_1)$ or $e_1 \notin P_{\sigma/2}(s,t,e_2)$. First assume that $e_2 \notin P_{\sigma/2}(s,t,e_1)$. This implies that $|P(s,t,e_1)|\geq \sigma/2$ thus the  $(\sigma/4)$-length suffix of $P(s,t,e_1)$ contains a vertex $r \in R$ w.h.p. Since $|P(r,t,e_1)|\leq \sigma/4$, we have that $e_2 \notin P(r,t,e_1)$. Thus, $P(r,t,\{e_1,e_2\})=P(r,t,e_1)$, and thus $\LastE(P(s,t,\{e_1,e_2\}) \in \FTMBFS(R)$.
In the same manner, assume that $e_1 \notin P_{\sigma/2}(s,t,e_2)$. In this case, if $e_1 \notin P(s,t,e_2)$ then $P(s,t,e_2)=P(s,t,\{e_1,e_2\})$ and thus $\LastE(P(s,t,e_2))\in \FTMBFS(S)$. 
Next assume also that  $e_1 \in P(s,t,e_2) \setminus P_{\sigma/2}(s,t,e_2)$. 
This implies that $|P(s,t,e_2)|\geq \sigma/2$ thus the $(\sigma/4)$-length suffix of $P(s,t,e_2)$ contains a vertex $r \in R$. Since $|P(r,t,e_2)|\leq \sigma/4$, we have that $e_1 \notin P(r,t,e_2)$. Thus, $P(r,t,\{e_1,e_2\})=P(r,t,e_2)$, and thus $\LastE(P(s,t,\{e_1,e_2\}) \in \FTMBFS(R)$.
\end{proof}

To complete the correctness argument, it remains to show that $\LastE(P(s,t, \{e_1,e_2\})) \in H$ for every $(s,e_1,e_2) \in Q'_t$. We do it in two steps, depending on the length of the sensitive-detour. 
\begin{claim}
Let $(s,e_1,e_2) \in Q'_t$. If $|SD(s,t,\{e_1,e_2\})|\geq \sigma/3$, then $\LastE(P(s,t,\{e_1,e_2\}))\in H$. 
\end{claim}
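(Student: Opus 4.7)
I would prove this by a sampling argument that reduces the dual-failure task to a single-failure statement already handled by $\FTMBFS(R \cup S) \subseteq H$ computed in the first step of the algorithm. Since $|SD(s,t,\{e_1,e_2\})| \geq \sigma/3$ and each vertex is sampled into $R$ independently with probability $10\log n/\sigma$, a Chernoff bound gives w.h.p.\ a sampled vertex $r \in R$ lying in the $(\sigma/6)$-length suffix of $SD(s,t,\{e_1,e_2\})$. By consistent tie-breaking of shortest paths, the sub-path $P(s,t,\{e_1,e_2\})[r,t]$ coincides with $P(r,t,\{e_1,e_2\})$, so $\LastE(P(s,t,\{e_1,e_2\})) = \LastE(P(r,t,\{e_1,e_2\}))$, and it suffices to prove the latter lies in $H$.

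The plan is then to show that $t$ is not sensitive to $(r,e_1,e_2)$ from the source $r$, i.e., that $P(r,t,\{e_1,e_2\}) = P(r,t,e_i)$ for some $i \in \{1,2\}$. Once established, Fact~\ref{obs:last-edge} applied to the single-failure structure $\FTMBFS(R \cup S)$ (in which $r\in R$ is a source) places $\LastE(P(r,t,e_i))$ in $H$. I would in fact aim for the stronger conclusion that the fault-free path $\pi(r,t,G)$ itself avoids both $e_1,e_2$, so that by uniqueness of shortest paths $\pi(r,t,G) = P(r,t,\{e_1,e_2\})$, and the BFS tree of $r\in R$ (contained in $\FTMBFS(R\cup S)$) contains its last edge. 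The argument would mirror Case~(2) in the proof of Lemma~\ref{lem:main-cor}: under the WLOG assumption $e_1\in\pi(s,t)$, both $r$ and $t$ lie in the subtree $T_s(y_1)$ of $T_s$, and any alleged $r$-$t$ shortest path through $e_1=(x_1,y_1)$ would be strictly longer than the route $r \to y_1 \to t$ inside $T_s(y_1)$, contradicting minimality; an analogous calculation should handle $e_2$.

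The main obstacle is the subtree-containment argument in the dual-failure regime. Unlike the single-fault case, where every sensitive vertex shares the same subtree $T_s(y)$, here sensitive vertices may lie in $T_s(y_1)$, in $T_s(y_2)$ when $e_2\in\pi(s,t)$, or reach $t$ via a detour off the BFS tree entirely when $e_2\notin T_s$. To handle this I would split into cases on the relative positions of $e_1,e_2$ with respect to $\pi(s,t)$, and leverage the membership $(s,e_1,e_2)\in Q'_t$, which places both failed edges in the $(\sigma/2)$-length suffixes of the relevant single-failure paths from $s$ to $t$. Combined with $\dist(r,t,G\setminus\{e_1,e_2\})\leq \sigma/6$ from the suffix sampling, this should rule out any shortcut through $e_1$ or $e_2$ strictly shorter than the detour suffix $P(r,t,\{e_1,e_2\})$, thereby forcing $\pi(r,t,G)$ to agree with it.
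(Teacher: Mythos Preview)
Your high-level plan --- sample $r\in R$ on the sensitive detour, argue that $P(s,t,\{e_1,e_2\})[r,t]=P(r,t,\{e_1,e_2\})$, and then reduce the dual-failure path to some single-failure $P(r,t,e_i)$ so that $\FTMBFS(R)$ supplies the last edge --- is exactly the paper's strategy. The gap is in how you justify the reduction.

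You aim for the stronger statement that $\pi(r,t,G)$ avoids both $e_1,e_2$, by running the Case-(2) subtree argument of Lemma~\ref{lem:main-cor} inside $T_s=\BFS(s,G)$. As you yourself note, this fails in general: a sensitive vertex $r$ need only satisfy $e_1\in P(s,r,e_2)$ and $e_2\in P(s,r,e_1)$, which does \emph{not} force $e_1\in\pi(s,r)$; it can happen that $e_2\in\pi(s,r)$ while $e_2\notin\pi(s,t)$, so $r$ and $t$ lie in different subtrees of $T_s$ with respect to both faults. Your proposed remedy via $Q'_t$ and distance bounds does not close this: membership in $Q'_t$ gives only \emph{upper} bounds $\dist(e_i,t,G\setminus\{e_{3-i}\})\le \sigma/2$, and you need \emph{lower} bounds on $\dist(e_i,t,G)$ or $\dist(e_i,r,G)$ to preclude $e_i\in\pi(r,t)$. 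The suffix bound $\dist(r,t,G\setminus\{e_1,e_2\})\le\sigma/6$ is an upper bound on $\dist(r,t,G)$, not on the length of a hypothetical shortcut through a fault, so it does not rule such a shortcut out.

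The paper sidesteps all of this by moving the subtree argument from $T_s$ to $\BFS(s,G\setminus\{e_1\})$. Because $r$ and $t$ are both sensitive, $e_2\in P(s,r,e_1)$ and $e_2\in P(s,t,e_1)$, so both lie below the lower endpoint of $e_2$ in this tree. The single-fault Case-(2) argument, applied verbatim inside $G\setminus\{e_1\}$, then yields that the $r$--$t$ shortest path in $G\setminus\{e_1\}$ avoids $e_2$; that is, $P(r,t,\{e_1,e_2\})=P(r,t,e_1)$, and $\FTMBFS(R)$ supplies its last edge. Note that this needs neither the hypothesis $(s,e_1,e_2)\in Q'_t$ nor that $r$ sit in the suffix of the detour --- any sampled $r$ on $SD(s,t,\{e_1,e_2\})$ works.
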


\begin{proof}
Let $r \in R$ be a sampled vertex on $SD(s,t,\{e_1,e_2\})$. Since $r$ is sensitive, $P(s,r,\{e_1,e_2\})\neq P(s,r,e_1)$ and thus $e_2 \in P(s,r,e_1)$. Thus, letting $e_2=(x,y)$, both $r$ and $t$ belong to the subtree rooted at $y$ in $\BFS(s, G \setminus \{e_1\})$. Concluding that $P(r,t, \{e_1,e_2\})=P(r,t, e_1)$. Since the algorithm includes in $H$ the subgraph $\FTMBFS(R)$, we have that $\LastE(P(r,t, e_1))\in H$, the claim holds. 
\end{proof}

For now on, we consider replacement paths $P(s,t, \{e_1,e_2\})$ such that $(s,e_1,e_2) \in Q'_t$ and with a short sensitive-detour, i.e., $|SD(s,t,\{e_1,e_2\})|\leq \sigma/3$. We show the following. 
\begin{claim}
Let $(s,e_1,e_2) \in Q'_t$ and $|SD(s,t,\{e_1,e_2\})|\leq \sigma/3$. Then,
$(s,e_1,e_2) \in Q_{w'}$ for every $w' \in SD(s,t,\{e_1,e_2\})$.
\end{claim}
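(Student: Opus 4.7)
Fix an arbitrary $w'\in SD(s,t,\{e_1,e_2\})$. By the preceding observation $w'$ is sensitive to $(s,e_1,e_2)$, which gives us the raw containments $e_1\in P(s,w',e_2)$ and $e_2\in P(s,w',e_1)$ for free; the content of the claim is that each of these edges actually lies in the $\sigma$-suffix of the respective avoiding path. By the symmetric roles of $e_1$ and $e_2$, I would argue only the first containment $e_1\in P_{\sigma}(s,w',e_2)$ in detail.

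The key tool is a two-hop triangle inequality inside the punctured graph $G\setminus\{e_2\}$. Write $e_1=(a,b)$ with $b$ the endpoint closer to $t$ along $P(s,t,e_2)$. The hypothesis $(s,e_1,e_2)\in Q'_t$ gives $e_1\in P_{\sigma/2}(s,t,e_2)$, hence $\dist(b,t,G\setminus\{e_2\})\le \sigma/2-1$. The hypothesis $|SD(s,t,\{e_1,e_2\})|\le \sigma/3$ provides a $t$-to-$w'$ walk of length at most $\sigma/3$ that already avoids both $e_1$ and $e_2$, and in particular lies in $G\setminus\{e_2\}$, so $\dist(t,w',G\setminus\{e_2\})\le \sigma/3$. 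Combining,
\[
\dist(b,w',G\setminus\{e_2\})\;\le\;\tfrac{\sigma}{2}+\tfrac{\sigma}{3}\;<\;\sigma.
\]

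To close the argument, I would invoke uniqueness of the globally tie-broken shortest paths in $G\setminus\{e_2\}$: since $e_1$ lies on $P(s,w',e_2)$, the sub-path of $P(s,w',e_2)$ from $b$ to $w'$ is itself the selected shortest $b$-$w'$ path in $G\setminus\{e_2\}$, and therefore contains exactly $\dist(b,w',G\setminus\{e_2\})$ edges. The same uniqueness forces $P(s,w',e_2)$ to traverse $e_1$ with the same $a\!\to\!b$ orientation as $P(s,t,e_2)$ (otherwise $P(s,a,e_2)$ would admit two different values). Consequently $e_1$ sits within the last $\sigma$ edges of $P(s,w',e_2)$, i.e.\ $e_1\in P_{\sigma}(s,w',e_2)$. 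Running the symmetric argument with $e_1,e_2$ interchanged and $G\setminus\{e_1\}$ in place of $G\setminus\{e_2\}$ yields $e_2\in P_{\sigma}(s,w',e_1)$, so $(s,e_1,e_2)\in Q_{w'}$.

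The step I expect to be the most delicate is the conversion in the last paragraph---moving between ``graph distance in a punctured graph'' and ``position along one specific selected path''. This is handled cleanly by the consistent tie-breaking convention adopted throughout the paper, under which every sub-segment of a selected replacement path is itself the selected replacement path between its endpoints; modulo that bookkeeping, the entire proof collapses to a single triangle inequality in $G\setminus\{e_2\}$ (resp.\ $G\setminus\{e_1\}$) combined with the two quantitative hypotheses on $Q'_t$ and on $|SD|$.
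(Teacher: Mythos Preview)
Your proof is correct and follows essentially the same route as the paper's: a triangle inequality in the punctured graph $G\setminus\{e_j\}$ combining the $\sigma/2$-suffix hypothesis from $Q'_t$ with the $\sigma/3$ bound on the sensitive detour, together with sensitivity of $w'$ to place the edge on the relevant avoiding path. The paper phrases the distance bound using $\dist(e_i,w',G\setminus\{e_j\})$ (the edge-to-vertex distance), which sidesteps the orientation bookkeeping you spell out in your final paragraph; otherwise the arguments are the same.
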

\begin{proof}
Fix $w'  \in SD(s,t,\{e_1,e_2\})$.
Since the detour is short it holds that $\dist(w',t, G \setminus \{e_1,e_2\})\leq \sigma/3$ for every $w' \in SD(s,t,\{e_1,e_2\})$. In addition, since $e_2 \in P_{\sigma/2}(s,t,e_1)$, we have that
\begin{equation}\label{eq:oneside}
\dist(e_2,w',G \setminus \{e_1\})\leq \dist(e_2,t,G \setminus \{e_1\})+\dist(t,w',G \setminus \{e_1,e_2\})\leq \sigma~.
\end{equation}
As $w'$ is sensitive, it holds that $e_2 \in P(s,w',e_1)$, and combining with Eq. (\ref{eq:oneside}) we have that 
$e_2 \in P_{\sigma}(s,w',e_1)$. In the same manner, since $e_1 \in P_{\sigma/2}(s,t,e_2)$ and $e_1 \in P(s,w',e_2)$, by the same reasoning we have that $e_1 \in P_{\sigma}(s,w',e_2)$.
We conclude that $(s,e_1,e_2) \in Q_{w'}$. 
\end{proof}
We next show that the BFS token $\BFS(s, G \setminus \{e_1,e_2\})$ arrives each vertex $w' \in SD(s,t,\{e_1,e_2\})$ in phase $\dist(s,w', G \setminus \{e_1,e_2\})+\tau_{s,e_1,e_2}$. Since for every vertex $w' \in SD(s,t,\{e_1,e_2\})$ it holds that $(s,e_1,e_2)\in Q_{w'}$, it is guaranteed that the BFS token $\BFS(s, G \setminus \{e_1,e_2\})$ arriving $w'$ in $SD(s,t,\{e_1,e_2\})$ in phase $i$ is sent to the next hop $w'' \in SD(s,t,\{e_1,e_2\})$ in phase $i+1$. Therefore it is sufficient to show that the first vertex, say $w$, on the sensitive-detour $SD(s,t,\{e_1,e_2\})$ receives the token $\BFS(s, G \setminus \{e_1,e_2\})$ in phase $\dist(s,w', G \setminus \{e_1,e_2\})+\tau_{s,\{e_1,e_2\}}$. Let $q$ be the neighbor of $w$ on $P(s,t,\{e_1,e_2\})$ not in $SD(s,t,\{e_1,e_2\})$. 
\begin{claim}\label{cl:cancomp}
$q$ sends to $w$ (first vertex on the sensitive-detour) the BFS token $\BFS(s, G \setminus \{e_1,e_2\})$ in phase $\dist(s,w, G \setminus \{e_1,e_2\})+\tau_{s,e_1,e_2}$. 
\end{claim}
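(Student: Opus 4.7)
The plan is to verify that $q$ has the information needed to execute the non-sensitive sending rule for the triplet $(s,e_1,e_2)$, and that this rule fires in a phase which delivers the BFS token to $w$ at exactly $\tau_{s,e_1,e_2} + \dist(s,w,G \setminus \{e_1,e_2\})$. I proceed in three steps: (a) $q$ locally detects that it is non-sensitive to $(s,e_1,e_2)$; (b) $q$ locally computes $\dist(s,q,G \setminus \{e_1,e_2\})$; (c) combined with the one-hop advance along $P(s,t,\{e_1,e_2\})$, the algorithm's rule yields the claimed arrival phase at $w$.

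For (a) and (b), observe first that since $w$ is the first sensitive vertex on $P(s,t,\{e_1,e_2\})$, its predecessor $q$ is non-sensitive, so $P(s,q,\{e_1,e_2\}) \in \{P(s,q,e_1),P(s,q,e_2)\}$. WLOG $P(s,q,\{e_1,e_2\}) = P(s,q,e_1)$, hence $\dist(s,q,G\setminus\{e_1,e_2\}) = \dist(s,q,G\setminus\{e_1\})$ and $e_2\notin P(s,q,e_1)$. Next I would show that $e_1,e_2$ both lie within $O(\sigma_2)$ of $q$ in $G$. Since $(s,e_1,e_2)\in Q'_t$ gives $\dist(e_j,t,G)\leq \sigma_2/2$ for $j=1,2$, and $\dist(t,q,G\setminus\{e_1,e_2\})\leq 1+|SD(s,t,\{e_1,e_2\})|\leq 1+\sigma_2/3$, the triangle inequality yields $\dist(e_j,q,G)\leq \sigma_2$ (up to tunable constants). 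Therefore, if $e_1\in\pi(s,q)$ then $e_1\in\pi_{2\sigma_2}(s,q)$, and (I1) hands $q$ the value $\dist(s,q,G\setminus\{e_1\})$; otherwise $\dist(s,q,G\setminus\{e_1\})=\dist(s,q,G)$, which $q$ reads from $T_s$. Similarly, (I2) provides $q$ with $P_{\sigma_2}(s,q,e_1)$, and any hypothetical occurrence of $e_2$ on $P(s,q,e_1)$ must fall inside this suffix by the same $O(\sigma_2)$ distance bound, so $q$ can locally certify $e_2\notin P(s,q,e_1)$ and conclude it is non-sensitive. A symmetric argument covers the alternative WLOG choice $P(s,q,\{e_1,e_2\})=P(s,q,e_2)$.

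For (c), since $q$ and $w$ are neighbors, $q$ holds $Q_w$ from the preliminary $Q$-set exchange and, by the preceding claim, $(s,e_1,e_2)\in Q_w$. The non-sensitive rule therefore fires at $q$: it initiates the send in phase $\tau_{s,e_1,e_2}+\dist(s,q,G\setminus\{e_1,e_2\})$, so the token arrives at $w$ in the next phase, namely $\tau_{s,e_1,e_2}+\dist(s,q,G\setminus\{e_1,e_2\})+1$. Because the edge $(q,w)$ lies on $P(s,w,\{e_1,e_2\})$, we have $\dist(s,w,G\setminus\{e_1,e_2\})=\dist(s,q,G\setminus\{e_1,e_2\})+1$, matching the claimed arrival phase exactly.

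The main obstacle is step (a): guaranteeing that $q$'s non-sensitivity is decidable with certainty from its bounded $O(\sigma_2)$-radius view. The argument hinges on calibrating the constants in the distance bounds so that any witness of $q$'s sensitivity --- an occurrence of $e_2$ somewhere on $P(s,q,e_1)$, or of $e_1$ on $P(s,q,e_2)$ --- is forced to lie inside the suffix $P_{\sigma_2}(s,q,\cdot)$ that $q$ stores via (I2). Without this exhaustive local coverage, $q$ could fail to rule out a deeply buried copy of the other failing edge and therefore would not fire the non-sensitive rule on time, which would break the one-hop-per-phase propagation along the sensitive-detour.
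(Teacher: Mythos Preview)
Your proposal is correct and follows essentially the same approach as the paper: show that $q$ can locally certify its non-sensitivity to $(s,e_1,e_2)$ and compute $\dist(s,q,G\setminus\{e_1,e_2\})$ from the (I1,I2) information, then invoke the non-sensitive sending rule. The paper routes the triangle inequality through $w$ (using $(s,e_1,e_2)\in Q_w$ from the preceding claim) rather than through $t$, but this is an inessential variation; one small tightening is that the suffix-membership check needs $\dist(e_2,q,G\setminus\{e_1\})\leq\sigma$ rather than the $G$-distance you state, which follows just as easily from $Q'_t$.
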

\begin{proof}
By the description of the algorithm, it is sufficient to show that $q$ knows that (i) it is not sensitive to $e_1,e_2$ and (ii) its distance $\dist(s,q,G \setminus \{e_1,e_2\})$. 

We first claim that if $q$ would have been sensitive to $(s,e_1,e_2)$ then it must have hold that $e_1 \in P_{2\sigma}(s,t,e_2)$ and $e_2 \in P_{2\sigma}(s,t,e_1)$. To show this, assume that $q$ is sensitive to $(s,e_1,e_2)$ and thus $e_1 \in P(s,q,e_2)$ and $e_2 \in P(s,q,e_1)$.
Since $e_1 \in P_{\sigma}(s,w,e_2)$ and $e_2 \in P_{\sigma}(s,w,e_1)$, by the triangle inequality we have that
$$\dist(e_1, q,G\setminus \{e_2\})\leq \dist(e_1, w,G\setminus \{e_2\})+1\leq \sigma+1~.$$
We therefore conclude that by assumption (I1,I2), $q$ knows that it is not sensitive to $(s,e_1,e_2)$. 
Next, we show that $q$ knows the distance $\dist(s,q,G\setminus \{e_1, e_2\})$.
Assume first that $e_1 \notin P(s,q,e_2)$. If $e_2 \in \pi(s,q)$, then it must be that $e_2 \in \pi_{2\sigma}(s,q)$. Thus $q$ can tell if  $e_2 \in \pi(s,q)$ and by assumption (I1), $q$ knows $\dist(s,q,G\setminus \{e_2\})$ which in this case equals to  $\dist(s,q,G\setminus \{e_1, e_2\})$. Otherwise, if $e_2 \notin \pi(s,q)$, we have that $\dist(s,q,G)=\dist(s,q,G\setminus \{e_1, e_2\})$. The proof works analogously when assuming that $e_2 \notin P(s,q,e_1)$. We conclude that by knowing (I1,I2), $q$ can compute the distance $\dist(s,q,G\setminus \{e_1, e_2\})$. 
\end{proof}
\begin{corollary}
For every path $P(s,t,\{e_1,e_2\})$ satisfying that (i) $(s,e_1,e_2)\in Q'_t$ and (ii) $|SD(s,t, \{e_1,e_2\})|\leq\sigma/3$, it holds that the detour $SD(s,t, \{e_1,e_2\})$ is fully computed by the algorithm (i.e., the BFS token propagates through all the vertices on the sensitive-detour). Consequently, $\LastE(P(s,t, \{e_1,e_2\}))\in H$. 
\end{corollary}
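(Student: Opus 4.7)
The plan is a direct induction along the sensitive-detour that combines the three preceding claims. The unlabeled observation immediately after Definition \ref{def:sen-det-two-f} states that every $w' \in SD(s,t,\{e_1,e_2\})$ is sensitive to $(s,e_1,e_2)$, so each such vertex follows the sensitive branch of the propagation rule: upon first receipt of the token it adds an incoming edge to $H$ and, in the next phase, forwards the token to every neighbor $u$ with $(s,e_1,e_2)\in Q_u$. The unlabeled claim immediately preceding this corollary supplies the second ingredient: under hypotheses (i) and (ii), $(s,e_1,e_2)\in Q_{w'}$ for every $w'\in SD(s,t,\{e_1,e_2\})$, so the ``permission'' to forward is in place along every hop of the detour.

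For the base case I would let $w$ be the first vertex of $SD(s,t,\{e_1,e_2\})$ and let $q$ be its predecessor on $P(s,t,\{e_1,e_2\})$. Claim \ref{cl:cancomp} already shows that $q$ correctly detects that it is not sensitive and locally knows $\dist(s,q,G\setminus\{e_1,e_2\})$. Since $(s,e_1,e_2)\in Q_w$, the non-sensitive rule makes $q$ send the token $\BFS(s,G\setminus\{e_1,e_2\})$ to $w$ in phase $\dist(s,q,G\setminus\{e_1,e_2\})+\tau_{s,e_1,e_2}$, so $w$ first receives it at phase $\dist(s,w,G\setminus\{e_1,e_2\})+\tau_{s,e_1,e_2}$. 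For the inductive step, suppose $w'\in SD(s,t,\{e_1,e_2\})$ first receives the token in phase $\dist(s,w',G\setminus\{e_1,e_2\})+\tau_{s,e_1,e_2}$. Its successor $w''$ on the detour satisfies $(s,e_1,e_2)\in Q_{w''}$, so the sensitive rule forwards the token to $w''$ in the next phase, which equals $\dist(s,w'',G\setminus\{e_1,e_2\})+\tau_{s,e_1,e_2}$ because $w''$ is the next hop on a shortest path in $G\setminus\{e_1,e_2\}$ and no route can deliver the token any earlier.

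Applied all the way to the last vertex, $t$ first receives the token at phase $\dist(s,t,G\setminus\{e_1,e_2\})+\tau_{s,e_1,e_2}$ and adds the incoming edge from the minimum-ID sender among those that just delivered the token. Because the canonical path $P(s,t,\{e_1,e_2\})$ is itself defined by the same consistent ID-based tie-breaking scheme used in the BFS constructions, this minimum-ID edge is exactly $\LastE(P(s,t,\{e_1,e_2\}))$. Hence $\LastE(P(s,t,\{e_1,e_2\})) \in H$, as claimed.

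The main obstacle I expect is verifying that the ``first receipt'' phase of each sensitive vertex on the detour is indeed $\dist(s,w',G\setminus\{e_1,e_2\})+\tau_{s,e_1,e_2}$, and not strictly later. This requires both directions: no earlier arrival is possible because the token travels at one hop per phase from the global starting phase $\tau_{s,e_1,e_2}$ and $\dist(s,w',G\setminus\{e_1,e_2\})$ is by definition the minimum number of hops; and no later arrival occurs because the preceding claim guarantees that no intermediate vertex on the detour stalls the propagation for lack of having $(s,e_1,e_2)$ in its $Q$ set. These two facts together close the induction and deliver the token through the entire sensitive-detour.
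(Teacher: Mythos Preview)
Your proposal is correct and follows essentially the same approach as the paper. The paper does not give a separate proof of this corollary; instead, the argument is sketched in the paragraph immediately preceding Claim~\ref{cl:cancomp} (every $w'\in SD(s,t,\{e_1,e_2\})$ has $(s,e_1,e_2)\in Q_{w'}$, so the token propagates hop by hop once received, reducing the statement to showing that the first sensitive vertex $w$ receives the token on time, which is exactly Claim~\ref{cl:cancomp}). Your induction along the detour, with Claim~\ref{cl:cancomp} as the base case and the preceding claim ensuring the forwarding permission at every hop, is precisely this argument made explicit, and your added discussion of the ``no earlier arrival'' direction and the ID-based tie-breaking for the last edge fills in details the paper leaves implicit.
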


\paragraph{Round Complexity.} We next analyze the round complexity. 
The computation of the structure $\FTMBFS(R \cup S)$ takes $O(\sqrt{(|R|+|S|)n}+D)=\widetilde{O}(n^{7/8}\cdot |S|^{1/8}+D)$ rounds.
Running the truncated BFS trees $\BFS(s,G\setminus \{e_1,e_2\})$ takes in total $\widetilde{O}(D+(\sigma_2)^2\cdot |S|)$. Combining with the round complexity of Lemma \ref{lem:short-dist} yields the desired bound of $\widetilde{O}(D+|S|^{5/4}n^{3/4}+|S|^{1/8}n^{7/8})$. 
\\
\paragraph{Size.} The total number of edges in $\FTMBFS(R \cup S)$ is bounded by $O(\sqrt{|R|+|S|}\cdot n^{3/2})$.
In addition, each vertex $t$ adds at most $|Q_t|=O(|S|\cdot \sigma^2)$ edges to $H$. 
Plugging $\sigma=(n/|S|)^{1/4}$ and $|R|=O(n\log n/\sigma)$ yields the desired edge bound of $\widetilde{O}(|S|^{1/8}n^{15/8})$.  

\subsection{Learning Distances and Short RP Segments of Near Faults}\label{sec:short-dist}
In this subsection we fill in the missing piece of the algorithm by proving Lemma \ref{lem:short-dist}, and thus establishing Theorem \ref{thm:FT-MBFS-dist-dual}. The computation of the information (I1,I2) for every vertex $t$ is done in two key steps depending on the structure of the $P(s,t,e)$ path.

A replacement-path $P(s,t,e)$ for $e \in \pi_{\sigma_2}(s,t)$ is said to be \emph{easy} if $|SD(s,t,e)|\leq \sigma_1$. Otherwise, the path $P(s,t,e)$ for $e \in \pi_{\sigma_2}(s,t)$ is \emph{hard}. 
\\
\paragraph{Computing the information for easy replacement paths.}
We will present a somewhat stronger algorithm that computes (I1,I2) for every $P(s,t,e)$ paths satisfying that
$e \in \pi_{\sigma_1}(s,t)$ (rather than just $e \in \pi_{\sigma_2}(s,t)$).
The algorithm simply applied the second step of the single-failure \FTMBFS\ algorithm with parameter $\sigma=8\sigma_1$. Recall that in this phase, a partial collection of replacement paths is computed which is characterized by the given parameter $\sigma$. By the proof of Lemma \ref{lem:main-cor} (Case (3)), we have that each $t$ knows $\dist(s,t,G \setminus \{e\})$ for every \emph{easy} replacement path. It therefore remains for it to learn also the $\sigma_2$-length suffix of these paths. We next show that this can be done by a simple extension of the algorithm.

\begin{claim}\label{cl:easy-easy}
Within extra $\widetilde{O}(D+\sigma_1 \cdot \sigma_2 \cdot |S|)$ rounds, 
every vertex $t$ can learn the $\sigma_2$-length suffix of every easy replacement path $P(s,t,e)$, for every $e \in \pi_{\sigma_1}(s,t)$ and $s \in S$.
\end{claim}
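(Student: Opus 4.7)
The plan is to piggyback a suffix payload on each BFS token of the single-failure algorithm from Theorem~\ref{thm:FT-MBFS-dist}, already executed in the step preceding this claim with parameter $\sigma = 8\sigma_1$ (so $\sigma' = 24\sigma_1$). For an easy replacement path $P(s,t,e)$ with $e \in \pi_{\sigma_1}(s,t)$, Case~(3) in the proof of Lemma~\ref{lem:main-cor} already guarantees that the token $\BFS(s, G \setminus \{e\})$ traverses the sensitive-detour $SD(s,t,e)$ hop-by-hop from the issuing neighbor $q$ all the way to $t$. The plan is to augment each such token so that the sequence of vertex IDs along this tail segment is recorded as the token advances.

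Concretely, when $q$ initiates the token toward the sensitive neighbor $w$, I would have it attach the $\sigma_2$-length suffix of $\pi(s,q)$ as the initial payload; this list is locally available to $q$ because $\Rel(s,q) = \pi_{24\sigma_1}(s,q)$ has length at least $\sigma_2$ (since $\sigma_1 \geq \sigma_2$). Each subsequent vertex $v$ that forwards the token appends its own identifier to the payload and trims the oldest entry, keeping the payload of length $\sigma_2$. When $t$ finally receives the token, it appends itself and reads off exactly the $\sigma_2$-length suffix of $P(s,t,e)$.

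For the round analysis, the augmented payload blows each message up from $O(\log n)$ bits to $O(\sigma_2 \log n)$ bits, so I stretch each phase of the schedule from $\ell = O(\log n)$ rounds to $\ell' = O(\sigma_2 \log n)$ rounds. The congestion bound of Lemma~\ref{lem:many-BFS} applies unchanged, since I reuse the same random offsets $\tau_{s,e}$; each edge still needs to carry $O(\log n)$ tokens per phase, which now comfortably fit in the lengthened phase. The number of phases remains $O(\sigma' \cdot |S|) = O(\sigma_1 |S|)$, contributing $\widetilde{O}(\sigma_1 \sigma_2 |S|)$ extra rounds, which together with the $\widetilde{O}(D)$ for seed broadcast and the $\widetilde{O}(D + \sigma_1 |S|)$ already paid by the pipeline of Claim~\ref{cl:fast-pre} matches the claimed bound.

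The main obstacle I expect is the boundary case $|SD(s,t,e)| < \sigma_2$, in which the $\sigma_2$-length suffix of $P(s,t,e)$ spills past the start of the sensitive-detour into the tree segment $\pi(s,q)$. The mitigation --- pre-loading the payload with the suffix of $\pi(s,q)$ at $q$ --- is exactly why it is important that the single-failure algorithm was invoked with the larger parameter $\sigma = 8\sigma_1$ rather than something of order $\sigma_2$: $q$ must already hold enough of its own tree path locally to seed the payload, and this is guaranteed by $\sigma' = 24\sigma_1 \geq \sigma_2$.
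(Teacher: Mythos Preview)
Your correctness argument is fine and matches the paper's in spirit: both augment each $\BFS(s,G\setminus\{e\})$ token so that it carries the last $O(\sigma_2)$ vertex identifiers along the replacement path, and both rely on Case~(3) of Lemma~\ref{lem:main-cor} to guarantee the token actually reaches $t$ for easy paths. Your handling of the boundary case $|SD(s,t,e)|<\sigma_2$ by pre-loading the payload with the suffix of $\pi(s,q)$ is correct and in fact more explicit than the paper's description.

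The gap is in your round analysis. You assert that ``the number of phases remains $O(\sigma'\cdot|S|)=O(\sigma_1|S|)$,'' but this is not so: in the schedule of the single-failure algorithm, the token $\BFS(s,G\setminus\{e\})$ is initiated by $q$ only at phase $\dist(s,q,G)+\tau_{s,e}$, and $\dist(s,q,G)$ can be $\Theta(D)$. Hence the total number of phases is $O(D+\sigma'|S|)$, not $O(\sigma'|S|)$. Stretching every phase to $O(\sigma_2\log n)$ rounds therefore yields $\widetilde{O}(\sigma_2 D+\sigma_1\sigma_2|S|)$ rounds, with an unwanted $\sigma_2 D$ term that does not match the claimed bound and can dominate when $D$ is large.

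The paper avoids this by not stretching phases. It treats each modified BFS as a single algorithm of dilation $O(D)$ that pipelines $O(\sigma_2)$ messages per edge, observes that only $O(\sigma_1|S|)$ such algorithms touch any given edge (by the $\Rel$ rule), so the total per-edge congestion is $O(\sigma_1\sigma_2|S|)$, and then invokes Theorem~\ref{thm:delay} directly to get $\widetilde{O}(D+\sigma_1\sigma_2|S|)$. Operationally this amounts to enlarging the random-delay range from $O(\sigma'|S|)$ to $O(\sigma_1\sigma_2|S|)$ and sending the $\sigma_2$ payload entries over $\sigma_2$ consecutive $O(\log n)$-round phases rather than in one stretched phase; the $D$ term is then paid additively, not multiplicatively.
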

\begin{proof}
By applying the standard \FTMBFS\ algorithm with parameter $8\sigma_1$, every vertex $t$ receives the BFS token $\BFS(s, G \setminus \{e_1\})$ provided that $e \in \pi_{\sigma_1}(s,t)$ and $|SD(P(s,t,e))|\leq \sigma_1$. For every vertex $t$, let 
$$\widetilde{Q}_t=\{(s,e) ~\mid~ e \in \pi_{\sigma}(s,t) \mbox{~and~} t \mbox{~received~the token~} \BFS(s, G \setminus \{e\})\}.$$ Each $t$ exchange this set with its neighbors. Note that by Case (3) of Lemma \ref{lem:main-cor}, the set of easy paths are contained in $\widetilde{Q}_t$ (but this set might contain even $(s,e)$ pairs that correspond to hard replacement-paths).

Next, we run a modified variant for each BFS algorithm $\BFS(s,G \setminus \{e\})$ in which each vertex also learns its last $2\sigma_2$ edges on its BFS path from $s$. A single modified BFS algorithm still runs in $O(D)$ rounds, but it passes $O(\sigma_2)$ messages per edge, rather than $O(1)$ many messages as in the standard BFS computation. 
The modified BFS algorithms $\BFS(s,G \setminus \{e\})$ for each $s$ and $e$ are computed simultaneously using the random delay technique, passing the $\BFS(s,G \setminus \{e\})$ messages according to the same rules as in the \FTMBFS\ algorithm. The only difference is that each BFS algorithm sends now $\sigma_2$ messages on every edge, rather than a constant number of messages (as in a standard BFS computation). 

Since each vertex $t$ needs to get information from $\sigma_1 \cdot |S|$ modified BFS algorithms, and from each such algorithm it needs to receive $\sigma_2$ messages (corresponding to its $\sigma_2$ last edges on its path from the root), the total edge congestion is bounded by $O(\sigma_1 \cdot \sigma_2 \cdot |S|)$. Using the random delay approach, this can be done in $\widetilde{O}(D+\sigma_1 \cdot \sigma_2 \cdot |S|)$ rounds, w.h.p. 
\end{proof}

\paragraph{Computing the information for hard replacement paths.}
It remains to consider the hard replacement paths $P(s,t,e)$. I.e., paths for which $e \in \pi_{\sigma_2}(s,t)$ and their sensitive-detour is of length at least $\sigma_1$. Unlike the previous algorithm, here we might not learn the distances $\dist(s,t,G\setminus \{e\})$ for edges $e \in \pi_{\sigma_1}(s,t)\setminus \pi_{\sigma_2}(s,t)$. We assume that this step is applied after already computing the information for the easy replacement paths.

Let $R$ be a random sample of $O(n\log n/\sigma_1)$ vertices. The algorithm computes BFS trees $T_r=\BFS(r,G)$ for every $r \in R$. In addition, each vertex also learns its last $\sigma_2$ edges on each $\pi(r,t,T_r)$ paths. Using the random delay approach, this can be done in $\widetilde{O}(D+(n\log n/\sigma_1)\cdot \sigma_2)$ rounds.
\begin{lemma}
One can compute LCA (Least Common Ancestor) labels in each BFS tree $T_s$, $s \in S$ in total time $\widetilde{O}(D+S)$. The size of each LCA label is $O(\log^2 n)$ bits (per tree $T_s$).
\end{lemma}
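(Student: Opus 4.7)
The plan is to use a standard heavy-path decomposition based LCA labeling scheme (in the style of Harel--Tarjan, or Peleg's tree labels from \cite{Peleg:2000}). For a tree on $n$ vertices, each vertex is assigned a label of $O(\log^2 n)$ bits that records, in order, the sequence of heavy paths traversed on the root-to-vertex journey together with the depth at which each new heavy path is entered. Since any root-to-vertex path intersects at most $O(\log n)$ distinct heavy paths, each label fits in $O(\log^2 n)$ bits, and the LCA of two vertices in $T_s$ can be recovered locally from their two labels by locating the last common heavy path and comparing the recorded depths.

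For a single tree $T_s$ I would compute these labels in two sweeps. Sweep (A) is a bottom-up aggregation in which every vertex forwards its subtree size to its parent; once this completes in $O(D)$ rounds, each vertex locally identifies its heavy child. Sweep (B) is a top-down propagation in which each vertex transmits its own label to each child: a child on the heavy edge inherits the parent's label with the last depth incremented, while a child on a light edge inherits it with a fresh heavy-path identifier appended. Sweep (A) uses a single $O(\log n)$-bit message per edge and runs in $O(D)$ rounds; sweep (B) ships an $O(\log^2 n)$-bit label per edge, which I would pipeline by breaking each label into $O(\log n)$ chunks of $O(\log n)$ bits and passing them hop by hop, costing $\widetilde{O}(D)$ rounds per tree.

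To compute the labels for all $s\in S$ in parallel, I invoke the random-delay scheduler of Theorem~\ref{thm:delay} on the $|S|$ per-tree labeling subroutines. Each subroutine has dilation $\widetilde{O}(D)$, and each graph edge is traversed $O(1)$ times per sweep per tree with only an $O(\log n)$ multiplicative factor from the label chunking in sweep (B). The total congestion on any single edge is therefore $\widetilde{O}(|S|)$ over all $|S|$ trees. Applying Theorem~\ref{thm:delay} then yields a combined round complexity of $\widetilde{O}(D+|S|)$, matching the stated bound, and produces labels of $O(\log^2 n)$ bits per tree at every vertex.

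The main (mild) subtlety will be accounting for the $O(\log^2 n)$-bit label size while keeping the per-edge congestion estimate $\widetilde{O}(|S|)$ rather than $\widetilde{O}(|S|\log n)$; this is absorbed into the $\widetilde{O}$ notation via the pipelined chunking in sweep (B), and does not affect the final bound. No other part of the argument is delicate: both sweeps are standard tree aggregations, each uses $O(1)$ logical messages per edge per tree, and all the machinery needed is already available from Theorem~\ref{thm:delay}.
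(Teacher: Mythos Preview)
Your proposal is correct and follows essentially the same approach as the paper: both use heavy-light decomposition to produce $O(\log^2 n)$-bit LCA labels per tree via top-down propagation of compressed root-to-vertex path descriptions, and both parallelize across the $|S|$ trees using the random-delay scheduler of Theorem~\ref{thm:delay}. The paper's proof is just a three-line sketch pointing to \cite{dory2019improved}, whereas you spell out the two sweeps explicitly, but the underlying method is identical.
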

\begin{proof}
Computing the LCA labels for single tree $T_s$ can be done in $\widetilde{O}(D)$ rounds and sending $\widetilde{O}(1)$ along each tree edge.
This can be done by computing the heavy-light decomposition and sending each vertex a compressed representation of its path from the root. Ideas along this line appear in \cite{dory2019improved}. 
To compute LCA w.r.t. $S$ trees simultaneously, we simply apply the random delay approach. 
\end{proof}
Consider an hard replacement-path $P(s,t,e)$ and let $q$ be the neighbor before $w$ on the path, where $w$ is the first sensitive vertex on $P(s,t,e)$. Let $e=(x,y)$. We claim the following, see Fig. \ref{fig:hard-path} for an illustration. 
\begin{claim}\label{cl:save}
For every hard replacement path $P(s,t,e)$, there must be two vertices $r_1,r_2$ such that (i) $\dist(r_1,r_2,G)\leq \sigma_1/16$, (ii) $r_1$ is not sensitive to $e$ and $r_2$ is sensitive to $e$ and (iii) $e \notin \pi(r_1,r_2)$.
\end{claim}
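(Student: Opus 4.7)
The strategy is to show, with high probability over the random sample $R$, that a pair $r_1,r_2 \in R$ with the three desired properties exists near the non-sensitive/sensitive transition on $P(s,t,e)$. Let $w$ be the first sensitive vertex and $q$ its immediate non-sensitive predecessor on $P(s,t,e)$, and recall $e=(x,y)$ with $x$ closer to $s$.

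I would first produce $r_2$. Since $|SD(s,t,e)|\geq \sigma_1$, the initial segment of the sensitive-detour consisting of its first $\lceil\sigma_1/32\rceil$ vertices gives $\Theta(\sigma_1)$ candidates, all sensitive to $e$ and all within $G$-distance $\sigma_1/32$ of $w$. With sampling probability $10\log n/\sigma_1$, the expected number of samples in this segment is $\Omega(\log n)$, and a Chernoff bound guarantees w.h.p.\ a sampled vertex $r_2\in R$ inside it. Symmetrically for $r_1$: the last $\Theta(\sigma_1)$ vertices of the non-sensitive prefix $P(s,t,e)[s,q]$ are all non-sensitive (any vertex strictly before $w$ on $P(s,t,e)$ is non-sensitive by the choice of $w$) and lie within $G$-distance $\sigma_1/32$ of $q$, yielding w.h.p.\ a sampled non-sensitive $r_1\in R$ close to $q$. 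A degenerate subcase in which $P(s,t,e)[s,q]$ has length less than $\sigma_1/32$ must be handled separately: in that regime the entire $P(s,t,e)$ is essentially the long sensitive detour, and one can either appeal to additional nearby non-sensitive vertices produced by the BFS structure $\FTMBFS(R\cup S)$ already available from the first phase of the algorithm, or reduce the case to the easy-path setting treated in Claim~\ref{cl:easy-easy}.

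Condition (i) is then immediate from the triangle inequality $\dist(r_1,r_2,G)\leq \dist(r_1,q,G)+1+\dist(w,r_2,G)\leq \sigma_1/16$, and (ii) holds by construction. The main obstacle is (iii), showing $e\notin\pi(r_1,r_2)$ under the consistent tie-breaking. The concrete detour $r_1\leadsto q\to w\leadsto r_2$ along $P(s,t,e)$ witnesses a path of length at most $\sigma_1/16$ that avoids $e$; any competing path using $e$ contributes at least $\dist(r_1,x,G)+1+\dist(y,r_2,G)$. Because $r_1$ lies on the shortest-path prefix $\pi(s,q)$ (so $\dist(r_1,x,G)$ is comparable to $\dist(r_1,q,G)+\dist(q,x,G)$) while $r_2\in T_s(y)$ sits strictly beneath $y$ in $T_s$ (so $\dist(y,r_2,G)$ matches its $T_s$-depth under $y$), careful bookkeeping aligned with Fig.~\ref{fig:hard-path} should force the through-$e$ path to be strictly longer than $\sigma_1/16$. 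Making this strict inequality hold in every case (including equality ties where the consistent tie-breaking might still pick the through-$e$ option) is the delicate step and is where I expect most of the technical work to lie; the chosen constant $1/16$ in (i) is presumably tuned to leave enough slack for this comparison.
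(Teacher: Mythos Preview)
Your high-level plan---sample $r_2$ from the first $\Theta(\sigma_1)$ vertices of the sensitive detour and $r_1$ from the last $\Theta(\sigma_1)$ vertices of the non-sensitive prefix---matches the paper's. But you are missing the one observation that does all the real work, and without it both your ``degenerate subcase'' and your treatment of (iii) are genuine gaps.

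The missing observation is this: for every $w'$ on the $(\sigma_1/8)$-length prefix of $SD(s,t,e)$, the edge $e$ satisfies $e\notin\pi_{\sigma_1/8}(s,w')$, i.e.\ $\dist(e,w',G)\geq \sigma_1/8$. This is exactly where the \emph{hard} hypothesis $e\in\pi_{\sigma_2}(s,t)$ is used (you never invoke it): if $e$ were within $\sigma_1/8$ of $w'$ in $T_s$, then the $T_s$-path between $w'$ and $t$ would avoid $e$ and have length at most $\sigma_1/8+\sigma_2\leq\sigma_1/4$, contradicting $|P(s,t,e)[w',t]|\geq 7\sigma_1/8$. Once you have this, both problems dissolve. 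First, applying it to $w$ gives $\dist(s,w,G)\geq\sigma_1/8$, hence $\dist(s,q,G)\geq\sigma_1/8-1$, so the non-sensitive prefix is always long enough to sample $r_1$; your ``degenerate subcase'' simply cannot occur, and your proposed fallbacks (appealing to $\FTMBFS(R\cup S)$ or to the easy-path case) are neither needed nor applicable---the path is hard by assumption. Second, for (iii) you get a one-line argument: $\dist(e,r_2,G)\geq\sigma_1/8>\sigma_1/16\geq\dist(r_1,r_2,G)$, so $e$ cannot lie on any $r_1$--$r_2$ shortest path. Your proposed direct length comparison is the hard way around; the ``delicate bookkeeping'' you anticipate is precisely this lemma, and once stated it is not delicate at all.
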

\begin{proof}
We claim that for every vertex $w'$ appearing on the $(\sigma_1/8)$-length prefix of $SD(s,t,e)$ it holds that 
$e \notin \pi_{\sigma_1/8}(s,w')$. Assume towards contradiction otherwise, since $e \in \pi_{\sigma_1/8}(s,w')$ and $e \in \pi_{\sigma_2}(s,t)$ (and $\sigma_2<<\sigma_1$), the tree path between $w'$ and $t$ in $T_s$ is free from $e$ and has length at most $\sigma_1/4$. As $\dist(w',t, G \setminus \{e\})=|SD(s,t,e)[w',t']|\geq  \sigma_1/2$, we end with a contradiction. 

Next, let $w$ be the first vertex on $SD(s,t,e)$ and let $q$ be the vertex that appears just before $w$ on $P(s,t,e)$. By the uniqueness of the shortest-path, $P(s,t,e)=\pi(s,q)\circ P(s,t,e)[q,t]$. We now claim that $\dist(s,q,G)\geq \sigma_1/8-1$. Since $e \notin \pi_{\sigma_1/8}(s,w)$, it implies that $\dist(s,w,G)\geq \sigma_1/8$. Since $(q,w)$ is a non-tree edge in the BFS tree $T_s$, we conclude that $\dist(s,q,G)\geq \sigma_1/8-1$. 

Therefore the $\sigma_1/32$ suffix of $\pi(s,q)$ contains a vertex $r_1 \in R$ that is not sensitive to $e$. 
The $\sigma_1/32$ prefix of the sensitive-detour $SD(s,t,e)$ contains a vertex $r_2 \in R$ that is sensitive to $e$. 
Since the distance between $r_1,r_2$ on $P(s,t,e)$ is at most $\sigma_1/16$ and since $\dist(e,r_2,G)\geq \sigma_1/8$, we conclude that $\dist(r_1,r_2,G)=\dist(r_1,r_2,G \setminus \{e\})$.
\end{proof}
\begin{figure}[h!]
\begin{center}
\includegraphics[scale=0.30]{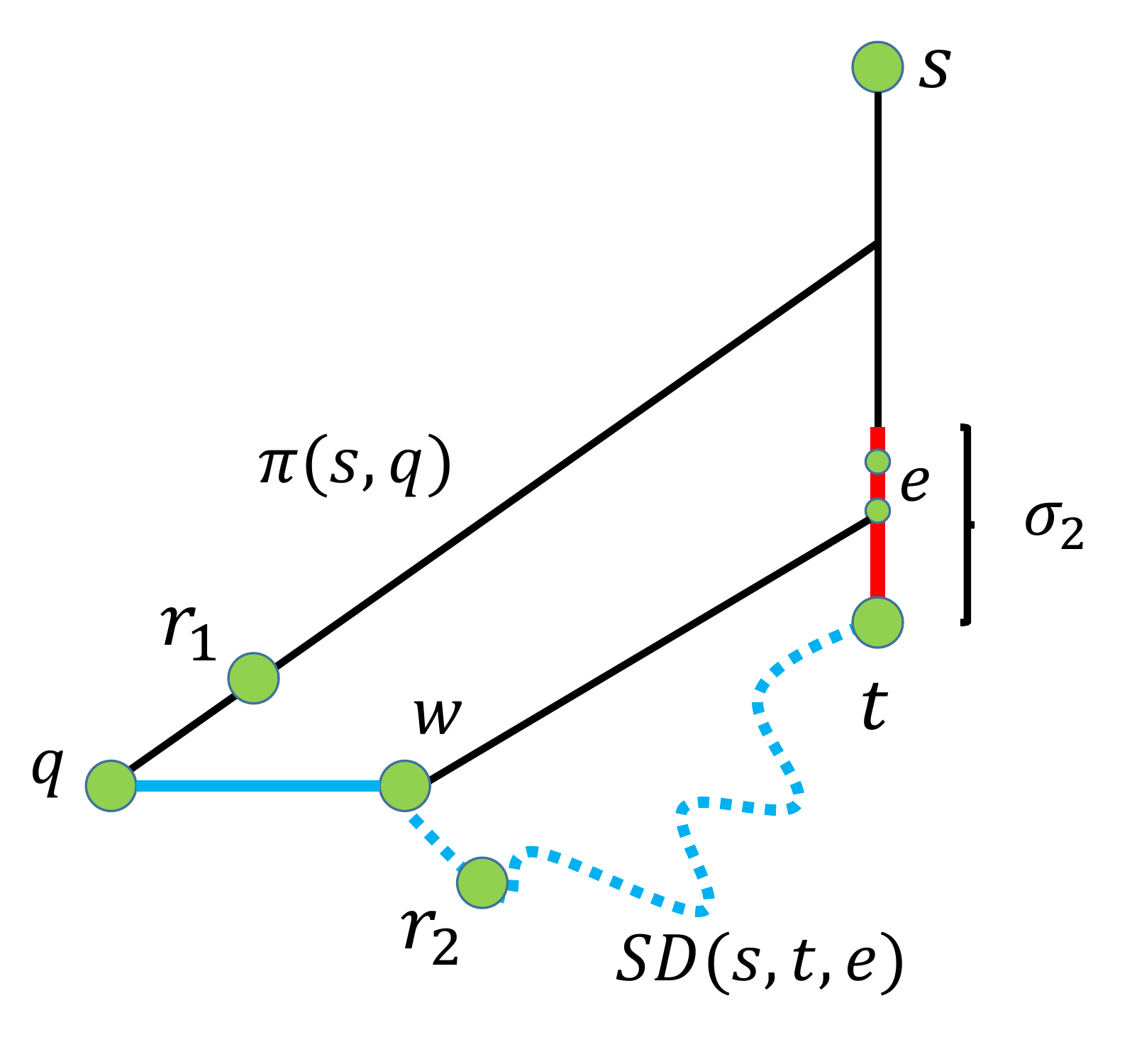}
\caption{ \label{fig:hard-path} An illustration for the proof of Claim \ref{cl:save}. Shown in an hard $P(s,t,e)$ path where $e=(x,y) \in \pi_{\sigma_2}(s,t)$. The vertex $w$ is the first vertex on the sensitive-detour, thus the entire $P(s,t,e)[w,t]$ is contained in the vertex set of $T_s(y)$, where is the subtree of $T_s$ rooted at $y$. Dashed edges correspond to the path segment $SD(s,t,e)$. Since $e$ is very close to $t$, but $e$ is somewhat far from the vertices on the prefix of the sensitive-detour, there are two vertices $r_1,r_2$ that satisfy the properties of the claim.}
\end{center}
\end{figure} 
The algorithm then lets each vertex $r$ in $R$ send to all vertices in the graph the following:
\begin{itemize}
\item The list of the distances $\dist(r,r',G)$ for every $r'$ in $R$.
\item The $\widetilde{O}(1)$-length bit LCA label of $r$ in each tree $T_s$.
\end{itemize}
Overall, the total information sent is $\widetilde{O}(|R|^2+|S|\cdot |R|)$. This can be done in 
$\widetilde{O}(|R|^2+|S|\cdot |R|+D)$ rounds by a simple pipeline.

Now every vertex $t$ is doing the following calculations for every edge $e \in \pi_{\sigma_1}(s,t)$ for which it did not receive a BFS token $\BFS(s,G \setminus \{e\})$ in the first phase of the algorithm (of handling the easy replacement paths). Using the LCAs of all vertices in $R$ with respect to $T_s$, it computes the set $R^+_{e}$ and $R^-_e$ where $R^-_{e}=\{r \in R ~\mid~ e \notin \pi(s,r)\}$ and $R^+_e=R \setminus R^-_{e}$. 
Note that $e \in \pi(s,r)$ only if the LCA of $r$ and $t$ is \emph{below} the failing edge $e$. Since $t$ has the $2\sigma_1$-length suffix of its $\pi(s,t)$ path, it can detect if the LCA is below the edge $e$. Let 
$$\dist(s,t,G \setminus \{e\})=\min_{r_1 \in R^-_{e}} ~~\min_{r_2 \in R^+_e, \dist(r_1,r_2,G)\leq \sigma_1/16} \dist(s,r_1,G)+\dist(r_1,r_2,G)+\dist(r_2,t,G)~.$$

Let $r^*_1 \in R^-_{e}$ and $r^*_2 \in R^+_{e}$ be the vertices that minimize the $\dist(s,t,G \setminus \{e\})$.
Then, the $t$ lets $P_{\sigma_2}(s,t,e)=\pi_{\sigma_2}(r^*_2,t)$. 
This completes the description of the algorithm, we now complete proof of Lemma \ref{lem:short-dist}.
\begin{proof}[Proof of Lemma \ref{lem:short-dist}]
The correctness of the easy replacement paths follows immediate. Since each vertex $t$ knows $\pi_{\sigma_1}(s,t)$ for every source $s$, it can detect the set of hard replacement paths in the following sense. 
For every pair $s,e$, for which $t$ has received the BFS token $\BFS(s,G\setminus \{e\})$, $t$ learns the segment $P_{\sigma_2}(s,t,e)$ of this path using the algorithm for the easy paths. Thus, at this point, $t$ can conclude that any pair $s,e$ for which $e \in \pi_{\sigma_1}(s,t)$ but $t$ did receive the $\BFS(s,G\setminus \{e\})$ token, it must hold that the sensitive-detour of this path is long, and thus the replacement is hard (see Case (2) of Lemma \ref{lem:main-cor}). The correctness of computing the hard RPs distances follows by Claim \ref{cl:save}.

The running time for computing the information for the easy RPs is $\widetilde{O}(D+|S|\cdot \sigma_1\cdot \sigma_2)$. It is easy to see that the total running time of for computing the information for hard RP is 
$\widetilde{O}(D+|R|^2+|S|\cdot |R|)=\widetilde{O}(D+(n/\sigma_1)^2+|S|\cdot (n/\sigma_1))$. 
\end{proof}

\subsection{Fault Tolerant Additive Spanners}

\paragraph{Single Fault.} We next show that using the \FTMBFS\ construction yields a $+2$ FT-additive spanner with 
$\widetilde{O}(n^{5/3})$ edges, which fits the state-of-the-art size of these structures (up to poly-logarithmic factors). 
We next prove Cor. \ref{cor:two-add-onef}.
\begin{proof}[Proof of Cor. \ref{cor:two-add-onef}]
Let $S=\Sample(V, 10\log n\cdot n^{-2/3})$ be a random sample of vertices. 
The subgraph $H$ consists of the following edges: edges incident to vertices of degree at most $n^{2/3}$, as well as an \FTMBFS\ subgraph $H'$ w.r.t. the sources $S$. 
This subgraph can be computed in $\widetilde{O}(D +\sqrt{n |S|})=\widetilde{O}(D+n^{5/6})$ rounds. 

To see that $H$ is an $+2$ FT-additive spanner, consider a replacement path $P(s,t,e)$ for some $s,t \in V$ and $e \in E$. Let $u$ be the first vertex (closest to $s$) on the path with a missing edge. This implies that the degree of $u$ in $G$ is at least $n^{2/3}$, and therefore w.h.p. $u$ has at least two neighbors in $S$. Thus, there exists $s' \in S \cap N(u)$ such that $(u,s) \in H \setminus \{e\}$. Since the algorithm adds the \FTBFS\ structure $H_{s'}$ w.r.t. $s'$ we have that 
\begin{eqnarray*}
\dist(s,t, H \setminus \{e\})&=&\dist(s,u, H \setminus \{e\})+\dist(u, t,H \setminus \{e\})
\\&\leq& \dist(s,u, G \setminus \{e\}) + 1+ \dist(s', t,H \setminus \{e\}) \\
& \leq & \dist(s,u, G \setminus \{e\}) + 1+ \dist(s', t,G \setminus \{e\})\\
&\leq & \dist(s,u, G \setminus \{e\}) +\dist(u, t,G \setminus \{e\})+2=\dist(s,t, G \setminus \{e\})+2~.
\end{eqnarray*}
\end{proof}

\paragraph{Two Faults.} In the same manner, the dual-failure \FTMBFS\ structures can be used to provide $+2$ dual-failure FT-additive spanners. We next prove Cor. \ref{cor:two-add-twof}. 

\begin{proof}[Proof of Cor. \ref{cor:two-add-twof}]
Let $S$ be a random subset of $O(n^{1/9}\log n)$ vertices.
A vertex $t$ is high-deg if  $\deg(t,G)\geq 10n^{8/9}$ and otherwise, it is low-deg.
It is easy to see that w.h.p. every high-deg vertex $t$ with $\deg(t,G)\geq 10n^{8/9}$ has at least \emph{three} sampled neighbors in $S$. We then connect each such vertex $t$ to three representatives neighbors in $S$. 
Next, the algorithm adds the edges incident to all the low-deg vertices. Finally, the algorithm adds the dual-failure \FTMBFS\ w.r.t. $S$. 

We next show that the final subgraph $H$ is an $+2$ FT-additive spanner. Consider a replacement path $P(u,v,\{e_1,e_2\})$ and let $t$ be the first high-deg vertex on the path, closest to $u$. W.h.p., $t$ is connected to some neighbor $s \in N(u) \cap S$ in the spanner. Since $e_1,e_2$ fail and $t$ has w.h.p. three edges to neighbors in $S$, at least one of these edges (say to $s$) survive in $H \setminus \{e_1,e_2\}$. Since $H$ contains a dual-failure \FTBFS\ structure w.r.t. $s$, we have that:

\begin{eqnarray*}
\dist(s,t, H \setminus \{e_1,e_2\})&=&\dist(u,t, G \setminus \{e_1,e_2\})+1+\dist(s, v,H \setminus \{e_1,e_2\})
\\&\leq& \dist(u,t, G \setminus \{e_1,e_2\})+1+\dist(s, v,G \setminus \{e_1,e_2\}) \\
& \leq & \dist(s,u, G \setminus \{e_1,e_2\}) +\dist(u, v,G \setminus \{e_1,e_2\})+2\\
&\leq & \dist(u,v, G \setminus \{e_1,e_2\})+2~.
\end{eqnarray*}
The round complexity is dominated by the computation of the dual-failure \FTMBFS\ structure which takes 
$\widetilde{O}(D+n^{3/4}\cdot|S|^{5/4})=\widetilde{O}(D+n^{8/9})$ rounds. The total number of edges is 
$\widetilde{O}(n^{15/8}\cdot S^{1/8})=\widetilde{O}(n^{17/9})$.
\end{proof}

\newpage
\bibliographystyle{plainurl}
\bibliography{ref}

\begin{thebibliography}{10}

\bibitem{AlonCC19}
Noga Alon, Shiri Chechik, and Sarel Cohen.
\newblock Deterministic combinatorial replacement paths and distance
  sensitivity oracles.
\newblock In {\em 46th International Colloquium on Automata, Languages, and
  Programming, {ICALP} 2019, July 9-12, 2019, Patras, Greece}, pages
  12:1--12:14, 2019.

\bibitem{bilo2015improved}
Davide Bil{\`o}, Fabrizio Grandoni, Luciano Gual{\`a}, Stefano Leucci, and
  Guido Proietti.
\newblock Improved purely additive fault-tolerant spanners.
\newblock In {\em Algorithms-ESA 2015}, pages 167--178. Springer, 2015.

\bibitem{bodwin2017preserving}
Greg Bodwin, Fabrizio Grandoni, Merav Parter, and Virginia
  Vassilevska~Williams.
\newblock Preserving distances in very faulty graphs.
\newblock In {\em 44th International Colloquium on Automata, Languages, and
  Programming (ICALP 2017)}. Schloss Dagstuhl-Leibniz-Zentrum fuer Informatik,
  2017.

\bibitem{BraunschvigCPS15}
Gilad Braunschvig, Shiri Chechik, David Peleg, and Adam Sealfon.
\newblock Fault tolerant additive and ({\(\mu\)}, {\(\alpha\)})-spanners.
\newblock {\em Theor. Comput. Sci.}, 580:94--100, 2015.

\bibitem{censor2018distributed}
Keren Censor-Hillel, Telikepalli Kavitha, Ami Paz, and Amir Yehudayoff.
\newblock Distributed construction of purely additive spanners.
\newblock {\em Distributed Computing}, 31(3):223--240, 2018.

\bibitem{ChechikC19}
Shiri Chechik and Sarel Cohen.
\newblock Near optimal algorithms for the single source replacement paths
  problem.
\newblock In {\em Proceedings of the Thirtieth Annual {ACM-SIAM} Symposium on
  Discrete Algorithms, {SODA} 2019, San Diego, California, USA, January 6-9,
  2019}, pages 2090--2109, 2019.

\bibitem{Check20}
Shiri Chechik and Ofer Magen.
\newblock Near optimal algorithm for the directed single source replacement
  paths problem.
\newblock {\em CoRR}, abs/2004.13673, 2020.

\bibitem{dory2019improved}
Michal Dory and Mohsen Ghaffari.
\newblock Improved distributed approximations for minimum-weight
  two-edge-connected spanning subgraph.
\newblock In {\em Proceedings of the 2019 ACM Symposium on Principles of
  Distributed Computing}, pages 521--530, 2019.

\bibitem{ElkinM19}
Michael Elkin and Shaked Matar.
\newblock Near-additive spanners in low polynomial deterministic {CONGEST}
  time.
\newblock In {\em Proceedings of the 2019 {ACM} Symposium on Principles of
  Distributed Computing, {PODC} 2019, Toronto, ON, Canada, July 29 - August 2,
  2019}, pages 531--540, 2019.

\bibitem{emek2010near}
Yuval Emek, David Peleg, and Liam Roditty.
\newblock A near-linear-time algorithm for computing replacement paths in
  planar directed graphs.
\newblock {\em ACM Transactions on Algorithms (TALG)}, 6(4):1--13, 2010.

\bibitem{Ghaffari15}
Mohsen Ghaffari.
\newblock Near-optimal scheduling of distributed algorithms.
\newblock In {\em Proceedings of the 2015 {ACM} Symposium on Principles of
  Distributed Computing, {PODC}}, pages 3--12, 2015.

\bibitem{GhaffariP16}
Mohsen Ghaffari and Merav Parter.
\newblock Near-optimal distributed algorithms for fault-tolerant tree
  structures.
\newblock In {\em Proceedings of the 28th {ACM} Symposium on Parallelism in
  Algorithms and Architectures, {SPAA} 2016, Asilomar State Beach/Pacific
  Grove, CA, USA, July 11-13, 2016}, pages 387--396, 2016.

\bibitem{grandoni2012improved}
Fabrizio Grandoni and Virginia~Vassilevska Williams.
\newblock Improved distance sensitivity oracles via fast single-source
  replacement paths.
\newblock In {\em 2012 IEEE 53rd Annual Symposium on Foundations of Computer
  Science}, pages 748--757. IEEE, 2012.

\bibitem{gupta2017multiple}
Manoj Gupta and Shahbaz Khan.
\newblock Multiple source dual fault tolerant bfs trees.
\newblock In {\em 44th International Colloquium on Automata, Languages, and
  Programming (ICALP 2017)}. Schloss Dagstuhl-Leibniz-Zentrum fuer Informatik,
  2017.

\bibitem{leighton1994packet}
Frank~Thomson Leighton, Bruce~M Maggs, and Satish~B Rao.
\newblock Packet routing and job-shop scheduling ino (congestion+ dilation)
  steps.
\newblock {\em Combinatorica}, 14(2):167--186, 1994.

\bibitem{nardelli2003finding}
Enrico Nardelli, Guido Proietti, and Peter Widmayer.
\newblock Finding the most vital node of a shortest path.
\newblock {\em Theoretical computer science}, 296(1):167--177, 2003.

\bibitem{nardelli1997low}
Enrico Nardelli, Ulrike Stege, and Peter Widmayer.
\newblock {\em Low-cost Fault-tolerant Spanning Graphs for Point Aets in the
  Euclidean Plane}.
\newblock 1997.

\bibitem{parter2015dual}
Merav Parter.
\newblock Dual failure resilient bfs structure.
\newblock In {\em Proceedings of the 2015 ACM Symposium on Principles of
  Distributed Computing}, pages 481--490, 2015.

\bibitem{parter2017vertex}
Merav Parter.
\newblock Vertex fault tolerant additive spanners.
\newblock {\em Distributed Computing}, 30(5):357--372, 2017.

\bibitem{ParterP16}
Merav Parter and David Peleg.
\newblock Sparse fault-tolerant {BFS} structures.
\newblock {\em {ACM} Trans. Algorithms}, 13(1):11:1--11:24, 2016.

\bibitem{Peleg:2000}
David Peleg.
\newblock {\em Distributed Computing: A Locality-sensitive Approach}.
\newblock Society for Industrial and Applied Mathematics, Philadelphia, PA,
  USA, 2000.

\bibitem{Pettie-Skeleton}
Seth Pettie.
\newblock Distributed algorithms for ultrasparse spanners and linear size
  skeletons.
\newblock In {\em the Proc.\ of the Int'l Symp.\ on Princ.\ of Dist.\ Comp.\
  (PODC)}, pages 253--262, 2008.

\bibitem{roditty2012replacement}
Liam Roditty and Uri Zwick.
\newblock Replacement paths and k simple shortest paths in unweighted directed
  graphs.
\newblock {\em ACM Transactions on Algorithms (TALG)}, 8(4):1--11, 2012.

\bibitem{schmidt1995chernoff}
Jeanette~P Schmidt, Alan Siegel, and Aravind Srinivasan.
\newblock Chernoff-hoeffding bounds for applications with limited independence.
\newblock {\em SIAM Journal on Discrete Mathematics}, 8(2):223--250, 1995.

\bibitem{weimann2013replacement}
Oren Weimann and Raphael Yuster.
\newblock Replacement paths and distance sensitivity oracles via fast matrix
  multiplication.
\newblock {\em ACM Transactions on Algorithms (TALG)}, 9(2):1--13, 2013.

\end{thebibliography}

\end{document}